\documentclass[letterpaper,journal,onecolumn,12pt]{IEEEtran}

\usepackage[utf8]{inputenc} 
\usepackage[T1]{fontenc}
\usepackage{verbatim}

\usepackage[backend=bibtex, style = numeric, doi = false, url = false, isbn = false, maxbibnames = 6]{biblatex}
\bibliography{references.bib}
\renewbibmacro{in:}{}      
\newbibmacro{string+doi}[1]{\iffieldundef{doi}{#1}{\href{https://dx.doi.org/\thefield{doi}}{#1}}}
\DeclareFieldFormat{title}{\usebibmacro{string+doi}{\mkbibemph{#1}}}
\DeclareFieldFormat[article]{title}{\usebibmacro{string+doi}{\mkbibquote{#1}}}
\DeclareFieldFormat[incollection]{title}{\usebibmacro{string+doi}{\mkbibquote{#1}}}                   
\DeclareFieldFormat[inproceedings]{title}{\usebibmacro{string+doi}{\mkbibquote{#1}}}     

\usepackage[colorlinks=true,linkcolor=blue,urlcolor=blue,citecolor=blue,anchorcolor=green,pdfusetitle]{hyperref}

\usepackage[cmex10]{amsmath}  
\usepackage{amsfonts}
\usepackage{amssymb,amsthm,mathtools,thmtools,booktabs,tabularx,float}

\usepackage{cleveref}

\usepackage{mleftright}       
\mleftright                   

\usepackage{graphicx}         
\usepackage{booktabs}         

\newtheorem{thm}{Theorem}
\newtheorem{prop}[thm]{Proposition}
\newtheorem{lem}[thm]{Lemma}
\newtheorem{cor}[thm]{Corollary}
\theoremstyle{definition}
\newtheorem{rem}[thm]{Remark}

\newtheorem{ex}[thm]{Example}

\usepackage{braket}
\usepackage{mathtools}
\usepackage{dsfont}
\usepackage{xcolor}

\newcommand{\cA}{\mathcal{A}}

\newcommand{\cD}{\mathcal{D}}
\newcommand{\cE}{\mathcal{E}}
\newcommand{\cF}{\mathcal{F}}
\newcommand{\cH}{\mathcal{H}}
\newcommand{\cN}{\mathcal{N}}

\newcommand{\bC}{\mathbb{C}}
\newcommand{\bE}{\mathbb{E}}
\newcommand{\bF}{\mathbb{F}}
\newcommand{\mc}{\mathcal}
\newcommand{\mbb}{\mathbb}

\newcommand{\tkgen}{\tilde{\kappa}_{\mathrm{gen}}}
\newcommand{\pwc}{p_{\text{wc}}}
\newcommand{\kwc}{k_{\text{wc}}}

\DeclareMathOperator{\End}{End}
\DeclareMathOperator{\mix}{mix}
\DeclareMathOperator{\Unif}{Unif}
\DeclareMathOperator{\law}{Law}
\DeclareMathOperator{\TV}{TV}
\DeclareMathOperator{\tr}{tr}
\DeclareMathOperator{\id}{id}

\newcommand{\one}{\mbb{I}}

\newcommand{\psucc}{p_{\mathrm{succ}}}

\newcommand{\kgen}{\kappa_{\mathrm{gen}}}

\newcommand{\fl}[1]{{\color{red}\noindent\textbf{Comment (FL):} #1}}

\definecolor{cool_green}{rgb}{0.0, 0.5, 0.0}

\usepackage[affil-it]{authblk}

\usepackage{amsthm}

\makeatletter
\let\originalrestatable\restatable
\renewcommand{\restatable}[3][]{%
  \originalrestatable[name={%
    #1%
    \ifthmt@thisistheone\else
      \if\relax\detokenize{#1}\relax\else; \fi
      restated from Section~\ref{sec:main-results}%
    \fi
  }]{#2}{#3}%
}
\makeatother

\title{A mixing time method for estimating the sample complexity of quantum state discrimination}
\author[1]{Juntai Zhou}
\author[1,2]{Felix Leditzky}
\affil[1]{Department of Mathematics, University of Illinois Urbana-Champaign}
\affil[2]{Illinois Quantum Information Science and Technology (IQUIST) Center, University of Illinois Urbana-Champaign}
\begin{document}
\maketitle

\begin{abstract}
    We develop a mixing time method for estimating the sample complexity of quantum state discrimination. We start with considering the minimum-error discrimination of geometrically uniform pure state ensembles, and prove that its sample complexity has a tight estimate given by a quantum homogeneous mixing time [George et al., 2026] and a quantum version of the generalized Dobrushin coefficient [Wolfer, 2020]. This quantum mixing time further reduces to a classical one when the generating group $G$ forms a Gelfand pair with the stabilizer subgroup $H$ of the generator state.
    In this case the generalized Dobrushin coefficient can be fully expressed by representation-theoretic quantities of the commutative Hecke algebra $\End_G(\bC[G/H])$. In particular, this method reduces the sample complexity estimation of learning quantum coupon collector states [Arunachalam et al., 2020] and learning phase states to classical mixing time problems. We apply this framework to answer the open problems of learning degree-$d$ phase states over $\bF_q$ in [Alrabiah et al., 2026] and generalized Boolean phase states over $\mathbb Z_q$ [Arunachalam et al., 2023]. The framework also applies to hypergraph state ensembles, giving estimates expressed fully in terms of hypergraph data and recovering estimates for graph state ensembles in [Montanaro and Shao, 2022]. Finally, we extend the discussion to arbitrary mixed state ensembles with uniform priors, prove a sandwiched bound for minimum-error discrimination sample complexity by a quantum weakly mixing time, and provide a tight estimate for the minimax discrimination sample complexity from [D’Ariano et al., 2005] by a Dobrushin-type coefficient. We also discuss the method of strengthened data processing inequality [Gao and Rouz{\'e}, 2022] and give an upper bound in terms of a strengthened data processing inequality constant.
\end{abstract}

\section{Introduction}
The problem of minimum-error quantum state discrimination seeks a measurement that 
identifies with high probability one of $n$ quantum states, each prepared with some prior probability \cite{bae2015quantum,barnett2009quantum}. Its sample complexity is the least number of i.i.d.~copies required to achieve success probability of at least $1-\epsilon$,
\begin{align}
    k_{\min}(\epsilon)\coloneqq \min\{k:\psucc^*(k)\geq1-\epsilon\},
\end{align}
where $\psucc^*(k)$ is the optimal success probability of discriminating the $k$-copy ensemble. This formulation occurs throughout quantum information theory, quantum learning theory and quantum algorithms: For example, 
the fidelity of teleportation protocols is proportional to the success probability of an associated quantum state discrimination problem \cite{chitambar2024teleportation,ishizaka2008asymptotic};
optimal
adaptive discrimination of a finite family of jointly
teleportation-covariant channels is equivalent to
discriminating tensor powers of the corresponding Choi states \cite{ZhuangPirandola2020}; the hidden subgroup and generalized hidden shift problems lead to discrimination tasks for group-structured quantum states \cite{BaconChildsVanDam2006, childs2007quantum, zhou2026sample}; learning phase states and identifying graph states likewise require distinguishing large, highly structured families \cite{Arunachalam2020QuantumCouponCollector,arunachalam_et_al:LIPIcs.TQC.2023.3,MontanaroShao2022HiddenGraph}; and discrimination between finite collections of separated states also underlies information-theoretic lower bounds for tomography~\cite{HaahEtAl2017}. Estimating $k_{\min}$ therefore characterizes the sample complexity (or query complexity) of these problems.

The Barnum--Knill bound \cite{BarnumKnill2002} gives a general upper bound for the sample complexity $k_{\min}$ in terms of pairwise fidelities; in particular, it implies an unconditional $O(\log n)$-copy upper bound at fixed target error provided that the largest pairwise fidelity is uniformly bounded away from one~\cite{harrow2012copies,Montanaro2019PrettySimpleBounds}. Such bounds are useful and broadly applicable, but may be far from the true complexity because they can fail to capture the collective geometry that makes a large ensemble distinguishable with substantially fewer copies. Obtaining tight estimates for structured families often still requires a problem-specific analysis of the measurement~\cite{alrabiah2026nolowdegree,Arunachalam2020QuantumCouponCollector,arunachalam_et_al:LIPIcs.TQC.2023.3,BaconChildsVanDam2006,childs2007quantum,MontanaroShao2022HiddenGraph,zhou2026sample}.

The difficulty in systematically estimating the sample-complexity scaling is not due to insufficient information about discrimination. For any pure state ensemble, the optimal success probability of discrimination can be written as an SDP given the Gram matrix \cite[eq.~(21)]{mohan2026hybrid}. Since the Gram matrix of $k$-copy states is simply $X^{\circ k}$ where $\circ$ denotes the Hadamard product, $X$ contains sufficient information to determine the sample complexity of discrimination. What remains is to organize its dependence on the number of copies $k$ into a form that can be estimated without solving each growing measurement problem separately. Developing such a systematic route beyond coarse pairwise-overlap estimates and case-by-case analysis is the central motivation and main contribution of this article.

\section{Overview of main results}
\label{sec:main-results}

In this article, we estimate the sample complexity of quantum state discrimination by relating it to a mixing time problem. The central results are \Cref{theorem: GU Dobrushin estimate}, which gives a tight estimate for the discrimination sample
complexity in terms of a quantum mixing time and a Dobrushin-type coefficient, and \Cref{theorem: classical random walk on dual hypergroup}, which reduces the corresponding quantum mixing time to a classical one. The rest of the main results are extensions of \Cref{theorem: GU Dobrushin estimate} and applications of \Cref{theorem: classical random walk on dual hypergroup}.

The following subsections present a more detailed overview of our main results. Full proofs can be found in the subsequent \Cref{section: GU ensemble,section: phase state,section: generic non-GU ensembles}.
We also provide a summary of these results in \Cref{tab:summary-of-results}.

\subsection{From Gram matrix to mixing times}
\label{sec:gram-to-mixing-time}

The mixing time of a Markov random walk is the time it takes for the walk’s probability distribution to become close to its stationary distribution \cite{LevinPeres2017MarkovChains}. Its quantum analogue replaces probability distributions by density operators and Markov kernels by quantum channels~\cite{George2026QuantumDoeblinCoefficients, HiaiRuskai2016}. Our starting observation is that taking additional copies of geometrically uniform (GU) pure states has a simple mixing-time interpretation: For these ensembles, the `pretty good measurement'' (PGM; defined in \eqref{eq:pretty-good-measurement} below) \cite{belavkin1975optimal,hausladen1994pretty,holevo1979asymptotically} is optimal  with $k$-copy success probability \cite{Eldar2004OptimalDetection,zhou2025distinguishability}
\begin{equation}
  p_{\mathrm{succ}}^*(k)
  =\left(\frac{1}{|G|}\tr\sqrt{X_k}\right)^2=F\!\left(\Phi_X^{\,k}(\tau),\omega\right),
\end{equation}
where $F(\rho,\sigma)$ is the squared fidelity, $\Phi_X$ is the Hadamard channel of Gram matrix $X$, $\tau=\frac{1}{|G|}\mathbb J$ is the normalized state of the all-one matrix $\mathbb J$, and $\omega=\frac{1}{|G|}\one$ is the completely mixed state on $\bC^{|G|}$. The sample complexity therefore coincides with the fidelity-distance mixing time of the channel $\Phi_X$ for initial state $\tau$ and stationary state $\omega$. Standard comparisons between fidelity and trace distance then give two-sided bounds in terms of the trace-norm mixing time \begin{align}
    t_{\min}(\epsilon)\coloneqq\min\{t:\|\Phi_X^t(\tau)-\omega\|_1\leq\epsilon\}.
\end{align}
This can be further extended to arbitrary uniform mixed state ensemble $\{\rho_i=S_iS_i^\dagger\}_{i=1}^{n}$. We define
\begin{align}
    t_{\min}(\epsilon)\coloneqq \min\{t:\tfrac{1}{n}\|X_t-\Delta_t(X_t)\|_1\leq\epsilon\}
\end{align}
where $X_t\coloneqq \begin{bmatrix}(S_i^\dagger S_j)^{\otimes t}\end{bmatrix}_{i,j=1}^{n}$ and $\Delta_t(X_t)$ is the block pinching of $X_t$, and we prove the following sandwiched bound for the sample complexity:
\begin{restatable}{thm}{MixedStateSandwich}\label{theorem: sandwiched bound for sample complexity}
    $t_{\min}$ gives a sandwiched bound for the sample complexity of state discrimination:
    \begin{align} \label{eq: sandwiched bound for sample complexity}
        t_{\min}(4\sqrt{\epsilon})\leq k_{\min}(\epsilon)\leq t_{\min}(\epsilon).
    \end{align}
\end{restatable}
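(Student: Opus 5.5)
The plan is to identify the $k$-copy success probability of the pretty good measurement with a Hilbert--Schmidt quantity built from the normalized block Gram matrix $G_k\coloneqq \tfrac1n X_k$. I will compare that quantity with the trace-norm quantity $\|G_k-\Delta_k(G_k)\|_1$ that defines $t_{\min}$. Neither direction needs monotonicity in $k$, since both inequalities in \eqref{eq: sandwiched bound for sample complexity} follow pointwise in $k$.

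\textbf{Setup.} Put $A_k\coloneqq \tfrac{1}{\sqrt n}\sum_i \langle i|\otimes S_i^{\otimes k}$. Then $A_kA_k^\dagger=\tfrac1n\sum_i\rho_i^{\otimes k}$ is the average state and $A_k^\dagger A_k=G_k$. By the standard polar-decomposition computation, the PGM for the ensemble $\{\tfrac1n,\rho_i^{\otimes k}\}$ has success probability
\[
p_{\mathrm{PGM}}(k)=\sum_i \big\|(\sqrt{G_k})_{ii}\big\|_2^2=\big\|\Delta_k(\sqrt{G_k})\big\|_2^2 .
\]
Since $\|\sqrt{G_k}\|_2^2=\tr G_k=1$, we get $1-p_{\mathrm{PGM}}(k)=\|C\|_2^2$. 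Here we write $\sqrt{G_k}=B+C$, with $B=\Delta_k(\sqrt{G_k})$ block diagonal and $C$ block off-diagonal.

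\textbf{Upper bound $k_{\min}(\epsilon)\le t_{\min}(\epsilon)$.} Take $k=t_{\min}(\epsilon)$ and $D=\Delta_k(G_k)\ge0$, which is block diagonal, so $\sqrt D$ is block diagonal too. The pinching $\Delta_k$ is the Hilbert--Schmidt orthogonal projection onto block-diagonal matrices. Combining this with the Powers--Størmer inequality gives
\[
\|C\|_2^2=\|\sqrt{G_k}-\Delta_k(\sqrt{G_k})\|_2^2\le\|\sqrt{G_k}-\sqrt D\|_2^2\le\|G_k-D\|_1\le\epsilon .
\]
Hence $\psucc^*(k)\ge p_{\mathrm{PGM}}(k)\ge1-\epsilon$.

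\textbf{Lower bound $t_{\min}(4\sqrt\epsilon)\le k_{\min}(\epsilon)$.} Take $k=k_{\min}(\epsilon)$. The Barnum--Knill bound $p_{\mathrm{PGM}}\ge(\psucc^*)^2$ gives $c\coloneqq\|C\|_2^2\le 1-(1-\epsilon)^2\le2\epsilon$. Expand
\[
G_k=B^2+BC+CB+C^2 .
\]
Because $B$ is block diagonal and $C$ block off-diagonal, $BC+CB$ is block off-diagonal, so $\Delta_k(G_k)=B^2+\Delta_k(C^2)$. It follows that
\[
G_k-\Delta_k(G_k)=BC+CB+\big(C^2-\Delta_k(C^2)\big).
\]
We bound the pieces separately:
\begin{itemize}
\item By Hölder, $\|BC+CB\|_1\le2\|B\|_2\|C\|_2=2\sqrt{c(1-c)}$.
\item Since $C^2\ge0$ and pinching is trace preserving, $\|C^2-\Delta_k(C^2)\|_1\le 2\tr C^2=2c$.
\end{itemize}
So $\|G_k-\Delta_k(G_k)\|_1\le 2\sqrt{c(1-c)}+2c$, and this bound is increasing in $c$ on the relevant range.

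\textbf{Main obstacle.} The delicate step is closing the numerics: showing $2\sqrt{2\epsilon(1-2\epsilon)}+4\epsilon\le4\sqrt\epsilon$. For $\epsilon\ge1/4$ the claim is trivial, since the trace norm is at most $2\le4\sqrt\epsilon$. For $\epsilon<1/4$, divide by $\sqrt\epsilon$; it remains to check $2\sqrt2\sqrt{1-2\epsilon}+4\sqrt\epsilon\le4$, which I would do by an elementary one-variable maximization. If that proves too tight, my fallback is the sharper constraint $c\le2\epsilon-\epsilon^2$, or a better estimate of the cross term $BC+CB$, using that only its off-diagonal part matters.
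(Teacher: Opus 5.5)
Your proof is correct, and the upper bound is the same as the paper's. The paper uses part (1) of Lemma~\ref{lemma: block Gram matrix lemma} followed by Powers--St{\o}rmer. Your Hilbert--Schmidt-projection property of the pinching is the inequality the paper checks by expanding blocks.

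The lower bound takes a genuinely different route. The paper first proves $\|\sqrt{G_k}-\sqrt{\Delta_k(G_k)}\|_2^2\le 2c$ in your normalization. This uses $\sqrt{(G_k)_{ii}}\ge(\sqrt{G_k})_{ii}$, which relies on operator monotonicity of the square root. It then factors $G_k-\Delta_k(G_k)=(\sqrt{G_k}-\sqrt{\Delta_k(G_k)})\sqrt{G_k}+\sqrt{\Delta_k(G_k)}(\sqrt{G_k}-\sqrt{\Delta_k(G_k)})$ and applies H\"older to get $\|G_k-\Delta_k(G_k)\|_1\le 2\sqrt{2c}$. With $c\le 2\epsilon$ this gives exactly $4\sqrt{\epsilon}$, with no casework.

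Your expansion of $(B+C)^2$ avoids the operator-monotonicity step entirely. Your bound is also never worse than the paper's, because by Cauchy--Schwarz $\sqrt{c(1-c)}+c=\sqrt{c}\,(\sqrt{1-c}+\sqrt{c})\le\sqrt{2c}$. That one line disposes of your ``main obstacle'': $2\sqrt{c(1-c)}+2c\le 2\sqrt{2c}\le 4\sqrt{\epsilon}$ for every $\epsilon$. You then need neither the case split at $\epsilon=\tfrac14$ nor monotonicity in $c$.

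Your planned calculus would also have worked, so no fallback was needed. The function $2\sqrt{2}\sqrt{1-2\epsilon}+4\sqrt{\epsilon}$ is increasing on $[0,\tfrac14]$ and equals $4$ at $\epsilon=\tfrac14$, so the inequality holds and is tight exactly there.

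The two routes buy different things. The paper's yields a reusable lemma: $\|\sqrt{G}-\Delta(\sqrt{G})\|_2^2$ and $\|\sqrt{G}-\sqrt{\Delta(G)}\|_2^2$ agree up to a factor $2$. Yours is more elementary and has a better leading constant, $2\sqrt{2\epsilon}+O(\epsilon)$ instead of $4\sqrt{\epsilon}$.
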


Let $H\coloneqq \{h\in G:U_h\ket{\psi}=\ket{\psi}\}$ be the stabilizer subgroup of the generator state $\ket{\psi}$, then the states are indistinguishable within the same coset $gH$, so the meaningful question reduces to discriminating the states $\ket{\psi_{gH}}$ labeled by $G/H$. A technical issue is that the Hadamard channel fixes every diagonal matrix, so it does not mix to $\omega$ from every state on the full matrix algebra; to resolve this obstruction, we observe that the Gram matrix $X$ lives in the Hecke algebra $\cA=\End_G(\bC[G/H])$ (see \Cref{sec:main-results-dual-hypergroup}) and restrict its Hadamard channel to $\cA$. Within this algebra, we show that the fixed initial state $\tau$ is a worst-case starting state for convergence to $\omega$. This identifies the sample complexity with the worst-case mixing time 
\begin{align}
    t_{\mix}(T,\omega,\epsilon)\coloneqq\min \left\{t:\sup\nolimits_{\rho\in\cD(\cA)}\|T^t(\rho)-\omega\|_1\leq\epsilon \right\}
\end{align}
to which contraction coefficients naturally apply.

\subsection{Contraction coefficients and the mixing scale}
\label{sec:contraction-coefficients}

The classical Dobrushin coefficient $\kappa(P)$ of Markov kernel $P$ is the maximal fraction of total variation distance preserved after one transition \cite{dobrushin1956central1,dobrushin1956central2}. For a stationary distribution $\pi$, it gives the contraction property $d_{\TV}(\mu P^t,\pi)\leq\kappa(P)^td_{\TV}(\mu,\pi)$, ensuring geometric convergence to stationarity when $\kappa(P)<1$. However, an ergodic chain may have $\kappa(P)=1$, in which case this single-step coefficient fails to capture contraction despite eventual convergence at later times. Wolfer defines the generalized Dobrushin coefficient \cite{wolfer2020mixing}
\begin{equation}
  \kgen(P)
  =1-\sup_{s\ge1}\frac{1-\kappa(P^s)}{s}
  \label{eq:generalized-coefficient}
\end{equation}
where $1-\kgen(P)$ captures the best per-step contraction over multistep blocks, and proves that its reciprocal characterizes the mixing time up to constants at fixed accuracy.

For a quantum channel $T$, the analogous Dobrushin coefficient is defined in \cite{HiaiRuskai2016}. In this article, we extend Wolfer's definition to quantum channels by replacing Markov kernel $P$ with quantum channel $T$ in~\eqref{eq:generalized-coefficient}, and show that it characterizes the asymptotic scale of the mixing time:
\begin{restatable}{thm}{DobrushinTight} \label{theorem: quantum Dobrushin estimate}
    Let $\epsilon\in(0,\frac{1}{2})$. If $\kgen(T)<1$, then
    \begin{align}
        \frac{1-\epsilon}{1-\kgen(T)}\leq t_{\mix}(T,\omega,\epsilon)\leq\frac{1+\log(\frac{2}{\epsilon})}{1-\kgen(T)}.
    \end{align}
\end{restatable}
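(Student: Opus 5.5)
Writing $\kappa(T)\coloneqq\sup_{\rho\neq\sigma}\|T(\rho)-T(\sigma)\|_1/\|\rho-\sigma\|_1$ for the trace-norm Dobrushin coefficient (over states in $\cD(\cA)$), I will use only two properties. First, submultiplicativity, $\kappa(T^{a}T^{b})\le\kappa(T^a)\kappa(T^b)$. Second, the identity $\kappa(T^s)=\tfrac12\sup_{\rho,\sigma}\|T^s(\rho)-T^s(\sigma)\|_1$, where the supremum may be restricted to orthogonal (or pure) pairs. I also use that $\omega$ is a fixed point of $T$, since it is the stationary state. Set $\gamma\coloneqq 1-\kgen(T)=\sup_{s\ge1}(1-\kappa(T^s))/s>0$. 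The proof splits into the upper bound and the lower bound.

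\emph{Upper bound.} Fix $\delta>0$ and choose $s$ with $(1-\kappa(T^s))/s\ge\gamma-\delta$; write $\kappa_s=\kappa(T^s)\le 1-s(\gamma-\delta)$. For $t=ms+r$ with $0\le r<s$ and any $\rho$, we have
\begin{align}
\|T^t(\rho)-\omega\|_1=\|T^{ms}(T^r\rho)-T^{ms}(\omega)\|_1\le 2\kappa_s^m\le 2e^{-ms(\gamma-\delta)} .
\end{align}
It suffices that $ms(\gamma-\delta)\ge\log(2/\epsilon)$. Take $m=\lceil\log(2/\epsilon)/(s(\gamma-\delta))\rceil$; then $t\le ms\le \log(2/\epsilon)/(\gamma-\delta)+s$. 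The leftover $s$ is the one thing to control. From $0\le\kappa_s$ we get $s(\gamma-\delta)\le 1$, so $s\le 1/(\gamma-\delta)$. This gives $t_{\mix}\le (1+\log(2/\epsilon))/(\gamma-\delta)$. Letting $\delta\to0$ and using that $t_{\mix}$ is an integer (so the bound for every $\delta$ passes to the limit) yields the claim. If the supremum defining $\gamma$ is attained, take $\delta=0$ directly.

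\emph{Lower bound.} Let $t=t_{\mix}(T,\omega,\epsilon)$. For any states $\rho,\sigma$, the triangle inequality through $\omega$ gives $\|T^t\rho-T^t\sigma\|_1\le2\epsilon$, hence $\kappa(T^t)\le\epsilon$ by the second property above. Plugging $s=t$ into the definition of $\gamma$ gives $\gamma\ge(1-\kappa(T^t))/t\ge(1-\epsilon)/t$, that is, $t\ge(1-\epsilon)/\gamma$. Note that $\epsilon<\tfrac12$ is not even needed here; it only guarantees the bounds are nontrivial.

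\emph{Main obstacle.} The delicate step is the identity $\kappa(T)=\tfrac12\sup_{\rho,\sigma\in\cD}\|T\rho-T\sigma\|_1$ in the quantum setting, restricted to states of the algebra $\cA$ and compatible with the definition in \cite{HiaiRuskai2016}. I would prove it via the Jordan decomposition $\rho-\sigma=\|\rho-\sigma\|_1(P-N)/2$ with $P,N$ states. This requires $P,N\in\cA$, which holds because $\cA$ is a $*$-algebra, so the positive and negative parts of a self-adjoint element remain in $\cA$. I would also verify that $T$ maps $\cD(\cA)$ into itself and fixes $\omega$; these are standing assumptions from the setup of $t_{\mix}$.
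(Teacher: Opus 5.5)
Your proposal is correct and follows the paper's proof essentially step for step. The upper bound iterates the $s$-step contraction $\kappa(T^s)^m\le e^{-m(1-\kappa(T^s))}$ with $m=\lceil\log(2/\epsilon)/(1-\kappa(T^s))\rceil$ and absorbs the extra $s$ using $s\le 1/(1-\kgen(T))$; the lower bound uses the triangle inequality through $\omega$ to get $\kappa(T^{t})\le\epsilon$ at $t=t_{\mix}$ and plugs $s=t$ into the definition of $\kgen$. The differences are cosmetic: the paper takes an $s$ attaining the supremum, which exists since $(1-\kappa(T^s))/s\le 1/s\to0$ while the supremum is positive, so your $\delta$-approximation is unnecessary though harmless, and you additionally justify, via the Jordan decomposition inside $\cA$, the equivalent form of $\kappa$ that the paper quotes from Hiai--Ruskai.
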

Our main result applies this bound to $T=\Phi_X\big|_\cA$ and gives a tight estimate for the discrimination sample complexity in terms of $t_{\min}$ and the Dobrushin-type coefficient:

\begin{restatable}{thm}{GUDobrushin}
\label{theorem: GU Dobrushin estimate}
    Given a GU ensemble generated by $G$ and $\ket{\psi}$, let $H$ be the stabilizer subgroup of $\ket{\psi}$. If $\max_{g\notin H}|\bra{\psi}U_g\ket{\psi}|<1$, then the quantum mixing time $t_{\min}$ and the generalized quantum Dobrushin coefficient $\kgen$ both give a tight estimate for the discrimination sample complexity of the states $\ket{\psi_{gH}}$:
    \begin{align}
        k_{\min}=\Theta_\epsilon(t_{\min})=\Theta_\epsilon\bigg(\frac{1}{1-\kgen(\Phi_X\big|_\cA)}\bigg)
    \end{align}
    for $\epsilon\in(0,\frac{1}{2})$ where $\cA=\End_G(\bC[G/H])$ is the Hecke algebra.
\end{restatable}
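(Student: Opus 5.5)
We prove the two estimates in turn, reducing the first to a fidelity–trace-distance comparison and the second to \Cref{theorem: quantum Dobrushin estimate} applied to $T=\Phi_X\big|_\cA$.

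\textbf{Step 1: reduce to the quotient ensemble.} We pass to the states $\ket{\psi_{gH}}$ labelled by $G/H$. This is again a GU ensemble under the action of $G$, with Gram matrix $X\in\cA=\End_G(\bC[G/H])$ and entries $X_{gH,g'H}=\bra{\psi}U_{g^{-1}g'}\ket{\psi}$. The $k$-copy Gram matrix is $X^{\circ k}$, and the PGM formula gives
\begin{align}
p_{\mathrm{succ}}^*(k)=F\big(\Phi_X^k(\tau),\omega\big),
\end{align}
where $\tau=\frac{1}{|G/H|}\mathbb J$ and $\omega=\frac{1}{|G/H|}\one$.

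\textbf{Step 2: $k_{\min}=\Theta_\epsilon(t_{\min})$.} We compare fidelity with trace distance via Fuchs--van de Graaf:
\begin{align}
1-\sqrt{F}\le \tfrac12\|\rho-\sigma\|_1\le\sqrt{1-F}.
\end{align}
If $\|\Phi_X^t(\tau)-\omega\|_1\le\epsilon$, then $F\ge(1-\epsilon/2)^2\ge 1-\epsilon$, so $k_{\min}(\epsilon)\le t_{\min}(\epsilon)$. Conversely, $F\ge1-\epsilon$ implies $\|\Phi_X^k(\tau)-\omega\|_1\le2\sqrt\epsilon$, so $t_{\min}(2\sqrt\epsilon)\le k_{\min}(\epsilon)$. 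This is also the specialization of \Cref{theorem: sandwiched bound for sample complexity}.

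To upgrade the sandwich to a $\Theta_\epsilon$ statement at a fixed $\epsilon$, we need $t_{\min}(\epsilon)$ and $t_{\min}(2\sqrt\epsilon)$ to be comparable up to constants depending on $\epsilon$. This comparability will come out of Step 3. The upper bound of \Cref{theorem: quantum Dobrushin estimate}, read at accuracies $\epsilon$ and $2\sqrt\epsilon$, shows both times are $O_\epsilon\big(1/(1-\kgen)\big)$. The matching lower bound $t_{\min}(\delta)\ge c_\delta/(1-\kgen)$ holds for $\delta<1$ (so $\delta=2\sqrt\epsilon$ is covered when $\epsilon<1/4$), because $\tau$ is a worst-case starting state within $\cA$. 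For $\epsilon\in[1/4,1/2)$ we shrink the accuracy constants; this changes only the $\epsilon$-dependence of the hidden constants.

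\textbf{Step 3: $t_{\min}=\Theta_\epsilon\big(1/(1-\kgen(\Phi_X|_\cA))\big)$.} This is the main obstacle. It has two parts.

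\emph{(a) $\Phi_X$ preserves $\cA$ and mixes there.} $\cA$ is closed under Hadamard products, since its elements are exactly the matrices constant on $G$-orbits of pairs $(gH,g'H)$, which is a Schur-closed algebra. Every state in $\cA$ has constant diagonal $1/|G/H|$. The orbit of the diagonal pairs is the trivial double coset $HeH$, where $X=1$. On every other orbit $|X|<1$ by the hypothesis $\max_{g\notin H}|\bra{\psi}U_g\ket{\psi}|<1$. Hence $\Phi_X^t(\rho)\to\omega$ for every $\rho\in\cD(\cA)$. Moreover a single step of $\Phi_X$, or a block of some $s$ steps, strictly contracts, so $\kgen<1$ and \Cref{theorem: quantum Dobrushin estimate} applies.

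\emph{(b) $\tau$ is a worst case, up to constants.} The plan is to expand $\rho\in\cD(\cA)$ in the orbit basis $\{A_O\}$ of $0/1$ orbit matrices, $\rho=\sum_O c_O A_O$, so that $\Phi_X^t(\rho)-\omega=\sum_{O\neq HeH}c_O x_O^t A_O$. Positivity together with the unit-trace normalization should force $|c_O|\le 1/|G/H|$, which is exactly the coefficient of $\tau$. Dominating $\|\sum_O c_O x_O^tA_O\|_1$ by the corresponding quantity for $\tau$ is the delicate point: the trace norm is not monotone in entrywise magnitudes. I would try to work in the Fourier basis of $\cA$, i.e.\ its block decomposition into irreducible representations, with the triangle inequality to bound a general $\rho$ by $\tau$. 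If that fails, I would fall back on a weaker statement: the worst-case error is at most a constant multiple of the error from $\tau$ at the same time, which still suffices after adjusting the accuracy.

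With (b) in hand, $t_{\min}(\epsilon)$ and $t_{\mix}(\Phi_X|_\cA,\omega,\epsilon)$ agree up to $\epsilon$-dependent constants. \Cref{theorem: quantum Dobrushin estimate} then gives $t_{\mix}=\Theta_\epsilon\big(1/(1-\kgen)\big)$. Combined with Step 2, this yields $k_{\min}=\Theta_\epsilon(t_{\min})=\Theta_\epsilon\big(1/(1-\kgen(\Phi_X|_\cA))\big)$.
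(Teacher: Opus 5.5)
Your outline has two genuine gaps.

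\textbf{First gap: Step~3(b) is not proved.} Every lower bound in your argument depends on it, including $t_{\min}(\delta)\ge c_\delta/(1-\kgen)$ and hence the lower bounds on both $t_{\min}$ and $k_{\min}$. The reason is that \Cref{theorem: quantum Dobrushin estimate} only lower-bounds the worst-case time $t_{\mix}$, whereas $t_{\min}$ uses the fixed start $\tau$. None of your three attempts closes this:
\begin{itemize}
\item As you note yourself, $|c_O|\le 1/n$ says nothing about trace norms.
\item The irrep/Fourier route only turns $\Phi_X|_\cA$ into a Markov (hypergroup convolution) operator when $\cA$ is commutative, i.e.\ for Gelfand pairs. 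The theorem does not assume this.
\item The ``constant multiple'' fallback is unsupported. Even if it held with some constant $C$, it would shrink the range your Step~2 can handle to $\epsilon<1/(4C^2)$.
\end{itemize}
The missing idea is to factor through a second Hadamard channel. Every $\rho\in\cD(\cA)$ has diagonal identically $1/n$, so $Y=n\rho$ is positive semidefinite with unit diagonal. By the Schur product theorem, $\Phi_Y$ is a unital channel with $\Phi_Y(\omega_n)=\omega_n$. Since $\rho=Y\circ\tau_n$, you get $\Phi_X^t(\rho)=\Phi_Y(\Phi_X^t(\tau_n))$. Contractivity of the trace norm under $\Phi_Y$ then gives $\|\Phi_X^t(\rho)-\omega_n\|_1\le\|\Phi_X^t(\tau_n)-\omega_n\|_1$ exactly, with no constant loss and for every GU ensemble. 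Hence $t_{\min}(\delta)=t_{\mix}(\Phi_X|_\cA,\omega_n,\delta)$.

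\textbf{Second gap: the range $\epsilon\in[1/4,1/2)$ is not handled.} There $2\sqrt\epsilon\ge1$, so the Dobrushin lower bound, which needs accuracy below $1$, says nothing about $t_{\min}(2\sqrt\epsilon)$. ``Shrinking the accuracy constants'' goes the wrong way: for $\epsilon'<\epsilon$ one has $k_{\min}(\epsilon)\le k_{\min}(\epsilon')$, so a lower bound at $\epsilon'$ does not transfer to $\epsilon$. What you need is an amplification bound $k_{\min}(\epsilon')\le C_{\epsilon,\epsilon'}\,k_{\min}(\epsilon)$. This fails for general minimum-error discrimination, as the example \eqref{eq: min-error simple counterexample} shows.

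The paper obtains amplification from the group symmetry. Covariantizing any POVM via $\overline{\Pi}_{gH}=\frac{1}{|G|}\sum_{k}U_k\Pi_{k^{-1}gH}U_k^\dagger$ equalizes all conditional success probabilities without changing the average, so $k_{\min}=\kwc$ for GU ensembles. \Cref{proposition: worst-case sample complexity equivalence} (majority vote over independent blocks) then supplies the amplification. This is also how the paper turns the fidelity/trace-distance sandwich into $k_{\min}=\Theta_\epsilon(t_{\min})$.

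\textbf{What works.} With these two ingredients added, your outline is sound. For $\epsilon<1/4$, reading the two-sided Dobrushin bound at accuracies $\epsilon$ and $2\sqrt\epsilon$ is a legitimate alternative to boosting. Step~3(a) is also fine; for instance, $\kappa(T^s)\le\sqrt{n}\,\mu^s$ with $\mu=\max_{g\notin H}|\bra{\psi}U_g\ket{\psi}|$.
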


The discussion can be further extended to arbitrary mixed state ensembles without symmetry assumptions, where unconditional tight estimate can be obtained for the worst-case (a.k.a.~minimax) discrimination \cite{DAriano2005Minimax,Montanaro2019PrettySimpleBounds} sample complexity. We first prove in \Cref{proposition: worst-case sample complexity equivalence} that this sample complexity is stable under constant changes in the error threshold.
This enables the following tight estimate by the Dobrushin-type coefficient:
\begin{restatable}{thm}{WorstCaseDiscrimination} \label{theorem: tightness for worst-case discrimination}
    Let $\kwc(\epsilon)$ denote the worst-case discrimination sample complexity, then
    \begin{align}
        \kwc=\Theta_\epsilon\left(\frac{1}{1-\tkgen}\right)\quad\forall\:\epsilon\in(0,\frac{1}{2}).
    \end{align}
    In particular,
    \begin{align} \label{eq: relations for general ensembles}
        t_{\min}(4\sqrt{\epsilon})\leq k_{\min}(\epsilon)\leq t_{\min}(\epsilon)\leq O_\epsilon(\kwc)=\Theta_\epsilon\left(\frac{1}{1-\tkgen}\right).
    \end{align}
\end{restatable}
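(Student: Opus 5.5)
We need to propose a proof of \Cref{theorem: tightness for worst-case discrimination}. The definitions of $\kwc$ and $\tkgen$ are not given explicitly in the excerpt, so we fix natural ones. Let $\kwc(\epsilon)$ be the least $k$ for which there is a measurement $\{M_i\}$ with $\min_i \tr(M_i\rho_i^{\otimes k})\ge 1-\epsilon$. Let $\tkgen = 1-\sup_{s\ge 1}\frac{1-\tilde\kappa(s)}{s}$, where $\tilde\kappa(s)$ is a Dobrushin-type coefficient for $s$ copies, namely the largest worst-case error at $s$ copies. Concretely, set $\tilde\kappa(s)\coloneqq 1-\max_{M}\min_i\tr(M_i\rho_i^{\otimes s})$, or a pairwise analogue $\max_{i\neq j}\frac12\|\rho_i^{\otimes s}-\rho_j^{\otimes s}\|$ turned into a contraction-type quantity. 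The plan mirrors the proof of \Cref{theorem: quantum Dobrushin estimate}: a submultiplicativity (amplification) step gives the upper bound, and a trivial per-block bound gives the lower bound.

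\textbf{Step 1 (stability under change of threshold).} Invoke \Cref{proposition: worst-case sample complexity equivalence}. This gives $\kwc(\epsilon)=\Theta_{\epsilon,\epsilon'}(\kwc(\epsilon'))$ for fixed thresholds in $(0,\frac12)$. It is proved by running $r=O(\log(1/\epsilon'))$ independent blocks of the measurement and taking a majority vote. In the worst-case setting this works because each block has success probability at least $1-\epsilon>\frac12$ for every $i$, so Hoeffding's inequality applies uniformly in $i$. Consequently it suffices to relate $\kwc$ at some convenient threshold to $\frac{1}{1-\tkgen}$.

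\textbf{Step 2 (upper bound).} Choose $s$ attaining, up to a factor $2$, the supremum $\delta\coloneqq\sup_s\frac{1-\tilde\kappa(s)}{s}$. Then the $s$-copy minimax measurement succeeds for every $i$ with probability $1-\tilde\kappa(s)\ge s\delta/2$. Repeating this on $m$ disjoint $s$-copy blocks boosts the per-block advantage to a constant. Here I would use a sequential or list-based test: keep candidate outputs and use pairwise tests to verify them. The aim is a total number of copies $ms=O_\epsilon(1/\delta)$. The cleaner route is to define $\tilde\kappa$ pairwise, i.e.\ through the worst pairwise trace distance of $s$-copy states. Then the key lemma is submultiplicativity $1-\tilde\kappa(ms)\ge 1-\tilde\kappa(s)^m$, proved as for classical Dobrushin coefficients by a coupling/fidelity argument: $F(\rho^{\otimes ms},\sigma^{\otimes ms})=F(\rho^{\otimes s},\sigma^{\otimes s})^m$, combined with Fuchs--van de Graaf. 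Since $\tilde\kappa(s)\le 1-s\delta/2$, taking $m\approx\frac{2}{s\delta}\log(2/\epsilon)$ gives $\tilde\kappa(ms)\le\epsilon$. Hence $\kwc=O_\epsilon(1/\delta)$, exactly as in the upper bound of \Cref{theorem: quantum Dobrushin estimate}.

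\textbf{Step 3 (lower bound).} At $k=\kwc(\epsilon)$ copies the definition gives $1-\tilde\kappa(k)\ge 1-\epsilon$. Therefore $\delta\ge\frac{1-\epsilon}{k}$, i.e.\ $\kwc\ge\frac{1-\epsilon}{1-\tkgen}$. This half is immediate once $\tilde\kappa$ is tied to the minimax error.

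\textbf{Step 4 (the chain \eqref{eq: relations for general ensembles}).} The first two inequalities are \Cref{theorem: sandwiched bound for sample complexity}. For $t_{\min}(\epsilon)\le O_\epsilon(\kwc)$, I would bound $\frac1n\|X_t-\Delta_t(X_t)\|_1$ by the average error of the pretty good measurement. For instance, use the Barnum--Knill bound, or the off-diagonal estimate used in proving the sandwich, to get a bound of order $\sqrt{\text{optimal average error}}$. The optimal average error is at most the worst-case error, and Step 1 lets the threshold be adjusted by a constant factor in copies.

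\textbf{Main obstacle.} The hardest step is making the quantity in Step 2 both submultiplicative and equivalent to the minimax error with $n$ hypotheses. A pairwise coefficient amplifies cleanly but loses a $\log n$ factor when converted to $n$-ary discrimination through a union bound. A minimax coefficient matches the task but does not obviously amplify. I would first try to define $\tilde\kappa(s)$ directly as the minimax error and prove amplification via Step 1's majority vote applied to blocks of length $s$. This needs only $1-\tilde\kappa(s)>\frac12$, so the regime where the block advantage is small has to be handled by the doubling $s\mapsto 2s$ inside the supremum.
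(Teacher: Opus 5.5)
\textbf{Steps 1 and 4 are fine.} Your Steps 1 and 4 follow the paper. Step 1 is threshold stability via repetition plus a median/majority vote. Step 4 is $t_{\min}(\epsilon)\le k_{\min}(\epsilon^2/16)\le\kwc(\epsilon^2/16)=\Theta_\epsilon(\kwc(\epsilon))$, obtained from the off-diagonal Gram estimate $\|X_t-\Delta_t(X_t)\|_1^2\le 8n^2(1-p_{\mathrm{PGM}})$ and $p_{\mathrm{PGM}}\ge(p^*)^2$. Barnum--Knill points the wrong way in that step; the off-diagonal route is the right one.

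\textbf{The gap: your coefficients make the core equivalence false.} The statement is about a specific coefficient, $\tkgen=1-\sup_s\frac{1-\eta(\Phi_s)}{s}$. Here $\Phi_s\colon[A_{ij}]\mapsto[A_{ij}\otimes(S_i^\dagger S_j)^{\otimes s}]$, and $\eta(T)=\sup_R\tilde\kappa(T\otimes\id_R)$ is the stabilized Dobrushin ergodicity coefficient on inputs annihilated by the block pinching $\Delta$. Your substitutes are not just unproven; they make the claim false.
\begin{itemize}
\item If $\tilde\kappa(s)=1-\pwc(s)$, then $1-\tilde\kappa(1)=\pwc(1)\ge\frac1n$ (guess uniformly). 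For $n=2$ this gives $\frac{1}{1-\tkgen}\le 2$ however close the two states are.
\item If $\tilde\kappa(s)$ is one minus the smallest pairwise trace distance of $s$-copy states, take two coins with biases $\frac{1\pm\delta}{2}$. Then $1-\tilde\kappa(1)=\delta$ forces $\frac{1}{1-\tkgen}\le\frac1\delta$, while $\kwc=\Theta_\epsilon(\delta^{-2})$.
\end{itemize}
The coin example also refutes your key lemma $\tilde\kappa(ms)\le\tilde\kappa(s)^m$. Routing trace distance through fidelity with Fuchs--van de Graaf costs a square, so about $(1-\tilde\kappa(s))^{-2}$ blocks are needed, not $(1-\tilde\kappa(s))^{-1}$. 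Separately, median or majority voting cannot amplify blocks whose success probability is below $\frac12$, and ``doubling $s$ inside the supremum'' gives no quantitative control.

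\textbf{The missing idea.} You need a coefficient with both properties you correctly saw as being in tension, and $\eta(\Phi_t)$ is exactly that.
\begin{itemize}
\item \emph{Submultiplicativity is automatic.} $\eta$ is a contraction coefficient and $\Phi_{s+t}=(\Phi_s\otimes\id)\circ\Phi_t$. Moreover, $\Phi_t\otimes\id_R$ maps block-off-diagonal operators to block-off-diagonal operators, which is why the stabilization over $R$ is needed.
\item \emph{Comparability with the minimax error}, $1-\pwc(t)\le 2\eta(\Phi_t)\le 4\sqrt{1-\pwc(t)}$, comes from approximate error correction. The complementary channel of $\Phi_t$ is the measure-and-prepare map $Y\mapsto\sum_i\langle i|Y|i\rangle(\rho_i^{\otimes t})^T$. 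By B\'eny--Oreshkov duality, $\sqrt{\pwc(t)}=\max_{\mathcal R'}F(\Phi_t,\mathcal R'\circ\Delta)$ in worst-case entanglement fidelity.
\item Fuchs--van de Graaf in diamond norm then gives both sides, using $\|\Phi_t-\Phi_t\circ\Delta\|_\diamond\le 2\eta(\Phi_t)$ and $\eta(\Phi_t)\le\|\Phi_t-\mathcal R'\circ\Delta\|_\diamond$. The second bound holds because $\Phi_t\otimes\id_R$ and $(\Phi_t-\mathcal R'\circ\Delta)\otimes\id_R$ agree on $\ker(\Delta\otimes\id_R)$.
\end{itemize}
From there the upper bound is the usual amplification with $r=\lceil\log(2/\epsilon)/(1-\eta(\Phi_t))\rceil$ blocks. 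The lower bound uses the square-root side at a threshold of order $\epsilon^2$, with threshold stability absorbing the constant.

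As a sanity check, for $n=2$ one has $\eta(\Phi_t)=\sqrt{F}(\rho_1,\rho_2)^t$ exactly. This is a fidelity-type quantity that tensorizes, which is why it captures the $\delta^{-2}$ scaling that success-probability and trace-distance coefficients miss.
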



\subsection{Classical mixing time on the dual hypergroup}
\label{sec:main-results-dual-hypergroup}
For any GU ensemble, the Hecke algebra $\End_G(\bC[G/H])$ is commutative if and only if $(G,H)$ is a Gelfand pair \cite{CeccheriniSilbersteinScarabottiTolli2008, ScarabottiTolli2013}. In this case, the commutative Hecke algebra naturally induces a dual hypergroup $\hat\Omega$ with convolution determined by the normalized Krein parameters \cite{Godsil2010AssociationSchemes,Voit2017GeneralizedCommutativeAssociationSchemes}, and any probability measure $p$ on $\hat{\Omega}$ defines a Markov kernel $K_p$ by convolution \cite{Voit2019ContinuousAssociationSchemes}. We show that any correlation matrix $Y$ in the Hecke algebra bijectively maps to a probability measure $p(Y)$ on $\hat{\Omega}$, and the trace distance between $\Phi_Y^t(\tau_n)$ and $\omega_n$ is equal to the total variation distance between the $t$-th convolution power of $p(Y)$ and the Haar measure of $\hat{\Omega}$:

\begin{restatable}{thm}{ClassicalMixing} \label{theorem: classical random walk on dual hypergroup}
    There is a bijection 
    \begin{align} \label{eq: the map p}
        p\colon\{Y\in\End_G(\bC[G/H]):Y\geq0,\tr(Y)=n\}&\to\mc P(\hat\Omega),
    \end{align}
    where $\mc P(\hat\Omega)$ denotes the space of probability measures on $\hat\Omega$, such that 
    \begin{align}
        \|\Phi_X^t(\tau_n)-\omega_n\|_1=2d_{\TV}(p(X)^{*t},\pi)
    \end{align}
    where $X$ is the Gram matrix. Therefore, the quantum mixing time $t_{\min}$ reduces to a classical total-variation mixing time, where the corresponding classical Markov kernel is given by
    \begin{align}
        K(i,j)\coloneqq(\delta_i*p(X))(j).
    \end{align}
\end{restatable}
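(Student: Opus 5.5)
We work throughout in the commutative Hecke algebra $\cA=\End_G(\bC[G/H])$, with $n=|G/H|$. Since $(G,H)$ is a Gelfand pair, $\cA$ has two bases. The first consists of the orbital (double-coset) matrices $A_0=\one,A_1,\dots,A_d$, which are $0/1$ matrices indexed by $H\backslash G/H$. The second consists of the primitive idempotents $E_0=\tau_n\cdot n/n=\frac1n\mathbb J,E_1,\dots,E_d$, which project onto the distinct irreducible $G$-subrepresentations of $\bC[G/H]$; let $m_j=\tr E_j$. The algebra $\cA$ is closed under both ordinary and Hadamard products (it is a Bose--Mesner-type algebra), and the two products are exchanged by the eigenmatrix duality. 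Concretely, $E_i\circ E_j=\frac1n\sum_k q_{ij}^k E_k$, where $q_{ij}^k\ge 0$ are the Krein parameters (nonnegative by the Schur product theorem). The dual hypergroup $\hat\Omega=\{0,\dots,d\}$ carries the convolution $\delta_i*\delta_j(k)=\frac{m_k}{m_im_j}\cdot\frac{q_{ij}^k}{n}$ (suitably normalized) and has Haar measure $\pi(j)=m_j/n$. We take these normalizations as the ones fixed in the paper's definition of $\hat\Omega$. The check that each $\delta_i*\delta_j$ is a probability vector reduces to comparing traces in $E_i\circ E_j$ with $\tr(E_i\circ E_j)=\frac1n\tr E_i\cdot$(diagonal constancy). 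This uses the fact that the diagonal of each $E_j$ is constant, equal to $m_j/n$, which holds because $\cA$ contains $\one$ and is $G$-invariant, so $G$ acts transitively on the diagonal.

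Next we define the map $p$. Any $Y\in\cA$ decomposes as $Y=\sum_j \lambda_j E_j$. The condition $Y\ge0$ is equivalent to $\lambda_j\ge0$ for all $j$, and the condition $\tr Y=n$ is equivalent to $\sum_j\lambda_j m_j=n$. We set $p(Y)(j)\coloneqq \lambda_j m_j/n$. This is clearly a bijection onto $\mc P(\hat\Omega)$, with inverse $q\mapsto\sum_j \frac{n q(j)}{m_j}E_j$. Under this map the identity $\one=\sum_j E_j$ corresponds to $\pi$, and $\mathbb J=nE_0$ corresponds to $\delta_0$, the hypergroup identity.

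For the main identity we first note that $\Phi_X(\rho)=X\circ\rho$, so $\Phi_X^t(\tau_n)=\frac1n X^{\circ t}\circ\mathbb J=\frac1n X^{\circ t}$. The key lemma is that $p$ intertwines Hadamard product with convolution: $p(Y\circ Z)=p(Y)*p(Z)$ for $Y,Z$ in the domain. To prove it, expand in the idempotent basis using the Krein relation above, and check that the coefficient bookkeeping (the factors $m_k/(m_im_j)$ and $1/n$) matches the chosen normalization of $*$. Two supporting facts are needed: $Y\circ Z$ remains in the domain because the diagonal entries of a product of unit-diagonal-scaled matrices stay at $1$, and $\tr(Y\circ Z)=n$ because the diagonals are constant and equal to $1$. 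The lemma then gives $p(X^{\circ t})=p(X)^{*t}$.

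Finally, $\Phi_X^t(\tau_n)-\omega_n=\frac1n\bigl(X^{\circ t}-\one\bigr)=\frac1n\sum_j(\lambda^{(t)}_j-1)E_j$. Because the $E_j$ are orthogonal projections, the trace norm of this sum is
\[
\tfrac1n\sum_j|\lambda_j^{(t)}-1|m_j=\sum_j|p(X)^{*t}(j)-\pi(j)|=2d_{\TV}(p(X)^{*t},\pi).
\]
The Markov kernel statement then follows from associativity of hypergroup convolution: $\delta_0 K^t=\delta_0*p(X)^{*t}=p(X)^{*t}$. We expect the main obstacle to be the normalization check in the intertwining lemma, namely making the Krein-parameter normalization agree exactly with the hypergroup convolution the paper adopted (including commutativity and associativity of $*$, which are inherited from associativity and commutativity of $\circ$ through the bijection). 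We would settle this by defining $*$ on $\hat\Omega$ precisely as the pushforward of $\circ$ under $p$, and then verifying that it coincides with the normalized Krein convolution.
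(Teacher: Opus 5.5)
Your proposal is correct and follows essentially the paper's route: the paper's map $\Xi(q)=\sum_i \frac{q_i}{m_i}E_i$ is just $\frac1n p^{-1}$, its identity $\Phi_Y(\Xi(q))=\Xi(q*p(Y))$ is your intertwining lemma $p(Y\circ Z)=p(Y)*p(Z)$, and the trace-norm computation using the orthogonal projections $E_j$ is the same. The only slip is a stray factor $\frac1n$ in your displayed convolution. With $E_i\circ E_j=\frac1n\sum_k q_{ij}^kE_k$, comparing traces gives $\sum_k m_kq_{ij}^k=m_im_j$, so the correctly normalized kernel is $\frac{m_k}{m_im_j}q_{ij}^k$, which is exactly what your pushforward definition of $*$ produces.
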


This theorem reduces the sample complexity estimate to the mixing of the classical walk $K$ on the dual hypergroup. The quantum Dobrushin coefficient of $\Phi_X\big|_\cA$ also reduces to the classical Dobrushin coefficient of $K$, and we obtain an explicit formula \cref{eq: Gelfand Dobrushin formula} for the tight estimate of sample complexity completely in terms of representation-theoretic information including Krein parameters and primitive idempotents.

The Gelfand-pair framework brings several established and new examples into a unified form. The first example is learning quantum coupon collector states~\cite{Arunachalam2020QuantumCouponCollector} where permutation symmetry yields a Gelfand pair and the dual walk is a birth-death chain. Applying our framework to this problem recovers the known sample-complexity scaling by applying classical mixing time theory.

We then consider a class of GU phase-state ensembles generated by an abelian group $A$ with
    \begin{align} \label{eq: GU phase state with feature map}
        \ket{\psi_0}=\sum_{x\in\mc X}\sqrt{\nu(x)}\ket{x},\qquad U_a\ket{x}=\Phi(x)(a)\ket{x} \text{ for $a\in A,x\in\mc X,$}
    \end{align}
    where $\mc X$ is a finite set with probability measure $\nu$, and $\Phi\colon\mc X\to\hat A$ is a feature map from $\mc X$ into the character group $\hat{A}$ of $A$. Set $p=\Phi_\#\nu$ to be the pushforward measure of $\nu$ under $\Phi$. Our next result provides an explicit classical random walk $Z_t$ on $\hat A$ whose mixing time is equal to the $t_{\min}$ of the phase state ensemble: 

\begin{restatable}{thm}{PhaseState}
    \label{theorem: semidirect product case reduction to d_TV}
    Let $(Z_t)$ be the abelian random walk on $\hat{A}$ given by
    \begin{align}
        Z_0=0;\quad Z_t=\sum_{j=1}^{t}X_j
    \end{align}
    where $X_j\overset{\text{i.i.d.}}{\sim}p$. Then
    \begin{align}
        t_{\min}(\epsilon)=\min\{t:\|\law(Z_t)-\Unif(\hat{A})\|_1\leq\epsilon\}.
    \end{align}
\end{restatable}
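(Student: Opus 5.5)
The plan is to compute the Gram matrix of the phase-state ensemble explicitly and diagonalize the associated Hadamard channel with the Fourier transform on $A$. The stabilizer question can be avoided by working directly with $t_{\min}$ for the ensemble $\{U_a\ket{\psi_0}\}_{a\in A}$, i.e.\ with the definition $t_{\min}(\epsilon)=\min\{t:\|\Phi_X^t(\tau)-\omega\|_1\le\epsilon\}$ on $\bC^{|A|}$.

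\textbf{Step 1: the Gram matrix.} By \eqref{eq: GU phase state with feature map},
\begin{align}
X_{a,b}=\bra{\psi_0}U_a^\dagger U_b\ket{\psi_0}=\sum_{x}\nu(x)\,\overline{\Phi(x)(a)}\Phi(x)(b)=\sum_{\chi\in\hat A}p(\chi)\,\chi(b-a)=\hat p(b-a).
\end{align}
So $X$ is a convolution (circulant) matrix on $A$. Its $t$-fold Hadamard power is again circulant, with symbol $\hat p(\cdot)^t=\widehat{p^{*t}}$. Since $p^{*t}=\law(Z_t)$, this is exactly the characteristic function of $Z_t$. Hence
\begin{align}
\Phi_X^t(\tau)=\tfrac{1}{|A|}X^{\circ t}=\tfrac{1}{|A|}\Big[\textstyle\sum_{\chi}\law(Z_t)(\chi)\,\chi(b-a)\Big]_{a,b}.
\end{align}

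\textbf{Step 2: diagonalize.} Use the Fourier vectors $\ket{\chi}=|A|^{-1/2}\sum_a\chi(a)\ket a$, which form an orthonormal basis of $\bC^{|A|}$ indexed by $\hat A$. Every circulant matrix is diagonal in this basis. From Step 1,
\begin{align}
\tfrac1{|A|}X^{\circ t}=\sum_{\chi}\law(Z_t)(\chi)\ket{\chi}\bra{\chi}.
\end{align}
This needs a short check of conjugation conventions: $\sum_{a,b}\chi(b-a)\ket a\bra b=|A|\,\ket{\bar\chi}\bra{\bar\chi}$ up to relabeling $\chi\leftrightarrow\bar\chi$. Such a relabeling is a bijection of $\hat A$ and does not change TV distance to the uniform distribution. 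The same basis diagonalizes $\omega=\frac1{|A|}\one=\sum_\chi \frac1{|\hat A|}\ket\chi\bra\chi$, using $|\hat A|=|A|$.

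\textbf{Step 3: conclude.} The two operators commute, so their trace distance equals the $\ell^1$ distance of their eigenvalue lists in the common eigenbasis:
\begin{align}
\|\Phi_X^t(\tau)-\omega\|_1=\|\law(Z_t)-\Unif(\hat A)\|_1.
\end{align}
This holds for every $t$, so the two minima defining $t_{\min}$ coincide. The case $t=0$ is consistent: $\tau=\ket{\chi_0}\bra{\chi_0}$ for the trivial character, matching $Z_0=0$.

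\textbf{Possible obstacle.} The only subtlety is whether $t_{\min}$ is meant for the reduced ensemble labeled by $A/H$ rather than all of $A$. Here $H=\{a:\chi(a)=1\ \forall \chi\in\operatorname{supp}p\}$. If so, I would redo the computation on $A/H$, whose dual group is $H^\perp=\{\chi\in\hat A:\chi|_H=1\}$, the subgroup generated by $\operatorname{supp}p$. The walk $Z_t$ stays in $H^\perp$ almost surely. The $\ell^1$ distance to uniform on $\hat A$ versus uniform on $H^\perp$ would then need reconciling, and the identity as stated suggests the full-$A$ convention is intended. I would therefore fix the full-$A$ interpretation and note the reduced version as the analogous identity on $H^\perp$.
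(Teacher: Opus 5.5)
Your proof is correct, and it takes a genuinely different, more direct route than the paper. The paper obtains this as a special case of its Gelfand-pair machinery. It identifies the primitive idempotents $E_D$ of $\bC[A]^K$, indexed by $K$-orbits $D\subset\hat A$, and shows that the hypergroup distribution of the Gram matrix is the pushforward $\tilde p$ of $p$ to $\hat A/K$. It then invokes \Cref{theorem: classical random walk on dual hypergroup} to get $\|\Phi_X^t(\tau_n)-\omega_n\|_1=\|\tilde p^{*t}-\tilde\pi\|_1$, and finally lifts back to $\hat A$.

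You instead note that $X$ is a group-convolution matrix on $A$ and diagonalize it, together with its Hadamard powers, in the full character basis. This means you never need $K$, the orbit idempotents, or the hypergroup convolution. Your route is self-contained. It also avoids two facts the paper uses without comment in its last step: that the hypergroup powers $\tilde p^{*t}$ are the pushforwards of the group powers $p^{*t}$, and that the $\ell^1$ distance survives this pushforward because $p^{*t}$ and $\Unif(\hat A)$ are $K$-invariant, i.e.\ uniform on each orbit. What the paper's route buys is the coarser walk on $\hat A/K$, whose Dobrushin coefficient is the one in \eqref{eq: Gelfand Dobrushin formula}, and the placement of the result within the general theory.

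Two small remarks. First, your Step 1 uses the feature-map formula. For the general $p=\hat\alpha$ from \Cref{theorem: Bochner} the same computation goes through, since $U_a^\dagger U_b=U_{b-a}$ gives $X_{a,b}=\alpha(b-a)=\hat p(b-a)$. Second, your ``possible obstacle'' does not arise. The paper assumes that the stabilizer of $\ket{\psi_0}$ in $A\rtimes K$ is exactly $H=K$. That forces your subgroup $\{a:\chi(a)=1\ \forall\,\chi\in\operatorname{supp}p\}$ to be trivial. Consequently $G/H\cong A$, your full-$A$ Gram matrix is exactly the paper's Gram matrix, and $\operatorname{supp}p$ generates all of $\hat A$.
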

This perspective is compatible with the recent characterization of codeword-state learnability through classical syndrome distributions~\cite{alrabiah2026nolowdegree} and supplies a general framework to this class of phase states. The generalized Dobrushin coefficient of this abelian walk can be fully expressed in terms of $A,\nu,\Phi$. When the dual group admits a product structure $\hat{A}=B_1\times\ldots\times B_n$ and the feature map has a Knothe--Rosenblatt triangular structure~\cite{ramgraber2025friendly}, we can upper bound mixing time by information of the coordinate blocks rather than a single bound on the full family of features:

\begin{restatable}{thm}{Triangular}
    \label{thm: total variation upper bound for triangular feature map}
        Assume $X^1,\ldots,X^n$ are independent. Define
    \begin{align}
        \lambda_i\coloneqq\max_{\chi\in\hat{B}_i\backslash\{1\}}\bE_{X^{<i}}|\bE_{X^i}\chi(\Phi_i(X^{\leq i}))|^2.
    \end{align}
    Then for every $t>1$,
    \begin{align}
        d_{\TV}(\law(Z_t),\Unif(\hat{A}))\leq\frac{1}{2}\sum_{i=1}^{n}\sqrt{(|B_i|-1)\lambda_i^t}.
    \end{align}
    In particular, setting $\lambda=\max_i\lambda_i$ and $r=\max_i\log|B_i|$, 
    \begin{align} \label{eq: upper bound for triangular feature map}
        t_{\min}(\epsilon)\leq\frac{r+2\log(\frac{n}{\epsilon})}{-\log\lambda}.
    \end{align}
\end{restatable}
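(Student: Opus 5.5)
The plan is to read the setup as a triangular coupling and run a chain-rule argument in total variation, one coordinate block at a time. Write $Z_t=(Z_t^1,\ldots,Z_t^n)\in B_1\times\cdots\times B_n$. Here $Z_t^i=\sum_{j=1}^t \Phi_i(X_j^{\le i})$, and the i.i.d.\ samples $X_j=(X_j^1,\ldots,X_j^n)$ have independent components. The first step is the hybrid (telescoping) decomposition. Let $\mu_i$ be the law of $(Z_t^1,\ldots,Z_t^i)$ on $B_1\times\cdots\times B_i$, and compare $\mu_i$ with $\mu_{i-1}\otimes\Unif(B_i)$. Because of the triangular structure, conditioning on the first $i-1$ data columns $\mathbf X^{<i}=(X_j^{<i})_{j\le t}$ determines $Z_t^{<i}$. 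Conditionally on $\mathbf X^{<i}$, the block $Z_t^i$ is then a sum of $t$ independent but non-identical $B_i$-valued variables $Y_j=\Phi_i(X_j^{<i},X_j^i)$, whose randomness comes only from the fresh $X_j^i$. The triangle inequality together with convexity of $d_{\TV}$ under conditioning gives
\begin{align*}
d_{\TV}(\law(Z_t),\Unif(\hat A))\le\sum_{i=1}^n \bE_{\mathbf X^{<i}}\, d_{\TV}\bigl(\law(Z_t^i\mid \mathbf X^{<i}),\Unif(B_i)\bigr).
\end{align*}
The $i$-th hybrid step replaces block $i$ by uniform while keeping blocks $<i$ exact and blocks $>i$ uniform. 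Blocks $>i$ contribute nothing since they are already uniform and product.

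The second step bounds each conditional term with the Fourier/Cauchy--Schwarz (upper bound) lemma on the finite abelian group $B_i$:
\begin{align*}
2d_{\TV}(\nu,\Unif)\le\Bigl(\sum_{\chi\ne1}|\hat\nu(\chi)|^2\Bigr)^{1/2}.
\end{align*}
Conditionally, the law of $Z_t^i$ is a convolution, so its characters factor as $\prod_{j=1}^t \bE_{X_j^i}\chi(\Phi_i(X_j^{\le i}))$. Then Jensen, i.e.\ $\bE\sqrt{\cdot}\le\sqrt{\bE\,\cdot}$, moves the expectation over $\mathbf X^{<i}$ inside the square root. Independence across $j$ factorizes $\bE_{\mathbf X^{<i}}\prod_j|\cdot|^2$ into $\prod_j \bE_{X^{<i}}|\bE_{X^i}\chi(\Phi_i)|^2\le\lambda_i^t$. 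Summing over the $|B_i|-1$ nontrivial characters yields the term $\tfrac12\sqrt{(|B_i|-1)\lambda_i^t}$.

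The main obstacle is the first step: making the hybrid argument rigorous with conditioning that respects the triangular dependence. The concern is that conditioning on $\mathbf X^{<i}$, rather than on $Z_t^{<i}$, must only coarsen things. I would handle this by working with the joint law of $(\mathbf X^{<i},Z_t^i)$: pushforward to $(Z_t^{<i},Z_t^i)$ contracts TV, and the reference measure $\law(\mathbf X^{<i})\otimes\Unif(B_i)$ pushes forward to $\mu_{i-1}\otimes\Unif(B_i)$.

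Finally, for \eqref{eq: upper bound for triangular feature map}, I would bound the sum by $\tfrac n2 e^{r/2}\lambda^{t/2}$ and use $\|\cdot\|_1=2d_{\TV}$ together with \Cref{theorem: semidirect product case reduction to d_TV}. It suffices to have $n e^{r/2}\lambda^{t/2}\le\epsilon$, i.e.\ $t\ge (r+2\log(n/\epsilon))/(-\log\lambda)$.
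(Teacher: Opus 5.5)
Your proposal is correct and follows essentially the same route as the paper. The paper applies the chain-rule bound of \Cref{lemma: total variation chain inequality}, conditioning on the $\sigma$-algebra $\cF_{i-1}$ generated by the first $i-1$ data columns, factorizes the conditional characters across the $t$ samples, and uses Cauchy--Schwarz/Parseval plus Jensen to obtain $\tfrac12\sqrt{(|B_i|-1)\lambda_i^t}$. Your pushforward argument for conditioning on $\mathbf X^{<i}$ rather than $Z_t^{<i}$ is exactly the ``in particular'' clause of that lemma, and your final computation of the $t_{\min}$ bound (up to integer rounding) supplies a step the paper leaves implicit.
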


As applications, this framework gives an $O_{q,d,\epsilon}(n^{d-1})$ copy upper bound for learning degree-$d$ phase states on $\bF_q^n$ for fixed prime $q$ and fixed $d\ge2$, which closes the upper-bound conjecture posed by Alrabiah, Arunachalam, Grewal, and Wright~\cite{alrabiah2026nolowdegree}. The same method gives the same upper bound for generalized degree-$d$ Boolean phase states with phases in $\mathbb Z_q$ for fixed even $q$, answering the collective-measurement question left open by Arunachalam, Bravyi, Dutt, and Yoder~\cite{arunachalam_et_al:LIPIcs.TQC.2023.3}. Finally, hypergraph-state ensembles~\cite{Rossi2013QuantumHypergraphStates} fit the feature-map framework and the generalized Dobrushin coefficient can be entirely expressed in terms of hypergraph data in \cref{eq: hypergraph Dobrushin coefficient}. We also derive the following upper bound in terms of hyperedge information in \Cref{proposition: hypergraph}, which recovers results of Montanaro and Shao~\cite{MontanaroShao2022HiddenGraph} when applied to graph states.

\subsection{Summary of results in a table}
\begin{table}[H]
\centering
\small
\setlength{\tabcolsep}{4pt}
\renewcommand{\arraystretch}{1.2}
\begin{tabularx}{\textwidth}{@{}
  >{\raggedright\arraybackslash}p{0.22\textwidth}
  >{\raggedright\arraybackslash}p{0.25\textwidth}
  >{\raggedright\arraybackslash}X@{}}
\toprule
\textbf{Ensemble type} & \textbf{Mixing formulation} & \textbf{Main results} \\
\midrule
Arbitrary mixed states (with uniform priors)
& Inhomogeneous quantum mixing
& $t_{\min}(4\sqrt{\epsilon})\le k_{\min}(\epsilon)\le t_{\min}(\epsilon)\leq O_\epsilon(\kwc)=\Theta_\epsilon(\frac{1}{1-\tkgen})$ in \Cref{theorem: tightness for worst-case discrimination}.
\\ \addlinespace

Arbitrary pure states $\quad$ (with uniform priors)
& Homogeneous Hadamard-channel mixing
& Same bounds with unstabilized coefficients; strengthened data processing inequality (SDPI) bound \eqref{eq: SDPI}. 
\\ \addlinespace

GU pure states
& Hadamard-channel mixing on the Hecke algebra $\mathcal A$
& $k_{\min}=\Theta_\epsilon(t_{\min})=\Theta_\epsilon\!\left(
    (1-\kappa_{\mathrm{gen}}(\Phi_X|_{\mathcal A}))^{-1}
  \right)$ in \Cref{theorem: GU Dobrushin estimate}.
\\ \addlinespace

Gelfand-pair GU pure states
& Classical walk $K$ on the dual hypergroup
& $k_{\min}=\Theta_\epsilon(t_{\mix}(K))=\Theta_\epsilon((1-\kappa_{\mathrm{gen}}(K))^{-1})$;
  explicit formula \eqref{eq: Gelfand Dobrushin formula} for $\kgen$ from representation-theoretic data.
\\ \addlinespace

Feature-map phase states
& Abelian walk $Z_t=\sum_{j=1}^t\Phi(X_j)$
& Formula \eqref{eq:feature-lifted-walk-expanded} for the abelian walk; explicit formula \eqref{eq: phase state Dobrushin} for $\kappa_{\mathrm{gen}}$;
  triangular-feature mixing bound \eqref{eq: upper bound for triangular feature map}.
\\ \midrule

Quantum coupon collector
& Birth--death chain from the Gelfand pair
  $(S_\ell,S_r\times S_{\ell-r})$
& $k_{\min}=\Theta_\epsilon\!\left(
    r\log\min\{r,\ell-r\}
  \right)$ in \cref{eq: coupon collector estimate}.
\\ \addlinespace

Degree-$d$ phase states over $\mathbb F_q^n$
& Triangular monomial features
& $k_{\min}=\Theta_{q,d,\epsilon}(n^{d-1})$ for prime $q$ in \cref{eq: degree-d phase state estimate}.
\\ \addlinespace

Generalized degree-$d$ Boolean phase states
& Triangular monomial features
& $k_{\min}=\Theta_{q,d,\epsilon}(n^{d-1})$ for even $q$ in \cref{eq: Boolean phase state estimate}.
\\ \addlinespace

Hypergraph states
& Triangular-feature walk on $\mathbb F_2^E$
& Explicit formula \eqref{eq: hypergraph Dobrushin coefficient} for $\kgen$ from hyperedge data;
$k_{\min}(\epsilon)\le2^d\log\!\left(\epsilon^{-1}\sum_{v\in V}2^{|E_v|}\right)$ in \Cref{proposition: hypergraph}.
\\ \bottomrule
\end{tabularx}
\caption{Summary of our own main results}
\label{tab:summary-of-results}
\end{table}

\subsection{Organization of this article}
The rest of this article is organized as follows: \Cref{section: preliminaries} collects the necessary background on discrimination, mixing, and contraction coefficients. \Cref{section: GU ensemble} develops the GU pure-state theory and its classical reduction. \Cref{section: phase state} treats phase-state and hypergraph-state applications. \Cref{section: generic non-GU ensembles} extends the framework to arbitrary uniform mixed-state ensembles and establishes the worst-case and global-geometry characterizations.

\section{Preliminaries} \label{section: preliminaries}
\subsection{Minimum-error state discrimination and its sample complexity}
    Given an ensemble $\{(p_i,\rho_i)\}_{i=1}^{n}$ of quantum states $\rho_i$ each drawn with probability $p_i$, we define the success probability of minimum-error (a.k.a. average-error) state discrimination by
    \begin{align}
        \psucc^*\coloneqq \max_{\{\Pi_i\}}\sum_i p_i\tr(\Pi_i\rho_i).
    \end{align}
    In particular, consider the success probability of discriminating $k$-copy i.i.d. states
    \begin{align}
        \psucc^*(k)\coloneqq \max_{\{\Pi_i^{(k)}\}}\sum_i p_i\tr(\Pi_i^{(k)}\rho_i^{\otimes k}).
    \end{align}
    Note that $\lim_{k\to\infty}\psucc(k)=1$ provided that the $\rho_i$ are pairwise different. Given $\epsilon\in(0,1)$, define
    \begin{align}
        k_{\min}(\epsilon)\coloneqq \min\{k:\psucc^*(k)\geq1-\epsilon\}.
    \end{align}
    In this work we focus on studying the asymptotic order of $k_{\min}(\epsilon)$ given a family of ensembles. 

The `pretty good measurement' (PGM) \cite{belavkin1975optimal,holevo1979asymptotically,hausladen1994pretty} is a particular measurement $\Pi^{\mathrm{PGM}} = \lbrace \Pi^{\mathrm{PGM}}_i\rbrace_{i=1}^n$ with measurement operators defined in terms of the quantum state ensemble $\{(p_i,\rho_i)\}_{i=1}^{n}$ as
\begin{align}
    \Pi^{\mathrm{PGM}}_i \coloneqq \bar{\rho}^{-1/2} p_i \rho_i \bar{\rho}^{-1/2},
    \label{eq:pretty-good-measurement}
\end{align}
where $\bar{\rho} = \sum_{i=1}^n p_i \rho_i$ is the ensemble average state.
The Barnum-Knill bound \cite{BarnumKnill2002} provides an upper bound for the error probability of PGM by pairwise fidelities, which leads to a simple but generic upper bound for the sample complexity:

\begin{lem}[Barnum-Knill]\label{lemma: Barnum-Knill}
    Given an ensemble $\mathcal{E}\coloneqq \{(p_i,\rho_i)\}$,
    \begin{align}
        1-p_{\mathrm{PGM}}\leq\sum_{i\neq j}\sqrt{p_ip_j}\sqrt{F}(\rho_i,\rho_j).
    \end{align}
\end{lem}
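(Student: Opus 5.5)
The plan is to follow the standard proof of the Barnum--Knill bound, which rests on the operator inequality $\sqrt{A}\geq A$ restricted to the right subspace. I will first reduce to pure states by purification, or equivalently work with the square-root factorization $p_i\rho_i=S_iS_i^\dagger$. With $S_i\coloneqq\sqrt{p_i\rho_i}$, the PGM success probability is
\begin{align}
    p_{\mathrm{PGM}}=\sum_i\tr\bigl((S_i^\dagger\bar\rho^{-1/2}S_i)^2\bigr).
\end{align}
Form the block Gram matrix $Y=[S_i^\dagger S_j]_{i,j}$, which is positive semidefinite, and note that $\sqrt{Y}$ has diagonal blocks $(\sqrt{Y})_{ii}=S_i^\dagger\bar\rho^{-1/2}S_i$. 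This identity follows from the polar-decomposition relation $\sqrt{Y}=V^\dagger\bar\rho^{1/2}V$, where $V=\sum_i S_i\otimes\bra{i}$ and $VV^\dagger=\bar\rho$. Hence $p_{\mathrm{PGM}}=\sum_i\tr\bigl((\sqrt Y)_{ii}^2\bigr)$.

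The key step is the lower bound
\begin{align}
    \tr\bigl((\sqrt Y)_{ii}^2\bigr)\geq 2\tr\bigl((\sqrt Y)_{ii}D_i\bigr)-\tr(D_i^2)
\end{align}
for any Hermitian $D_i$; this holds because $\tr\bigl(((\sqrt Y)_{ii}-D_i)^2\bigr)\geq0$. I will take $D_i=S_i^\dagger S_i$, the diagonal block of $Y$ itself. Then $\sum_i\tr\bigl((\sqrt Y)_{ii}Y_{ii}\bigr)=\tr\bigl(\sqrt{Y}\,\Delta(Y)\bigr)$, where $\Delta$ denotes block pinching. Writing $\Delta(Y)=Y-(Y-\Delta(Y))$ reduces the problem to controlling $\tr(Y^{3/2})$ and the off-diagonal contribution $\tr\bigl(\sqrt Y(Y-\Delta(Y))\bigr)$.

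The clean route I expect to work, as in Barnum--Knill's original argument, uses $\sqrt{Y}\geq Y/\|Y\|$ only where $Y\leq\one$. Here $Y=V^\dagger V$ has the same nonzero spectrum as $\bar\rho$, so $Y\leq\one$ and therefore $\sqrt Y\geq Y$. This gives $(\sqrt Y)_{ii}\geq Y_{ii}=p_i\rho_i$ in the compressed sense $S_i^\dagger\bar\rho^{-1/2}S_i\geq S_i^\dagger S_i$. Combining this with the quadratic bound gives
\begin{align}
    p_{\mathrm{PGM}}\geq\sum_i\tr\bigl((S_i^\dagger S_i)^2\bigr)\cdot(\text{correction}).
\end{align}
To land exactly on $1-\sum_{i\neq j}\sqrt{p_ip_j}\sqrt F(\rho_i,\rho_j)$, I will instead follow the original estimate $\tr\bigl(S_i^\dagger\bar\rho^{-1/2}S_i\bigr)^2\geq\ldots$, using Cauchy--Schwarz together with $\tr\bigl(\rho_i\sqrt{\rho_j}\bigr)$-type cross terms and the bound $\|\sqrt{\rho_i}\sqrt{\rho_j}\|_1=\sqrt F(\rho_i,\rho_j)$.

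The main obstacle is this last bookkeeping step: converting the cross terms $\tr\bigl(S_i^\dagger\bar\rho^{-1/2}S_j\cdots\bigr)$ into $\|S_i^\dagger S_j\|_1=\sqrt{p_ip_j}\sqrt F(\rho_i,\rho_j)$. I would first try the operator inequality
\begin{align}
    \bar\rho^{-1/2}\geq 2\one-\bar\rho
\end{align}
on the support of $\bar\rho$, which is valid since $\bar\rho\leq\one$. Inserting it gives
\begin{align}
    p_{\mathrm{PGM}}\geq\sum_i p_i\tr\bigl(\rho_i\bar\rho^{-1/2}\rho_i\bar\rho^{-1/2}\bigr),
\end{align}
and then $\tr(\rho_i\bar\rho^{-1/2})\geq 2p_i-\sum_jp_j\tr(\rho_i\rho_j)$. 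Finally, I bound $\tr(\rho_i\rho_j)\leq\sqrt F(\rho_i,\rho_j)$ together with $p_j\leq\sqrt{p_ip_j}$-type weightings. The remaining work is to check that the weights come out as $\sqrt{p_ip_j}$ rather than $p_j$, which is where I anticipate needing to symmetrize carefully.
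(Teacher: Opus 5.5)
Your setup is correct: with $S_i=\sqrt{p_i\rho_i}$ and $Y=[S_i^\dagger S_j]_{i,j}$ one has $(\sqrt Y)_{ii}=S_i^\dagger\bar\rho^{-1/2}S_i$ and $p_{\mathrm{PGM}}=\sum_i\tr\bigl((\sqrt Y)_{ii}^2\bigr)$. (The polar identity should read $\sqrt Y=V^\dagger\bar\rho^{-1/2}V$; $V^\dagger\bar\rho^{1/2}V$ equals $Y^{3/2}$.) However, none of the routes you then sketch can reach the bound, and the last one rests on false inequalities:
\begin{itemize}
\item The scalar inequality behind $\bar\rho^{-1/2}\geq 2\one-\bar\rho$ fails for all $x\in\bigl(\tfrac{3-\sqrt5}{2},1\bigr)$; for example, $x=\tfrac12$ gives $\sqrt2<\tfrac32$. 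The tangent of the convex function $x^{-1/2}$ at $x=1$ is $(3-x)/2$, not $2-x$.
\item The display $p_{\mathrm{PGM}}\geq\sum_ip_i\tr(\rho_i\bar\rho^{-1/2}\rho_i\bar\rho^{-1/2})$ has the wrong power of $p_i$. The identity has $p_i^2$, and with $p_i$ the right side equals $n$ for $n$ orthogonal states with uniform priors.
\item The step $p_j\leq\sqrt{p_ip_j}$ fails whenever $p_j>p_i$.
\end{itemize}
More fundamentally, linearizing $\bar\rho^{-1/2}$ around $\one$, using $\sqrt Y\geq Y$, or taking the tangent point $D_i=Y_{ii}$ all discard the scale of $\bar\rho$. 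Take $n$ orthogonal pure states with uniform priors, so $Y=\tfrac1n\one$, $p_{\mathrm{PGM}}=1$, and the right-hand side of the lemma is $0$. Your quadratic bound with $D_i=Y_{ii}$ gives only $2/\sqrt n-1/n$, and the other two give $O(1/n)$. So the obstacle is not bookkeeping or symmetrization: these estimates cannot produce $1-\sum_{i\neq j}(\cdots)$.

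The missing idea is to compare $\sqrt Y$ with $\sqrt{\Delta(Y)}$ and use the Powers--St{\o}rmer inequality $\|\sqrt A-\sqrt B\|_2^2\leq\|A-B\|_1$ for $A,B\geq0$. Your quadratic trick gives the first half if you take $D_i=\sqrt{Y_{ii}}$ instead of $Y_{ii}$. Summing over $i$ and using $\tr Y=\tr\Delta(Y)=1$ gives $p_{\mathrm{PGM}}\geq 2\tr\bigl(\sqrt Y\sqrt{\Delta(Y)}\bigr)-1=1-\|\sqrt Y-\sqrt{\Delta(Y)}\|_2^2$. Equivalently, $1-p_{\mathrm{PGM}}=\sum_{i\neq j}\|(\sqrt Y)_{ij}\|_2^2\leq\|\sqrt Y-\sqrt{\Delta(Y)}\|_2^2$, as in \Cref{lemma: block Gram matrix lemma}(1). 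Then
\[
1-p_{\mathrm{PGM}}\leq\|Y-\Delta(Y)\|_1\leq\sum_{i\neq j}\|S_i^\dagger S_j\|_1=\sum_{i\neq j}\sqrt{p_ip_j}\,\bigl\|\sqrt{\rho_i}\sqrt{\rho_j}\bigr\|_1=\sum_{i\neq j}\sqrt{p_ip_j}\sqrt{F}(\rho_i,\rho_j),
\]
where the middle step is the triangle inequality over off-diagonal blocks. The weights $\sqrt{p_ip_j}$ appear automatically, with no symmetrization needed. The paper does not prove this lemma; it cites it from Barnum--Knill. Still, this is exactly the chain used in the proof of \Cref{theorem: sandwiched bound for sample complexity} for uniform priors, and it carries over verbatim to general priors via $S_i=\sqrt{p_i\rho_i}$.
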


\begin{prop}[Montanaro \cite{Montanaro2019PrettySimpleBounds}]\label{theorem: Montanaro upper bound of k_min}
    Suppose $p_i=\frac{1}{n}$. If $\mu(\mathcal{E})\coloneqq \max_{i\neq j}F(\rho_i,\rho_j)<1$, then
    \begin{align}
        k_{\min}(\epsilon)\leq O\left(\frac{\log\epsilon-\log n}{\log\mu}\right).
    \end{align}
\end{prop}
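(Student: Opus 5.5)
Proposition (Montanaro) follows directly from \Cref{lemma: Barnum-Knill} applied to the $k$-copy ensemble $\{(\frac1n,\rho_i^{\otimes k})\}$. The PGM is one particular measurement, so $1-\psucc^*(k)\le 1-p_{\mathrm{PGM}}(k)$. It therefore suffices to find the smallest $k$ for which the Barnum--Knill right-hand side drops below $\epsilon$. That $k$ upper bounds $k_{\min}(\epsilon)$.

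The key steps, in order, are as follows.
\begin{enumerate}
\item Use multiplicativity of fidelity under tensor products, $\sqrt{F}(\rho_i^{\otimes k},\rho_j^{\otimes k})=\sqrt{F}(\rho_i,\rho_j)^k$. This follows from $\|\sqrt{A\otimes A'}\sqrt{B\otimes B'}\|_1=\|\sqrt A\sqrt B\|_1\|\sqrt{A'}\sqrt{B'}\|_1$.
\item With uniform priors, the bound becomes
\begin{align}
1-\psucc^*(k)\le \frac1n\sum_{i\ne j}\sqrt{F}(\rho_i,\rho_j)^k\le (n-1)\,\mu^{k/2}.
\end{align}
\item Impose $(n-1)\mu^{k/2}\le\epsilon$. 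Since $\log\mu<0$, this holds whenever
\begin{align}
k\ge \frac{2(\log\epsilon-\log(n-1))}{\log\mu}.
\end{align}
Because $\log(n-1)\le\log n$, it is enough to take $k=\big\lceil 2(\log\epsilon-\log n)/\log\mu\big\rceil$.
\end{enumerate}

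This gives $k_{\min}(\epsilon)\le 2\frac{\log\epsilon-\log n}{\log\mu}+1$. For $\epsilon<1$, $n\ge2$ and $\mu<1$, the ratio $\frac{\log\epsilon-\log n}{\log\mu}$ is bounded below by a positive constant depending on $\mu$. The additive $+1$ is therefore absorbed into the $O(\cdot)$, provided the implicit constant is allowed to depend on $\mu$ or the statement is read for $\mu$ bounded away from $0$.

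There is no real obstacle. The only points needing care are the sign bookkeeping, since $\log\mu<0$ flips the inequality, and the ceiling/constant absorption in the $O(\cdot)$ statement.
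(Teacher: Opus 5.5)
Your proposal is correct and follows the same route as the paper. You apply Barnum--Knill to the $k$-copy ensemble, use multiplicativity of fidelity to get $1-\psucc^*(k)\le(n-1)\mu^{k/2}$, and solve for $k$. Your point that absorbing the ceiling into the $O(\cdot)$ needs the constant to depend on $\mu$, or $\mu$ bounded away from $0$, is handled correctly; the paper's proof leaves it implicit.
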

\begin{proof}
    Since fidelity is multiplicative, by Barnum-Knill,
    \begin{align}
        1-p_k\leq\frac{1}{n}\sum_{i\neq j}\sqrt{F}(\rho_i,\rho_j)^k\leq\frac{1}{n}\sum_{i\neq j}\mu^\frac{k}{2}=(n-1)\mu^\frac{k}{2}.
    \end{align}
    Therefore, $n\mu^\frac{k}{2}\leq\epsilon$ suffices to ensure $p_k\geq1-\epsilon$, so we have $k_{\min}\leq O(\frac{\log\epsilon-\log n}{\log\mu})$.
\end{proof}

\begin{rem}
    \Cref{theorem: Montanaro upper bound of k_min} holds for general state ensembles, but it only gives an $O_{\epsilon,\mu}(\log n)$ upper bound. There are many examples where one can show tighter upper bounds on $k_{\min}$, and even $k_{\min}=O(1)$ \cite{alrabiah2026nolowdegree,Arunachalam2020QuantumCouponCollector,arunachalam_et_al:LIPIcs.TQC.2023.3,BaconChildsVanDam2006,childs2007quantum,MontanaroShao2022HiddenGraph,zhou2026sample}; however, a systematic method to upper bound $k_{\min}$ better than $O(\log n)$ is an open question \cite{Cheng2025InvitationSampleComplexityQHT}. Our work answers this question by providing a mixing time method.
\end{rem}

\subsection{Worst-case state discrimination and its sample complexity}
Another setup of state discrimination considers the worst-case probability of failure \cite{DAriano2005Minimax}. Define 
\begin{align}
        \pwc(t)\coloneqq\max_{M_i^{(t)}}\min_{i}\tr\left(M_i^{(t)}\rho_i^{\otimes t}\right)
\end{align}
to be the worst-case discrimination success probability for $t$-copy ensemble, and
\begin{align}
    \kwc(\epsilon)\coloneqq\min\{t:\pwc(t)\geq1-\epsilon\}
\end{align}
to be the worst-case discrimination sample complexity. We prove that this sample complexity is stable under constant changes in the error threshold $\epsilon$:
\begin{prop} \label{proposition: worst-case sample complexity equivalence}
    For any $\epsilon\leq\delta\in(0,\frac{1}{2})$, we have
    \begin{align}
        \kwc(\epsilon)=\Theta_{\epsilon,\delta}\kwc(\delta).
    \end{align}
\end{prop}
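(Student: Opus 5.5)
Monotonicity gives one direction at once: since $\epsilon\le\delta$, any $t$ with $\pwc(t)\ge1-\epsilon$ also satisfies $\pwc(t)\ge1-\delta$, so $\kwc(\delta)\le\kwc(\epsilon)$. The real content is the reverse bound $\kwc(\epsilon)\le C_{\epsilon,\delta}\,\kwc(\delta)$. My plan is to prove it by classical amplification through a majority vote over independent blocks. This works for the worst-case criterion precisely because the guarantee $\pwc\ge1-\delta$ holds for \emph{every} index $i$, not merely on average.

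Let $t=\kwc(\delta)$ and fix a POVM $\{M_i^{(t)}\}$ with $\tr(M_i^{(t)}\rho_i^{\otimes t})\ge1-\delta>\frac12$ for all $i$. Given $mt$ copies $\rho_i^{\otimes mt}=(\rho_i^{\otimes t})^{\otimes m}$, I measure each block of $t$ copies independently with $\{M_j^{(t)}\}$. This produces i.i.d. outcomes $J_1,\dots,J_m$ with $\Pr[J_\ell=i]\ge1-\delta$. I then output the plurality value, breaking ties arbitrarily. The whole procedure is the POVM
\[
M_i^{(mt)}=\sum_{(j_1,\dots,j_m):\,\mathrm{plur}=i} M_{j_1}^{(t)}\otimes\cdots\otimes M_{j_m}^{(t)},
\]
which is a valid POVM on $mt$ copies.

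Next I bound its error. An error can occur only if at most $m/2$ of the blocks return the true index $i$. The count of correct blocks stochastically dominates $\mathrm{Bin}(m,1-\delta)$, so Hoeffding's inequality gives
\[
\Pr[\text{error}\mid i]\le \exp\!\bigl(-2m(\tfrac12-\delta)^2\bigr),
\]
uniformly in $i$. Choosing
\[
m=\left\lceil \frac{\log(1/\epsilon)}{2(\frac12-\delta)^2}\right\rceil
\]
gives $\pwc(mt)\ge1-\epsilon$, and hence $\kwc(\epsilon)\le m\,\kwc(\delta)$, with $m$ depending only on $\epsilon,\delta$. Together with the monotone direction this yields $\kwc(\epsilon)=\Theta_{\epsilon,\delta}(\kwc(\delta))$.

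The step that needs care is that the constant must not depend on $n$. A union bound or a majority-of-pairwise scheme would introduce $\log n$ factors. The plurality argument avoids this: it only uses the fact that the correct label, having probability above $\frac12$, must be the strict majority. So the hypothesis $\delta<\frac12$ is essential, and it is exactly what makes this work.
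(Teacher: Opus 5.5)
Your proposal is correct and follows essentially the same route as the paper: run an optimal $\kwc(\delta)$-copy worst-case measurement on $m$ independent blocks, aggregate the outcomes with a classical vote, and conclude $\kwc(\delta)\le\kwc(\epsilon)\le m(\epsilon,\delta)\,\kwc(\delta)$. The only difference is in the aggregation step. The paper takes the median over an odd number of blocks and uses a cited bound of the form $1-\frac{1}{2}(4\delta(1-\delta))^{m/2}$, whereas you take the plurality and apply Hoeffding's inequality; both work because any label occurring in more than $m/2$ blocks is both the median and the plurality.
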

\begin{proof}
    Let $M$ be an optimal $\kwc(\delta)-$copy measurement for worst-case discrimination, then applying it to $m$ independent groups of $\kwc(\delta)-$copy states gives outcome $Z=(Z_1,\ldots,Z_m)$ with $P_i(Z_r=i)\geq 1-\delta$ where $P_i$ denotes the conditional probability given the true state is $\rho_i$. Applying \cite[Proposition 2.2]{kunsch2019solvable} with interval $I=\{i\}$ for each $i$ gives
    \begin{align}
        \min_iP_i[\text{med}(Z)=i]\geq1-\frac{1}{2}(4\delta(1-\delta))^\frac{m}{2}.
    \end{align}
    Define
    \begin{align}
        \Pi_j\coloneqq\sum_{\substack{z\in[n]^m\\\text{med}(z)=j}}M_{z_1}\otimes\ldots\otimes M_{z_m},
    \end{align}
    which is positive and satisfies $\sum_j\Pi_j=\sum_zM_{z_1}\otimes\ldots\otimes M_{z_m}=(\sum_iM_i)^{\otimes m}=\one$, thus a POVM. Moreover,
    \begin{align}
        \tr(\Pi_j\rho_i^{\otimes m\kwc(\delta)})=\sum_{\text{med}(z)=j}\prod_r\tr(M_{z_r}\rho_i^{\otimes\kwc(\delta)})=\sum_{\text{med}(z)=j}P_i(Z=z)=P_i(\text{med}(Z)=j).
    \end{align}
    Choosing odd integer $m=m(\epsilon,\delta)$ such that $\frac{1}{2}(4\delta(1-\delta))^\frac{m}{2}\leq\epsilon$, we obtain
    \begin{align}
        \pwc(m\kwc(\delta))\geq\min_i\tr(\Pi_i\rho_i^{\otimes m\kwc(\delta)})=\min_iP_i(\text{med}(Z)=i)\geq1-\epsilon,
    \end{align}
    therefore $\kwc(\delta)\leq\kwc(\epsilon)\leq m(\epsilon,\delta)\kwc(\delta)$.
\end{proof}

Note that this equivalence of sample complexity does not hold in general for minimum-error state discrimination. A simple example is given by 
\begin{align} \label{eq: min-error simple counterexample}
    \rho_1=\ket{0}\bra{0},\quad\rho_2=\ket{1}\bra{1},\quad\rho_3=2^{-\frac{1}{n}}\ket{1}\bra{1}+(1-2^{-\frac{1}{n}})\ket{2}\bra{2}
\end{align}
where the optimal minimum-error discrimination success probability is $\psucc^*(t)=1-\frac{2^{-\frac{t}{n}}}{3}$, therefore $k_{\min}(\frac{1}{3})=1$ while $k_{\min}(\frac{1}{6})=n$. The reason is that $\rho_3$ is almost identical with $\rho_2$ and difficult to distinguish, but error threshold above $\frac{1}{3}$ permits sacrificing the third state entirely, which is not allowed in worst-case discrimination.

\subsection{Basics of classical Markov chain mixing}
For probability measures $\mu,\nu$ on a finite set $\Omega$, define \cite{LevinPeres2017MarkovChains}
\begin{align}
    d_{\TV}(\mu,\nu)
  \coloneqq\sup_{B\subseteq\Omega}|\mu(B)-\nu(B)|=\frac12\sum_{x\in\Omega}|\mu(x)-\nu(x)|
  .
\end{align}
If a Markov chain $K$ on $\Omega$ has stationary law $\pi$, define the (worst-case initial state) mixing time by
\begin{align}
    t_{\mix}(K,\epsilon)\coloneqq \min\{t:\sup_{x_0}d_{\TV}(\delta_{x_0}K^t,\pi)\le\epsilon\}
\end{align}
where $\mu K(z)\coloneqq\sum_{x\in\Omega}\mu(x)K(x,z)$.

A simple but important class of Markov process is called birth-death process, defined as a Markov chain on an ordered state space (usually $\mathbb Z_{\geq0}$) in which transitions are permitted only between neighboring states. The following estimate of mixing time for birth-death chains is \cite[Proposition 4.1]{chen2013comparison}:

\begin{prop} \label{prop: birth-death mixing}
    Let $K$ be a discrete-time irreducible birth-death chain on $\{0,\ldots,d\}$ with nonnegative eigenvalues $1,\lambda_1,\ldots,\lambda_d$. Then for any $\epsilon\in(0,\frac{1}{8})$,
    \begin{align}
        t_{\mix}(K,\epsilon)=\Theta_\epsilon(\sum_{j=1}^{d}\frac{1}{1-\lambda_j}).
    \end{align}
\end{prop}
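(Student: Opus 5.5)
The plan is to prove this through separation distance and the classical hitting-time representation for birth-death chains with nonnegative spectrum, and then convert separation to total variation. Write $m\coloneqq\sum_{j=1}^{d}\frac{1}{1-\lambda_j}$.

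\textbf{Step 1 (hitting times).} First, I would invoke the Keilson--Fill theorem. For an irreducible birth-death chain on $\{0,\ldots,d\}$ with nonnegative eigenvalues $1,\lambda_1,\ldots,\lambda_d$, the hitting time $T_d$ of $d$ from $0$ has the law of a sum of independent geometric variables on $\{1,2,\ldots\}$ with success parameters $1-\lambda_j$. In particular $\bE_0[T_d]=m$.

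\textbf{Step 2 (separation from the worst start).} By Diaconis--Fill, for a chain started at $0$ there is a strong stationary time that is stochastically equal to $T_d$, so the separation satisfies
\begin{align}
s_0(t)=\mathbb P_0(T_d>t).
\end{align}
For birth-death chains the endpoints are the worst starting states for separation, by a monotone coupling argument. Hence the worst-case separation mixing time is governed by the quantiles of a sum of independent geometric variables.

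\textbf{Step 3 (quantiles are $\Theta_\epsilon(m)$).} For the upper tail, Markov's inequality gives $\mathbb P(T_d>m/\epsilon)\le\epsilon$. The lower tail is the step I expect to be the main obstacle. Chebyshev's inequality fails when a single geometric term dominates the sum, so instead I would use the Laplace transform
\begin{align}
\bE e^{-sT_d}=\prod_j\frac{(1-\lambda_j)e^{-s}}{1-\lambda_je^{-s}}.
\end{align}
With $s=1/m$, each factor is at most $\bigl(1+\frac{c}{m(1-\lambda_j)}\bigr)^{-1}$. Taking logarithms, the product is bounded away from $1$ by a universal constant. A Chernoff-type bound then gives $\mathbb P(T_d\le cm)\le 1-c'$, and amplifying by a constant factor yields $\mathbb P(T_d\le c_\epsilon m)\le\epsilon$ for any target $\epsilon$. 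Therefore $t_{\mathrm{sep}}(\epsilon)=\Theta_\epsilon(m)$.

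\textbf{Step 4 (separation versus total variation).} The upper bound follows since $d_{\TV}\le s$, which gives $t_{\mix}(K,\epsilon)\le t_{\mathrm{sep}}(\epsilon)=O_\epsilon(m)$. For the lower bound I would use the reversible-chain inequality $s(2t)\le 1-(1-2\bar d(t))^2$ from Levin--Peres. It gives $t_{\mathrm{sep}}(\epsilon')\le 2t_{\mix}(K,\epsilon/2)$, where $\epsilon'=1-(1-2\epsilon)^2$ is a constant in $(0,1)$ whenever $\epsilon\in(0,\frac18)$. Hence $t_{\mix}(K,\epsilon)=\Omega_\epsilon(m)$.

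The restriction $\epsilon<\frac18$ is what keeps $\epsilon'$ bounded away from $1$ after accounting for the factor of $2$ relating $\bar d$ and $d$, so that Step 3 applies with constants depending only on $\epsilon$.
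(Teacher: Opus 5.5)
The paper does not prove this statement; it imports it from Chen--Saloff-Coste \cite[Proposition 4.1]{chen2013comparison}. Your overall architecture is the standard route: separation via strong stationary duality, a sum of geometrics, then $d\le s$ together with the reversible comparison $s(2t)\le 1-(1-\bar d(t))^2$. However, Step~1 is false as stated, and Step~2 inherits the error.

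Keilson's theorem (and Fill's discrete-time version) gives the law of the hitting time $T_d$ in terms of the spectrum of the chain with $d$ made absorbing, i.e.\ of $K$ restricted to $\{0,\dots,d-1\}$. It does not involve the spectrum of $K$ itself. Indeed, $T_d$ does not depend on the transitions out of $d$, while the $\lambda_j$ do. Already for $d=1$ and $K=\begin{pmatrix}1-p&p\\ q&1-q\end{pmatrix}$ with $p+q\le1$, one has $\lambda_1=1-p-q$ and $m=1/(p+q)$, but $\bE_0[T_1]=1/p$. A direct computation gives $s_0(t)=(1-p-q)^t\ne(1-p)^t=\mathbb P_0(T_1>t)$. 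For the lazy Ehrenfest chain, $\bE_0[T_d]$ is exponential in $d$ while $m\asymp d\log d$, so the Markov bound of Step~3 applied to the true hitting time does not yield $O(m)$.

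The random variable you want is the absorption time $T^*$ of the Diaconis--Fill dual. It is a strong stationary time with halting state $d$, so $s_0(t)=\mathbb P(T^*>t)$. The dual is a birth--death chain whose transition matrix is similar to $K$ via the invertible triangular link, so Keilson's theorem applied to the dual gives $T^*\sim\sum_j\mathrm{Geom}(1-\lambda_j)$. One only has $T^*\le T_d$. With $T^*$ in place of $T_d$, your Step~3 works. Take $s=a/m$ and use $\log(1+ax)\ge x\log(1+a)$ for $x=\frac{1}{m(1-\lambda_j)}\in(0,1]$. This gives $\bE e^{-sT^*}\le(1+a)^{-1}$, hence $\mathbb P(T^*\le m/a)\le e/(1+a)$, which is the amplification you need. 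The lower bound in Step~4 then only uses $s(2t)\ge s_0(2t)$.

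The upper bound has a second gap: you need $\max_x s_x(t)=s_0(t)$, and monotone coupling does not give this. Stochastic monotonicity yields $K^t(x,d)\ge K^t(0,d)$ and $K^t(x,0)\ge K^t(d,0)$. That controls only the targets $y\in\{0,d\}$, not $\min_y K^t(x,y)/\pi(y)$ over interior $y$. Bounding the worst-case total variation by the coalescence time of the monotone coupling does not help either, since it brings back $T_d$, which can be exponentially larger than $m$.

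What does work is total positivity. $K$ is diagonally similar to a positive semidefinite tridiagonal matrix with nonnegative entries, hence totally nonnegative. The variation-diminishing property of $K^t$, applied to $\delta_x/\pi(x)-c$, shows that $y\mapsto K^t(x,y)/\pi(y)$ is unimodal, so its minimum is at $y\in\{0,d\}$. Monotonicity and reversibility then reduce everything to the pair $(0,d)$; alternatively, cite this endpoint property from the separation-cutoff literature. Once both points are repaired, $d(t)\le s(t)=\mathbb P(T^*>t)$ gives $t_{\mix}(K,\epsilon)\le m/\epsilon$, and your argument closes.
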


We will also use the following inequality:
\begin{lem} \label{lemma: total variation chain inequality}
    Let $Y=(Y_1,\ldots,Y_s)$ be a random vector whose entries $Y_i$ take values in $B_i$.
    Then,
    \begin{align}
        d_{\TV}(\law(Y),\Unif(B_1\times\ldots\times B_s))\leq\sum_{i=1}^{s}\bE d_{\TV}(\law(Y_i|Y_{<i}),\Unif(B_i)).
    \end{align}
    In particular, if $Y_{<i}$ is measurable with respect to a sigma-algebra $\cF_{i-1}$, then the $i$-th summand is at most $\bE d_{\TV}(\law(Y_i|\cF_{i-1}),\Unif(B_i))$.
\end{lem}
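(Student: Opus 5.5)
We prove the statement by a telescoping (hybrid) argument over the coordinates, combined with the triangle inequality for total variation. Write $\mu=\law(Y)$ and $\pi=\Unif(B_1\times\ldots\times B_s)=\pi_1\otimes\ldots\otimes\pi_s$, where $\pi_i=\Unif(B_i)$. For $i=0,\ldots,s$ define the hybrid measures
\begin{align}
    \nu_i\coloneqq \law(Y_1,\ldots,Y_i)\otimes\pi_{i+1}\otimes\ldots\otimes\pi_s .
\end{align}
Then $\nu_s=\mu$ and $\nu_0=\pi$, so the triangle inequality gives
\begin{align}
    d_{\TV}(\mu,\pi)\leq\sum_{i=1}^{s}d_{\TV}(\nu_i,\nu_{i-1}).
\end{align}

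Next we bound each summand. The measures $\nu_i$ and $\nu_{i-1}$ share the same trailing factor $\pi_{i+1}\otimes\ldots\otimes\pi_s$, and total variation is unchanged by tensoring both arguments with a common probability measure. Hence
\begin{align}
    d_{\TV}(\nu_i,\nu_{i-1})=d_{\TV}\big(\law(Y_{\le i}),\law(Y_{<i})\otimes\pi_i\big).
\end{align}
Both of these measures have first marginal $\law(Y_{<i})$. Disintegrating along $y_{<i}$ and using $d_{\TV}=\tfrac12\sum|\cdot|$, we get
\begin{align}
    \tfrac12\sum_{y_{<i}}\mathbb P(Y_{<i}=y_{<i})\sum_{b\in B_i}\big|\mathbb P(Y_i=b\mid Y_{<i}=y_{<i})-\pi_i(b)\big| =\bE\, d_{\TV}(\law(Y_i|Y_{<i}),\pi_i).
\end{align}
Summing over $i$ yields the main inequality.

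Finally, for the refinement, suppose $Y_{<i}$ is $\cF_{i-1}$-measurable. Then
\begin{align}
    \law(Y_i|Y_{<i})=\bE[\law(Y_i|\cF_{i-1})\mid Y_{<i}].
\end{align}
Since $\mu\mapsto d_{\TV}(\mu,\pi_i)$ is convex, conditional Jensen gives
\begin{align}
    d_{\TV}(\law(Y_i|Y_{<i}),\pi_i)\le \bE[d_{\TV}(\law(Y_i|\cF_{i-1}),\pi_i)\mid Y_{<i}].
\end{align}
Taking expectations completes the proof.

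The only points requiring care are the invariance of total variation under tensoring with a common factor, and the conditional Jensen step when the spaces are finite. Both are elementary, so no step here presents a real obstacle.
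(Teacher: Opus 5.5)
Your proof is correct, and the paper gives no argument of its own to compare it with: it disposes of the whole lemma in one line by citing Fact A.6 of Assadi et al.\ (2025). You instead give the standard self-contained hybrid proof. You interpolate through $\nu_i=\law(Y_{\le i})\otimes\pi_{i+1}\otimes\cdots\otimes\pi_s$, use that tensoring both arguments with the common tail $\pi_{i+1}\otimes\cdots\otimes\pi_s$ leaves total variation unchanged, and disintegrate over $Y_{<i}$.

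Your route has two advantages over the citation.
\begin{itemize}
\item It shows that each hybrid step is an equality, $d_{\TV}(\nu_i,\nu_{i-1})=\bE\, d_{\TV}(\law(Y_i|Y_{<i}),\Unif(B_i))$. The only loss in the lemma therefore comes from the triangle inequality across hybrids.
\item It proves the ``in particular'' clause, which the paper does not justify separately. Since $\sigma(Y_{<i})\subseteq\cF_{i-1}$, the tower property gives $\law(Y_i|Y_{<i})=\bE[\law(Y_i|\cF_{i-1})\mid Y_{<i}]$. Convexity of $\mu\mapsto d_{\TV}(\mu,\Unif(B_i))$ then gives the bound; for finite $B_i$ this is just the triangle inequality inside $\tfrac12\sum_b|\cdot|$.
\end{itemize}
The second point matters for the paper. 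In the proof of \Cref{thm: total variation upper bound for triangular feature map}, the lemma is applied with the larger $\sigma$-algebra $\cF_{i-1}=\sigma(X_j^1,\ldots,X_j^{i-1}:1\le j\le t)$, not with $\sigma(Z_{t,<i})$. So your Jensen step is exactly what justifies that application. The citation's only advantage is brevity.
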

\begin{proof}
    This directly follows from \cite[Fact A.6]{assadi2025rounds}.
\end{proof}

\subsection{Dobrushin coefficients}
\subsubsection{Classical Dobrushin coefficients}
Given a Markov kernel $K$ on $\Omega$, the Dobrushin coefficient $\kappa(K)$ is defined as \cite{bremaud1999markov,dobrushin1956central1, dobrushin1956central2}
\begin{align}
    \kappa(K)\coloneqq\max_{(i,j)\in\Omega^2}d_{\TV}(K(i,\cdot),K(j,\cdot)).
\end{align}
It characterizes the contraction rate of $P$ because
\begin{align}
    \kappa(K)=\sup_{\mu\neq\nu}\frac{d_{\TV}(\mu K,\nu K)}{d_{\TV}(\mu,\nu)}.
\end{align}
It ensures geometric convergence to a stationary distribution $\pi$ when $\kappa(K)<1$ because $d_{\TV}(\mu K^t,\pi)\leq\kappa(K)^td_{\TV}(\mu,\pi)$. However, an ergodic chain may have $\kappa(K)=1$, in which case this single-step coefficient fails to capture contraction despite eventual convergence at later times. Wolfer defines the generalized Dobrushin contraction coefficient as \cite{wolfer2020mixing}
\begin{align}
    \kgen(K)\coloneqq1-\max_{s\in\mathbb{N}}\frac{1-\kappa(K^s)}{s},
    \label{eq:wolfer-generalized-dobrushin}
\end{align}
and proves that it characterizes the mixing time up to universal constants:

\begin{thm} \cite[Theorem 1]{wolfer2020mixing}
    Let $\epsilon\in (0, 1/2)$, and $K$ ergodic with mixing time $t_{\mix}(K,\epsilon)$. Then
    \begin{align}
        \frac{1-2\epsilon}{1-\kgen(K)}\leq t_{\mix}(K,\epsilon)\leq\frac{1-\log\epsilon}{1-\kgen(K)}.
    \end{align}
\end{thm}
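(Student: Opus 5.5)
The proof treats the two inequalities separately. Both rest on two standard facts about the Dobrushin coefficient. First, it is submultiplicative: $\kappa(K^{a+b})\le\kappa(K^a)\kappa(K^b)$, which follows from the contraction characterization $\kappa(K)=\sup_{\mu\neq\nu}d_{\TV}(\mu K,\nu K)/d_{\TV}(\mu,\nu)$. Second, it is sandwiched by the worst-case distance to stationarity. Write $d(t)\coloneqq\sup_{x_0}d_{\TV}(\delta_{x_0}K^t,\pi)$. Since $\pi=\pi K^t=\sum_y\pi(y)\delta_yK^t$ is a convex combination of rows, convexity of $d_{\TV}$ gives $d(t)\le\kappa(K^t)$. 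The triangle inequality through $\pi$ gives $\kappa(K^t)\le 2d(t)$.

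We also check that the maximum in \eqref{eq:wolfer-generalized-dobrushin} is attained. Ergodicity gives $d(t)\to0$, hence $\kappa(K^s)<1$ for some $s$, so $1-\kgen(K)>0$. Since $0\le 1-\kappa(K^s)\le1$, the ratio $(1-\kappa(K^s))/s\to0$, so the supremum is a maximum over finitely many relevant $s$.

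\emph{Lower bound.} Let $t=t_{\mix}(K,\epsilon)$, so $d(t)\le\epsilon$ and therefore $\kappa(K^t)\le2\epsilon$. Plugging $s=t$ into the definition of $\kgen$ gives
\begin{align}
1-\kgen(K)\ge\frac{1-\kappa(K^t)}{t}\ge\frac{1-2\epsilon}{t},
\end{align}
which rearranges to the claimed lower bound.

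\emph{Upper bound.} Let $s$ attain the maximum, so that $\kappa(K^s)=1-s(1-\kgen(K))$. Because $\kappa(K^s)\ge0$, this forces $s\le 1/(1-\kgen(K))$. For any $t$, write $t=ms+r$ with $0\le r<s$. Submultiplicativity and $\kappa\le1$ then give
\begin{align}
d(t)\le\kappa(K^t)\le\kappa(K^s)^m\le e^{-ms(1-\kgen)}\le e^{-(t-s)(1-\kgen)}\le e^{1-t(1-\kgen)}.
\end{align}
The final step uses $s(1-\kgen)\le1$. The right-hand side is at most $\epsilon$ as soon as $t\ge(1-\log\epsilon)/(1-\kgen(K))$, which bounds $t_{\mix}(K,\epsilon)$ as claimed.

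The argument is short. The only delicate point is the upper bound: the loss from the remainder $r<s$ in the block decomposition must be absorbed into the additive constant $1$ in the numerator. This works precisely because the optimal block length satisfies $s(1-\kgen)\le1$. If that bound were unavailable, I would instead take the best $s$ among those with $s(1-\kgen)\le1$.
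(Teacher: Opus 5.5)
Your argument is correct and matches the paper's approach. The paper only cites Wolfer for this statement, but it proves its quantum analogue by the same method: the lower bound comes from $\kappa(K^t)\le 2d(t)\le 2\epsilon$ substituted at $s=t$, and the upper bound comes from an optimal block length $s$, which automatically satisfies $s(1-\kgen(K))\le 1$, together with geometric contraction over $m$ blocks.

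There is one small rounding point. Your final estimate $d(t)\le e^{1-t(1-\kgen(K))}$ only yields $t_{\mix}(K,\epsilon)\le\lceil(1-\log\epsilon)/(1-\kgen(K))\rceil$. Since $t_{\mix}$ is an integer, this is weaker than the stated bound whenever the right-hand side is not an integer. Either of two fixes removes the ceiling:
\begin{itemize}
\item Evaluate only at multiples $t=ms$ with $m=\lceil-\log\epsilon/(s(1-\kgen(K)))\rceil$, so that $ms\le s-\log\epsilon/(1-\kgen(K))\le(1-\log\epsilon)/(1-\kgen(K))$. This is what the paper does in the quantum case.
\item Keep your general $t=ms+r$ and use the integer bound $ms\ge t-s+1$ in the remainder step. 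This gives $d(t)\le e^{1-(t+1)(1-\kgen(K))}$, and hence $t_{\mix}(K,\epsilon)\le\lceil(1-\log\epsilon)/(1-\kgen(K))\rceil-1$.
\end{itemize}
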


\subsubsection{Quantum Dobrushin coefficient} \label{subsubsection: Quantum Dobrushin coefficient}
Let $\cA$ be a finite-dimensional unital $C^*$-subalgebra and $T\colon \cA\to\cA$ a quantum channel. \textcite{HiaiRuskai2016} define the quantum Dobrushin coefficient as
\begin{align} \label{eq: defn of quantum Dobrushin coefficient}
    \kappa(T)\coloneqq\sup_{\rho\neq\sigma\in\cD(\cA)}\frac{\|T(\rho)-T(\sigma)\|_1}{\|\rho-\sigma\|_1}
\end{align}
with equivalent forms
\begin{align}
    \kappa(T)=\frac{1}{2}\sup_{\rho,\sigma\in\cD(\cA)}\|T(\rho)-T(\sigma)\|_1=\sup_{\substack{B=B^\dagger\in\cA\\\tr(B)=0\\B\neq 0}}\frac{\|T(B)\|_1}{\|B\|_1}.
\end{align}
In this article, we generalize Wolfer's definition \eqref{eq:wolfer-generalized-dobrushin} of the generalized Dobrushin coefficient to the quantum setting:
\begin{align}
    \kgen(T)\coloneqq1-\sup_{s\in\mathbb{N}}\frac{1-\kappa(T^s)}{s}.
\end{align}
Given a stationary state $\omega$ such that $T(\omega)=\omega$, we define the (worst-case initial-state) quantum homogeneous mixing time \cite{George2026QuantumDoeblinCoefficients} of $T$ by
\begin{align} \label{eq: worst case mixing time for generic channel}
    t_{\mix}(T,\omega,\epsilon)\coloneqq\min\{t:\sup_{\rho\in\cD(\cA)}\|T^t(\rho)-\omega\|_1\leq\epsilon\}
\end{align}
and establish the following two-sided bounds on this quantum mixing time in terms of the generalized quantum Dobrushin coefficient:
\DobrushinTight*
\begin{proof}
    Let $d_s(\rho)=\frac{1}{2}\|T^s(\rho)-\omega\|_1$ and $\kappa_s=\kappa(T^s)$. For any $m$ and $s$, since $T(\omega)=\omega$, by \cref{eq: defn of quantum Dobrushin coefficient}
    \begin{align}
        d_{ms}(\rho)\leq\kappa_sd_{(m-1)s}(\rho)\leq\ldots\leq\kappa_s^m d_0(\rho)\leq e^{-m(1-\kappa_s)}.
    \end{align}
    Since $\kgen(T)<1$, there exists a finite $s$ such that $\kgen(T)=1-\frac{1-\kappa_s}{s}$. Choose $m=\lceil \frac{\log(\frac{2}{\epsilon})}{1-\kappa_s}\rceil$, then 
    \begin{align}
        \|T^{ms}(\rho)-\omega\|_1=2d_{ms}(\rho)\leq\epsilon.
    \end{align}
    Since $s\leq\frac{1}{1-\kgen(T)}$, we have
    \begin{align}
        t_{\mix}(T,\omega,\epsilon)\leq ms\leq s+\frac{s\log(\frac{2}{\epsilon})}{1-\kappa_s}\leq\frac{1+\log(\frac{2}{\epsilon})}{1-\kgen(T)}.
    \end{align}
    For the lower bound, let $t=t_{\mix}(T,\omega,\epsilon)$.
    Then by the triangle inequality,
    \begin{align}
        \kappa_t\leq 2\sup_{\rho\in\cD(\cA)}d_t(\rho)\leq\epsilon,
    \end{align}
    and therefore
    \begin{align}
        1-\kgen(T)\geq\frac{1-\kappa(T^t)}{t}\geq\frac{1-\epsilon}{t},
    \end{align}
    which concludes the proof.
\end{proof}

\subsection{Fourier analysis on finite abelian groups}
We follow the treatment in \cite{Rudin1990Fourier}. Let $(A,+)$ be a finite abelian group. Its character group is
\begin{align}
    \hat A\coloneqq \{\chi\colon A\to\bC^\times:\chi(a+b)=\chi(a)\chi(b)\}
\end{align}
with group operation given by $(\chi_1+\chi_2)(a)=\chi_1(a)\chi_2(a)$. For $f\colon A\to\bC$ define
\begin{align}
    \hat f(\chi)\coloneqq \frac1{|A|}\sum_{a\in A}f(a)\overline{\chi(a)}.
\end{align}
The Fourier inversion formula says
\begin{align}
    f(a)=\sum_{\chi\in\hat A}\hat f(\chi)\chi(a).
\end{align}
For probability measures $p,q$ on $\hat A$, denote
\begin{align}
    \hat p(a)\coloneqq \sum_{\chi\in\hat A}p(\chi)\chi(a),\qquad(p*q)(\eta)=\sum_{\chi+\psi=\eta}p(\chi)q(\psi).
\end{align}
Then
\begin{align}\label{eq:fourier-convolution-product}
  \widehat{p*q}(a)=\sum_\eta\sum_{\chi+\psi=\eta}p(\chi)q(\psi)\eta(a)=\sum_{\chi,\psi}p(\chi)q(\psi)\chi(a)\psi(a)=\hat p(a)\hat q(a).
\end{align}
We will make use of Bochner's Theorem for finite abelian groups:

\begin{thm}[Bochner]\cite[Theorem 19]{Bell2015Gelfand}\label{theorem: Bochner}
    A function $\alpha\colon A\to\bC$ is positive semidefinite (i.e., the matrix $K_\alpha(x,y)\coloneqq \alpha(x-y)$ is positive semidefinite) and satisfies $\alpha(0)=1$ if and only if there is a probability distribution $p$ on $\widehat A$ such that $\alpha=\hat{p}$.
    The distribution is unique and is given by $p=\hat{\alpha}$.
\end{thm}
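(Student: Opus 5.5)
The plan is to prove both directions by diagonalizing the matrix $K_\alpha$ in the basis of characters, using only the orthogonality of characters and the Fourier inversion formula stated above. Throughout I use the identity $\chi(x-y)=\chi(x)\overline{\chi(y)}$, which holds because $|\chi(y)|=1$ for a character of a finite group.

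\emph{($\Leftarrow$) If $\alpha=\hat p$, then $\alpha$ is positive semidefinite with $\alpha(0)=1$.} Let $p$ be a probability distribution on $\hat A$ and $\alpha=\hat p$. Then
\begin{align}
K_\alpha(x,y)=\sum_{\chi\in\hat A}p(\chi)\chi(x-y)=\sum_{\chi}p(\chi)\,\chi(x)\overline{\chi(y)}.
\end{align}
So $K_\alpha=\sum_\chi p(\chi)\,v_\chi v_\chi^\dagger$ with $v_\chi=(\chi(x))_{x\in A}$. This is a nonnegative combination of rank-one positive semidefinite matrices, hence positive semidefinite. Moreover $\alpha(0)=\sum_\chi p(\chi)=1$.

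\emph{($\Rightarrow$) If $\alpha$ is positive semidefinite with $\alpha(0)=1$, then $\alpha=\hat p$ for a probability distribution $p$.} I would show that each $v_\chi$ is an eigenvector of $K_\alpha$. Substituting $z=x-y$,
\begin{align}
(K_\alpha v_\chi)(x)=\sum_y\alpha(x-y)\chi(y)=\sum_z\alpha(z)\chi(x)\overline{\chi(z)}=|A|\,\hat\alpha(\chi)\,\chi(x).
\end{align}
Hence $|A|\hat\alpha(\chi)$ are the eigenvalues of $K_\alpha$. Positive semidefiniteness then forces $\hat\alpha(\chi)\ge0$ for every $\chi$. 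Set $p\coloneqq\hat\alpha$. Fourier inversion gives $\alpha(a)=\sum_\chi p(\chi)\chi(a)=\hat p(a)$. Evaluating at $a=0$ gives $\sum_\chi p(\chi)=\alpha(0)=1$, so $p$ is a probability distribution.

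\emph{Uniqueness.} Suppose $\hat p=\hat q$. Then $\sum_\chi (p-q)(\chi)\chi=0$, and the characters are linearly independent: they are orthogonal in $\ell^2(A)$, with $\frac1{|A|}\sum_a\chi(a)\overline{\psi(a)}=\delta_{\chi\psi}$. Hence $p=q$. Taking this inner product against $\chi$ also recovers $p(\chi)=\hat\alpha(\chi)$ explicitly.

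The only step that needs care is the eigenvector computation. It requires keeping the complex conjugations consistent with the paper's conventions, namely $\hat f(\chi)=\frac1{|A|}\sum_a f(a)\overline{\chi(a)}$ and $\hat p(a)=\sum_\chi p(\chi)\chi(a)$. The argument also uses the standard fact that a Hermitian matrix is positive semidefinite exactly when its eigenvalues in an orthogonal eigenbasis are nonnegative. Here the $v_\chi$ form such a basis, since there are $|\hat A|=|A|$ of them and they are mutually orthogonal.
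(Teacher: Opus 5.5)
Your proof is correct and complete. The paper gives no proof of this statement; it imports it as a black box from \cite[Theorem 19]{Bell2015Gelfand}, so there is no in-paper argument to compare against. Your argument is the standard finite-group proof, and it supplies exactly what the paper assumes. The key step is that $K_\alpha$ is diagonalized by the characters $v_\chi$, with eigenvalues $|A|\hat\alpha(\chi)$. Hence positive semidefiniteness is equivalent to $\hat\alpha\ge 0$ pointwise. Fourier inversion and $\alpha(0)=\sum_\chi\hat\alpha(\chi)$ then finish the proof, and orthogonality of characters gives uniqueness.

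Your conjugation conventions match the paper's definitions of $\hat f$ and $\hat p$. One minor point: the forward direction only needs the trivial implication that a positive semidefinite matrix has nonnegative eigenvalue on every eigenvector. The closing remark about an orthogonal eigenbasis is therefore not actually used.
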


\subsection{Graph states and hypergraph states} \label{subsection: hypergraph state intro}
A hypergraph is a pair $\Gamma=(V,E)$ where $V$ is a finite vertex set and $E$ is a set of nonempty subsets of $V$.  Elements $e\in E$ are called hyperedges. For a hyperedge $e\subseteq V$, define the gate $C_e$ by
\begin{align}
  C_e\ket{x}=(-1)^{\prod_{v\in e}x_v}\ket{x},
  \qquad x\in\bF_2^V
\end{align}
which are diagonal and commute with one another; then the hypergraph state $\ket{\Gamma}$ is defined by \cite{Rossi2013QuantumHypergraphStates}
\begin{align}
  \ket{\Gamma}\coloneqq \prod_{e\in E}C_e\,\ket{+}^{\otimes |V|}.
\end{align}

Graph states are the degree-$2$ part of the hypergraph-state family where all hyperedges have size $2$, which makes $C_e=CZ_e$. Graph states are stabilizer states because they are obtained from $|+\rangle^{\otimes n}$ by Clifford gates $CZ_e$. In contrast, a hyperedge gate of size $3$ or more, such as $\text{CCZ}$, is generally not a Clifford gate.  Therefore general hypergraph states are usually not stabilizer states, although they still have a generalized stabilizer formalism and are naturally described by phase polynomials \cite{Rossi2013QuantumHypergraphStates}.

\section{Sample complexity as a mixing time for geometrically uniform pure state ensembles} \label{section: GU ensemble}
Given a finite group $G$ with a unitary representation $U_g$ on a Hilbert space $\cH$ and a generator state $\ket{\psi}\in\cH$, we consider the geometrically uniform (GU) ensemble generated by $G$ and $|\psi\rangle$, that is, the uniform ensemble consisting of the states $\ket{\psi_g}\coloneqq U_g\ket{\psi}$ \cite{Eldar2004OptimalDetection, zhou2025distinguishability}. The Gram matrix of the ensemble $(\frac{1}{|G|},|\psi_g\rangle)_{g\in G}$ has matrix elements $X_{gh}=\bra{\psi}U(g^{-1}h)\ket{\psi}$ and the $t$-copy Gram matrix is given by $X_t=|G|\Phi_X^t(\tau)$, where $\Phi_X(Y)\coloneqq X\circ Y$ is the Hadamard channel defined by $X$ with $\circ$ denoting the Schur product and $\tau\coloneqq\frac{1}{|G|}\mathbb J$ is the normalized state of the all-one matrix $\mathbb J$. An important property of pure state GU ensembles is that the PGM \eqref{eq:pretty-good-measurement} is optimal and has the closed form formula given by $p_{\mathrm{PGM}}=\frac{1}{|G|^2}(\tr\sqrt{X})^2$ \cite{Eldar2004OptimalDetection, zhou2025distinguishability}, so the $t$-copy optimal success probability satisfies
\begin{align}
    \psucc^*(t)=p_{\mathrm{PGM}}(t)=\frac{1}{|G|^2}\left(\tr\sqrt{X_t}\right)^2=F(\frac{1}{|G|}X_t,\omega),
\end{align}
with $\omega\coloneqq\frac{1}{|G|}\one$ the completely mixed state. Therefore the sample complexity can be written as
\begin{align} \label{eq: sample complexity as fidelity mixing}
    k_{\min}(\epsilon)=\min\{t:F(\Phi_X^t(\tau),\omega)\geq1-\epsilon\},
\end{align}
which coincides with the fidelity-based mixing time of the Hadamard channel $\Phi_X$ for the fixed initial state $\tau$ and the stationary state $\omega$.

Let $H\coloneqq \{h\in G:U_h\ket{\psi}=\ket{\psi}\}$ be the stabilizer subgroup of the generator state $|\psi\rangle$. Since the states are indistinguishable within the same coset $gH$, the meaningful problem reduces to discriminating the states $\{\ket{\psi_{gH}}\}$ labeled by $G/H$. From now on we set $n=|G/H|$ and refer to $k_{\min}$ as the sample complexity of discriminating the states $\ket{\psi_{gH}}$. The Gram matrix $X$ now lives in $\End(\bC[G/H])$, and previous arguments for $k_{\min}(\epsilon)=\min\{t:F(\Phi_X^t(\tau_n),\omega_n)\geq1-\epsilon\}$ still apply. We can write $X=\sum_{j=0}^{d}\alpha_jA_j$ where $g_j$ are the double coset representatives and
\begin{align} \label{eq: matrix A_r}
    \alpha_j=\bra{\psi}U(g_j)\ket{\psi};\quad (A_j)_{xH,yH}\coloneqq 1\{x^{-1}y\in Hg_jH\}.
\end{align}
Note that $T\in\End_G(\bC[G/H])\Leftrightarrow T_{gxH,gyH}=T_{xH,yH}$ $\forall\:g,x,y\in G$,
and $\{(xH,yH):x^{-1}y\in Hg_jH\}$ are exactly the $G$-orbits of coset pairs, therefore we can write the Hecke algebra as
\begin{align}
    \cA\coloneqq\End_G(\bC[G/H])=\left\{T=\sum\nolimits_j c_jA_j:c_j\in\bC\right\}=\text{Span}\{A_j\}.
\end{align}
In particular, the Gram matrix $X\in\cA$ and $\Phi_X$ preserves $\cA$.

Now consider the quantum homogeneous mixing time \cite{George2026QuantumDoeblinCoefficients} of $\Phi_X$ for the fixed initial state $\tau_n$ and the stationary state $\omega_n$, which quantifies convergence in terms of trace distance and is defined by
\begin{align} \label{eq: mixing time for GU pure states}
    t_{\min}(\epsilon)=\min\{t:\|\Phi_X^t(\tau_n)-\omega_n\|_1\leq\epsilon\}.
\end{align}
By Fuchs-van-de-Graaf inequalities relating fidelity and trace distance, we have
\begin{align}
    \frac{d_t^2}{4}\leq 1-F(\Phi_X^t(\tau_n),\omega_n)\leq d_t-\frac{d_t^2}{4}
\end{align}
where $d_t\coloneqq\|\Phi_X^t(\tau_n)-\omega_n\|_1$.
Therefore,
\begin{align} \label{eq: sandwiched bound for GU sample complexity}
    t_{\min}(2\sqrt{\epsilon})\leq k_{\min}(\epsilon)\leq t_{\min}(2-2\sqrt{1-\epsilon}).
\end{align}
For GU ensembles, the optimal average success probability coincides with the optimal worst-case success probability: For any POVM $\{\Pi_{gH}\}$, the averaged measurement $\overline{\Pi}_{gH}\coloneqq\frac{1}{|G|}\sum_{k\in G}U_k\Pi_{k^{-1}gH}U_k^\dagger$ achieves the same average success probability with identical conditional success probability for every label. Therefore, by \Cref{proposition: worst-case sample complexity equivalence}, the two-sided bound \cref{eq: sandwiched bound for GU sample complexity} becomes a tight estimate $k_{\min}=\Theta_\epsilon(t_{\min})$, and the estimation of discrimination sample complexity reduces to a mixing time problem.

\subsection{Tight estimate by Quantum Dobrushin coefficient}
\label{sec:tight-estimate-by-quantum-dobrushin-coefficient}
We use the definitions of the quantum Dobrushin coefficient and mixing time in \Cref{subsubsection: Quantum Dobrushin coefficient}. To apply \cref{eq: worst case mixing time for generic channel} to study $t_{\min}$ and sample complexity, we consider the stationary state $\omega_n\coloneqq\frac{1}{n}\one\in\cA$ and the channel $T\coloneqq\Phi_X\big|_\cA$ where we use the fact that $\Phi_X$ is invariant on $\cA$ since $X\in\cA$. Then \Cref{theorem: quantum Dobrushin estimate} shows that, if $\epsilon\in(0,\frac{1}{2})$ and $\kgen(T)<1$, then
    \begin{align} \label{eq: GU Dobrushin sandwiched bound}
        \frac{1-\epsilon}{1-\kgen(T)}\leq t_{\mix}(T,\omega_n,\epsilon)\leq\frac{1+\log(\frac{2}{\epsilon})}{1-\kgen(T)}.
    \end{align}

This $t_{\mix}$ is a worst-case initial-state mixing time while the sample complexity is sandwiched bounded by the fixed-initial-state mixing time $t_{\min}$ in \cref{eq: sandwiched bound for GU sample complexity}. However, in the GU case the initial state $\tau_n=\frac{1}{n}\mathbb J$ happens to be the worst initial state:

\begin{prop} \label{prop: GU worst initial state}
    For every $t\in\mathbb{N}$,
    \begin{align}
        \sup_{\rho\in\cD(\cA)}\|T^t(\rho)-\omega_n\|_1=\|T^t(\tau_n)-\omega_n\|_1.
    \end{align}
\end{prop}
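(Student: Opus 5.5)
The idea is to show that every state in $\cD(\cA)$ is obtained from $\tau_n$ by a Hadamard channel that fixes $\omega_n$. Trace-norm contractivity then makes $\tau_n$ the worst initial state.

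First I would pin down the diagonal of an arbitrary $\rho\in\cD(\cA)$. Write $\rho=\sum_j c_jA_j$ as in \eqref{eq: matrix A_r}. Take $g_0=e$ as the representative of the trivial double coset $H$. Then $(A_0)_{xH,yH}=1\{x^{-1}y\in H\}$, so $A_0=\one$. For $j\neq0$ the diagonal of $A_j$ vanishes, since $x^{-1}x=e\in H$ and so $x^{-1}x\notin Hg_jH$. Hence every diagonal entry of $\rho$ equals $c_0$, and $\tr\rho=1$ forces $c_0=1/n$. The same argument applies to $X$ and to $X_t=X^{\circ t}$: their diagonals are identically $1$, since $\alpha_0=\braket{\psi|\psi}=1$.

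Next I would factor $T^t(\rho)$ through $\tau_n$. Using $n\tau_n=\mathbb J$, which is the identity for the Schur product,
\begin{align}
    T^t(\rho)=X^{\circ t}\circ\rho=(n\rho)\circ\big(X^{\circ t}\circ\tau_n\big)=\Phi_{n\rho}\big(T^t(\tau_n)\big).
\end{align}
The matrix $n\rho$ is positive semidefinite with all diagonal entries equal to $1$. It is therefore a correlation matrix, and $\Phi_{n\rho}$ is a (unital) Hadamard channel. For the same reason $\Phi_{n\rho}(\omega_n)=(n\rho)\circ\tfrac1n\one=\omega_n$. Consequently
\begin{align}
    T^t(\rho)-\omega_n=\Phi_{n\rho}\big(T^t(\tau_n)-\omega_n\big).
\end{align}
Quantum channels are contractive in trace norm on Hermitian operators, so $\|T^t(\rho)-\omega_n\|_1\le\|T^t(\tau_n)-\omega_n\|_1$.

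Finally, $\tau_n=\tfrac1n\mathbb J=\tfrac1n\sum_jA_j$ lies in $\cD(\cA)$, so it is itself a feasible initial state and the supremum is attained at $\tau_n$. This gives equality.

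The only delicate step is the diagonal computation, namely that the identity is exactly the basis element $A_0$ and all other $A_j$ are off-diagonal. This is what makes $n\rho$ a unit-diagonal PSD matrix and hence a legitimate Schur-multiplier channel. Everything else is routine.
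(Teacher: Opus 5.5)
Your proof is correct and follows the paper's argument: you write $T^t(\rho)=\Phi_{n\rho}(T^t(\tau_n))$ with $n\rho$ a unit-diagonal positive semidefinite matrix, note that $\Phi_{n\rho}(\omega_n)=\omega_n$, and conclude by trace-norm contractivity. The differences are cosmetic: you obtain the constant diagonal from the basis $\{A_j\}$ (with $A_0=\one$) rather than from $G$-invariance together with transitivity on $G/H$, and you explicitly check that $\tau_n\in\cD(\cA)$ to get equality, which the paper leaves implicit.
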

\begin{proof}
    For any $\rho\in\cD(\cA)$ we have $\rho_{xx}=\rho_{gx,gx}$ for all $g\in G,x\in G/H$, and therefore $\rho_{xx}=\frac{1}{n}$ for all $x\in G/H$. Let $Y=n\rho$ and $\Phi_Y$ be the Hadamard channel.
    Then $\Phi_Y(\omega_n)=\omega_n$ and $\Phi_X^t(\rho)=\Phi_Y\Phi_X^t(\tau_n)$, and therefore
    \begin{align}
        \|\Phi_X^t(\rho)-\omega_n\|_1=\|\Phi_Y\Phi_X^t(\tau_n)-\Phi_Y(\omega_n)\|_1\leq\|\Phi_X^t(\tau_n)-\omega_n\|_1,
    \end{align}
    which proves the claim.
\end{proof}

Finally, we classify the families for which $\kgen(T)<1$:
\begin{prop} \label{prop: gamma(X)<1}
    $\kgen(T)<1\Longleftrightarrow\max_{g\notin H}|\bra{\psi}U_g\ket{\psi}|<1$.
\end{prop}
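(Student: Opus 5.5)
We prove $\kgen(T)<1\Leftrightarrow \max_{g\notin H}|\alpha(g)|<1$, where $\alpha(g)=\bra{\psi}U_g\ket{\psi}$, by reducing both sides to convergence of $T^t(\tau_n)$ to $\omega_n$.

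First observe that $\kgen(T)<1$ holds iff $\kappa(T^s)<1$ for some $s\in\mathbb N$, since $\kgen(T)=1-\sup_s(1-\kappa(T^s))/s$. Next we use the form of the coefficient on densities, $\kappa(T^s)=\frac12\sup_{\rho,\sigma\in\cD(\cA)}\|T^s(\rho)-T^s(\sigma)\|_1$. By the triangle inequality and \Cref{prop: GU worst initial state},
\begin{align}
\kappa(T^s)\le \sup_{\rho}\|T^s(\rho)-\omega_n\|_1=\|T^s(\tau_n)-\omega_n\|_1 .
\end{align}
Conversely, taking $\sigma=\omega_n$ and using $T^s(\omega_n)=\omega_n$ gives $\kappa(T^s)\ge\frac12\|T^s(\tau_n)-\omega_n\|_1$. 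So $\kappa(T^s)<1$ for some $s$ iff the trace distance $\|T^s(\tau_n)-\omega_n\|_1$ becomes small for some $s$. It therefore suffices to show that $\|T^s(\tau_n)-\omega_n\|_1\to0$ iff $\max_{g\notin H}|\alpha(g)|<1$.

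Write the entries explicitly: $T^s(\tau_n)_{xH,yH}=\frac1n\alpha(x^{-1}y)^s$. The value $\alpha(g)$ is well defined on double cosets, since $H$ stabilizes $\ket{\psi}$ up to the convention $U_h\ket\psi=\ket\psi$. Thus $T^s(\tau_n)-\omega_n=\frac1n\sum_{j\neq0}\alpha_j^sA_j$, where $j=0$ is the trivial double coset $H$.

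For ($\Leftarrow$), suppose $\mu:=\max_{j\ne0}|\alpha_j|<1$. Bound the trace norm by the Frobenius norm: $\|M\|_1\le\sqrt n\|M\|_2$, giving
\begin{align}
\|T^s(\tau_n)-\omega_n\|_1\le \sqrt n\cdot\frac1n\sqrt{n(n-1)}\,\mu^s\le n\mu^s .
\end{align}
This tends to $0$, so $\kappa(T^s)\le n\mu^s<1$ for large $s$.

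For ($\Rightarrow$), suppose some $g\notin H$ has $|\alpha(g)|=1$. Then $U_g\ket\psi=e^{i\theta}\ket\psi$, so $\ket{\psi_{gH}}$ and $\ket{\psi_H}$ are the same state up to phase. The rank-one $2\times2$ principal block of $T^s(\tau_n)$ on $\{H,gH\}$ has entries of modulus $\frac1n$ for every $s$. Compressing to that block is trace-norm contractive, which gives $\|T^s(\tau_n)-\omega_n\|_1\ge\frac1n\bigl\|\begin{bmatrix}0&e^{i s\theta}\\ e^{-is\theta}&0\end{bmatrix}\bigr\|_1=\frac2n$. This is not enough to force $\kappa(T^s)=1$ by itself.

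This direction is the main obstacle. The cleanest route is to exhibit two states with orthogonal images. Let $K=\{g:|\alpha(g)|=1\}$; it is a subgroup strictly containing $H$, and $\chi(k)=\alpha(k)$ is a character of $K$. The states $\ket{\psi_{xH}}$ are constant up to phase on $K$-cosets. We would build $\rho,\sigma\in\cD(\cA)$ that are $G$-invariant and supported on block structures that $\Phi_X$ fixes exactly on the $K$-blocks. For instance, take $\rho=\tau_n$ and $\sigma=\frac1n Z$, where $Z_{xH,yH}=\chi(x^{-1}y)\,1\{x^{-1}y\in K\}\cdot\beta(x^{-1}y)$ with $\beta$ a nontrivial function on $K/H$ chosen to be positive definite and orthogonal. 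Orthogonality of the restrictions to each $K$-block is preserved because $\Phi_X$ acts there by a unitary diagonal conjugation. This yields $\kappa(T^s)=1$ for all $s$. Verifying that such a $G$-invariant orthogonal pair exists in $\cA$ when $K\ne H$ is the step requiring care. One can use the trivial and a nontrivial spherical idempotent of $\End_K(\bC[K/H])$, induced up to $G$.
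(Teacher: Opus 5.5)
Your ($\Leftarrow$) direction is correct and is essentially the paper's Frobenius-norm argument. You apply it to the zero-diagonal matrix $T^s(\tau_n)-\omega_n$ and pass to $\kappa(T^s)$ through \Cref{prop: GU worst initial state}. The paper applies it directly to an arbitrary traceless Hermitian $B\in\cA$, which automatically has zero diagonal.

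The gap is in ($\Rightarrow$). You prove $\|T^s(\tau_n)-\omega_n\|_1\ge 2/n$ for all $s$, then declare this insufficient and switch to an orthogonal-pair construction. You explicitly leave its existence step unverified, and $\beta$ is never pinned down, so as written this direction is not established.

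The missing ingredient is submultiplicativity, $\kappa(T^{ms})\le\kappa(T^s)^m$. It holds because $T$ maps traceless Hermitian elements of $\cA$ to traceless Hermitian elements of $\cA$. If $\kappa(T^s)<1$ for some $s$, then
\begin{align}
  \frac{2}{n}\le\|T^{ms}(\tau_n)-\omega_n\|_1=\|T^{ms}(\tau_n-\omega_n)\|_1\le\kappa(T^s)^m\,\|\tau_n-\omega_n\|_1\le 2\kappa(T^s)^m .
\end{align}
The right-hand side tends to $0$ as $m\to\infty$, a contradiction. Hence $\kappa(T^s)=1$ for every $s$ and $\kgen(T)=1$.

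This is exactly the paper's proof. Its test element $B_0=\mathbb J-\one$ equals $n(\tau_n-\omega_n)$, and it lower-bounds $\|T^{ms}(B_0)\|_1\ge 1$ via the unimodular entry. Your own reduction in the second paragraph (``$\kappa(T^s)<1$ for some $s$'' forces decay of $\|T^t(\tau_n)-\omega_n\|_1$) already silently uses this submultiplicativity. So your $2/n$ bound was in fact enough.

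If you want to keep the orthogonal-pair route, it can be completed cheaply. Let $m=|K/H|\ge 2$, and let $A_K$ be the $0/1$ matrix of $1\{x^{-1}y\in K\}$; it is a sum of the $A_j$, since $K$ is a union of $H$-double cosets. Take
\begin{align}
  \sigma=\frac{1}{n(1-1/m)}\Big(\one-\frac{1}{m}A_K\Big)\in\cD(\cA).
\end{align}
It is block-diagonal on $K$-cosets and annihilates the all-ones vector, so $\tr(\tau_n\sigma)=0$. On each $K$-coset block, $X$ is the rank-one matrix $[\overline{\chi(k_1)}\chi(k_2)]$, so $T^s$ acts there by a diagonal unitary conjugation. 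Hence $\tr\big(T^s(\tau_n)\,T^s(\sigma)\big)=\tr(\tau_n\sigma)=0$, and $\kappa(T^s)=1$ directly. This reaches the same conclusion with more work than the submultiplicativity argument.
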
 
\begin{proof}
    Let $\mu(X)\coloneqq\max_{x\neq y\in G/H}|X_{xy}|=\max_{g\notin H}|\bra{\psi}U_g\ket{\psi}|$. Any $B\in\cA$ has constant diagonal, so it has zero diagonal if $\tr(B)=0$.
    Hence,
    \begin{align}
        \|T^s(B)\|_1\leq\sqrt{n}\|T^s(B)\|_2=\sqrt{n}\sqrt{\sum_{x\neq y\in G/H}|X_{xy}|^{2s}|B_{xy}|^2}\leq\sqrt{n}\mu(X)^s\|B\|_2\leq\sqrt{n}\mu(X)^s\|B\|_1.
    \end{align}
    Therefore, when $\mu(X)<1$, we can choose $s$ sufficiently large such that $\sqrt{n}\mu(X)^s<1$, and
    \begin{align}
        \kappa(T^s)=\sup_{\substack{B=B^\dagger\in\cA\\\tr(B)=0\\B\neq 0}}\frac{\|T^s(B)\|_1}{\|B\|_1}\leq\sqrt{n}\mu(X)^s<1\Rightarrow\kgen(T)<1.
    \end{align}
    Conversely, suppose $|X_{xy}|=1$ for some $x\neq y\in G/H$. Let $B_0=\mathbb J-\one$ with zero diagonal where $\mathbb J$ is the all-one matrix.
    Then $T^{ms}(B_0)=X^{\circ(ms)}-\one$ has absolute value 1 at position $(x,y)$ for all $m$ and $s$, so $\|T^{ms}(B_0)\|_1\geq1$.
    If $\kappa(T^s)<1$ for some $s$, then by the contraction property of trace distance we get $\|T^{ms}(B_0)\|\leq\kappa(T^s)^m\|B_0\|_1\to0$ for $m\to\infty$, and thus a contradiction, showing that $\kgen(T)=1$.
\end{proof}

Combining \cref{eq: sandwiched bound for GU sample complexity}, \cref{eq: GU Dobrushin sandwiched bound}, \Cref{prop: GU worst initial state} and \Cref{prop: gamma(X)<1}, we have proved: 
\GUDobrushin*

\subsection{Mixing of classical random walk on dual hypergroup} \label{subsection: classical random walk}
Suppose $\bC[G/H]\cong\bigoplus_{\pi\in\hat{G}}V_\pi\otimes M_\pi$ is a complete decomposition of irreducible representations.
Then the algebra of $G$-invariant operators decomposes as
\begin{align}
    \cA=\End_G(\bC[G/H])\cong\bigoplus_{\pi\in\hat{G}}\bC\otimes\End(M_\pi).
\end{align}
Therefore the Hecke algebra $\End_G(\bC[G/H])$ is commutative if and only if $\dim M_\pi\leq 1$ for all $\pi\in\hat{G}$, and in this case $(G,H)$ is called a Gelfand pair \cite{CeccheriniSilbersteinScarabottiTolli2008, ScarabottiTolli2013}. For each $\pi\in\hat{G}$ with $\dim M_\pi=1$, let $E_\pi$ be the projector onto $V_\pi$, then $\{E_\pi:\dim M_\pi=1\}$ form an orthogonal basis of the commutative Hecke algebra $\cA\cong\bigoplus_\pi\bC E_\pi$. These projectors are also called primitive idempotents of the algebra. We denote by $d+1=\dim\cA$ the number of primitive idempotents.

The following treatment of dual hypergroups follows \cite{Godsil2010AssociationSchemes,Voit2019ContinuousAssociationSchemes,Voit2017GeneralizedCommutativeAssociationSchemes}. A finite hypergroup $(\Omega,*)$ is a finite set $\Omega$ together with an associative bilinear convolution law on probability measures such that:\\
(a) for each $x,y\in\Omega$, the convolution $\delta_x*\delta_y$ is a probability measure on $\Omega$;\\
(b) there is an identity element $e\in\Omega$ with $\delta_e*\delta_x=\delta_x*\delta_e=\delta_x$;\\
(c) there is an involution map compatible with the convolution.

In particular, when $\cA$ is commutative, it admits a dual hypergroup $\hat{\Omega}=\{0,...,d\}$ with convolution
\begin{align}
    \delta_i * \delta_j \coloneqq  \sum_{l=0}^d K_{ij}^{l}\,\delta_l;\quad\quad K_{ij}^l\coloneqq \frac{m_l}{m_im_j}q_{ij}^l
\end{align}
where $m_i\coloneqq\text{rank}(E_i)=\tr(E_i)$ and $q_{ij}^{k}$ are the Krein parameters defined via
\begin{align}
    E_i\circ E_j=\frac{1}{n}\sum_{k=0}^{d}q_{ij}^{k}E_k,
\end{align}
with $\circ$ being the Schur product.

Define a map $\Xi\colon \mc P(\hat{\Omega})\to\cD(\cA)$, where $\mc P(\hat\Omega)$ denotes the space of probability measures on $\hat\Omega$, by
\begin{align}
    \Xi(q)\coloneqq\sum_i\frac{q_i}{m_i}E_i.
\end{align}
This map is a bijection. It preserves the trace distance:
\begin{align} \label{eq: hypergroup bijection TV distance}
        d_{\TV}(q,r)=\frac{1}{2}\|q-r\|_1=\frac{1}{2}\sum_im_i \left|\frac{q_i-r_i}{m_i}\right|=\frac{1}{2}\|\Xi(q)-\Xi(r)\|_1.
\end{align}
Moreover, for any positive $Y=\sum_j\theta_jE_j\in\cA$ with $\tr(Y)=n$, the coefficient of $E_k$ in $\Phi_Y(\Xi(q))$ is 
\begin{align}
    \frac{1}{n}\sum_{i,j}\frac{q_i}{m_i}\theta_jq_{ij}^{k}=\frac{(q*p(Y))(k)}{m_k}
\end{align}
where $p(Y)$ is the distribution given by $p(Y)\coloneqq\Xi^{-1}(\frac{Y}{n})$, i.e. $p(Y)_j\coloneqq\frac{m_j\theta_j}{n}$. It follows that
\begin{align} \label{eq: hypergroup bijection convolution}
    \Phi_Y(\Xi(q))=\Xi(q*p(Y)).
\end{align}

Given any probability measure $p$ on $\hat{\Omega}$, the convolution induces a Markov kernel defined by
\begin{align} \label{eq: Markov chain K_p}
    K_p(i,j)\coloneqq (\delta_i*p)(j).
\end{align} 
In particular, $p^{*t}$ is the distribution of the random walk $K_p$ starting at $0$ after $t$ steps, with stationary distribution $\pi\coloneqq\{\pi_j=\frac{m_j}{n}\}$. In particular, by \cref{eq: hypergroup bijection TV distance} and \cref{eq: hypergroup bijection convolution},
\begin{align} \label{eq: trace distance to total variation}
    \|\Phi_X^t(\tau_n)-\omega_n\|_1=\|\Xi(p(X)^{*t})-\Xi(\pi)\|_1=2d_{TV}(p(X)^{*t},\pi),
\end{align}
therefore $t_{\min}$ is equal to the mixing time of the classical random walk given by $K_{p(X)}$ where $X$ is the Gram matrix of the GU ensemble. Also note that the spectrum of $K_{p(X)}$ consists exactly of $\alpha_j=\bra{\psi}U(g_j)\ket{\psi}$. We summarize these results in the following theorem:

\ClassicalMixing*

\begin{rem}
    The above analysis shows the equivalence between the sample complexity $k_{\min}$ and a classical total-variation mixing time only up to $\epsilon$. From the perspective of fidelity-type mixing time in \cref{eq: sample complexity as fidelity mixing}, 
    \begin{align}
        \psucc^*(t)=\frac{1}{n^2}\|X_t\|_\frac{1}{2}=\bigg(\sum_j\sqrt{p^{*t}(X)_j\pi_j}\bigg)^2
    \end{align}
    is the squared classical fidelity between the $t$-step walk and its stationary distribution. Therefore, the sample complexity of state discrimination coincides with a classical fidelity mixing time on the dual hypergroup $\hat\Omega$.
\end{rem}

\begin{ex}
    Quantum coupon collector problem \cite{Arunachalam2020QuantumCouponCollector}: Given $r<l$, consider the set $\Omega_{r,l}$ of $r$-subsets of $[l]\coloneqq\lbrace 1,\dots,l\rbrace$. The goal is to discriminate the states
    \begin{align}
        \ket{\psi_S}\coloneqq \frac{1}{\sqrt{r}}\sum_{i\in S}\ket{i} \quad \text{for $S\in\Omega_{r,l}$}.
    \end{align}
    We can write it as a GU ensemble generated by $G=S_l$ and $\ket{\psi_{S_0}}$ where $S_0=\{1,...,r\}$.
    Letting $H=S_r\times S_{l-r}$ be the stabilizer subgroup of $S_0$, the pair $(G,H)$ is a Gelfand pair. The Gram matrix $X$ has components $X_{ST}=\frac{|S\cap T|}{r}=1-\frac{j}{r}$ where $j\coloneqq r-|S\cap T|$, so $X=\sum_{j=0}^{d}(1-\frac{j}{r})A_j$ where $A_j$ is defined in \cref{eq: matrix A_r} and $d\coloneqq \min(r,l-r)$. Simple calculation shows $X$ has eigenvalues $\theta_0=\binom{l-1}{r-1}$ with multiplicity 1 and $\theta_1=\frac{1}{r}\binom{l-2}{r-1}$ with multiplicity $l-1$. By the standard calculations for Johnson schemes in \cite[Section 3.3]{Arunachalam2020QuantumCouponCollector}, $m_j=\tr(E_j)=\binom{l}{j}-\binom{l}{j-1}$, in particular $m_0=1$ and $m_1=l-1$, so $X=\theta_0 E_0+\theta_1 E_1$. Therefore $p=p(X)$ defined by \cref{eq: the map p} is supported on $\{0,1\}$, and $K_p$ defined in \cref{eq: Markov chain K_p} is a birth-death chain with spectrum $\alpha_j=1-\frac{j}{r}$ for $0\leq j\leq d$. By \Cref{prop: birth-death mixing},
    \begin{align} \label{eq: coupon collector estimate}
        t_{\mix}(K_p,\epsilon)=\Theta_\epsilon(\sum_{j=1}^{d}\frac{1}{1-\alpha_j})=\Theta_\epsilon(r\log d),
    \end{align}
    recovering the result $k_{\min}=\Theta_\epsilon(r\log\min\{r,l-r\})$ in \cite{Arunachalam2020QuantumCouponCollector}.
\end{ex}

\subsection{Tight estimate by classical Dobrushin coefficient}
By \cref{eq: hypergroup bijection TV distance} and \cref{eq: hypergroup bijection convolution},
\begin{align}
    \kappa(T^s)
    =\frac{1}{2}\sup_{\rho,\sigma\in\cD(\cA)}\|\Phi_X^s(\rho)-\Phi_X^s(\sigma)\|_1&=\frac{1}{2}\sup_{q,r\in\mc P(\hat\Omega)}\|\Phi_X^s(\Xi(q))-\Phi_X^s(\Xi(r))\|_1\\
    &=\sup_{q,r\in\mc P(\hat\Omega)}d_{\TV}(q*p(X)^{*s},r*p(X)^{*s}).
\end{align}
Denote $p=p(X)$ and let 
\begin{align}
    P_s(i,a)=\big(\delta_i* p^{*s}\big)(a).
\end{align}
Then, for any distributions $q,r$ on $\hat{\Omega}$, 
\begin{align}
    qP_s(j)=\sum_i q_iP_s(i,j)=\big((\sum_iq_i\delta_i)*p^{*s}\big)(j)=(q*p^{*s})(j)    
\end{align}
and
\begin{align}
    d_{\TV}(qP_s,rP_s)\leq\sum_{i,j\in\hat\Omega}q_ir_jd_{\TV}(P_s(i,\cdot),P_s(j,\cdot))\leq\max_{i,j\in\hat\Omega}d_{\TV}(P_s(i,\cdot),P_s(j,\cdot)).
\end{align}
The equality in this chain holds when we take $q=\delta_i,r=\delta_j$ for a maximizing pair $(i,j)$ on the RHS, so
\begin{align} \label{eq: quantum Dobrushin coeff is classical}
    \kappa(T^s)=\sup_{(i,j)\in\hat{\Omega}^2}d_{\TV}(P_s(i,\cdot),P_s(j,\cdot))=\kappa(P_s).
\end{align}
Therefore the quantum Dobrushin coefficient of the restricted Schur channel reduces exactly to the classical Dobrushin coefficient of the dual-hypergroup random walk, and so does the generalized quantum Dobrushin coefficient. Written explicitly in terms of representation theoretic quantities,
\begin{align} \label{eq: Gelfand Dobrushin formula}
    \kgen(T)=1-\sup_{s\in\mathbb N}\frac{1}{s}\min_{i,j\in\hat\Omega^2}\sum_{k\in\hat\Omega}\min \left\{ \sum_{l\in\hat\Omega}\frac{m_kq_{il}^{k}}{m_im_l}p^{*s}(l),\sum_{l\in\hat\Omega}\frac{m_kq_{jl}^{k}}{m_jm_l}p^{*s}(l) \right\}.
\end{align}
In this way, \Cref{theorem: GU Dobrushin estimate} gives a tight estimate of the state discrimination sample complexity in terms of the Krein parameters and dimensions of the primitive idempotents.

\section{Phase state case} \label{section: phase state}
In this section, we consider a simple family of Gelfand pairs, then specialize to a family of phase states that covers many examples in the literature. Let $A$ be a finite abelian group, $G=A\rtimes K$ where $K\leq\text{Aut}(A)$, and $H=K$. Since $H(a,k)H=H(a,1)H$, the $H$-double cosets are labeled by $K$-orbits in $A$, so the Hecke algebra is isomorphic to $\bC[A]^K$, which is commutative since $A$ is abelian. For any $K$-orbit $D\subset\hat{A}$, the operator $E_D$ on $\bC[A]$ defined by
\begin{align} \label{eq: phase state primitive idempotent}
    E_D(x,y)\coloneqq \frac{1}{|A|}\sum_{\chi\in D}\chi(y-x)
\end{align}
is a primitive idempotent of rank $|D|$. The dual hypergroup is isomorphic to $\hat{A}/K$ with convolution
\begin{align}
    \delta_{[\chi]}*\delta_{[\psi]}=\frac{1}{|K|}\sum_{k\in K}\delta_{[\chi(k\cdot\psi)]}.
\end{align}

Now consider the GU ensemble generated by group $G$ and generator $\ket{\psi_0}$ with stabilizer subgroup $H$. By \Cref{theorem: Bochner} there exists a unique probability measure $p$ on $\hat{A}$ such that
\begin{align}
    \alpha(a)\coloneqq \bra{\psi_0}U_a\ket{\psi_0}=\hat{p}(a)\quad\forall\:a\in A,
\end{align}
and $p=\hat{\alpha}$ is $K$-invariant.

\PhaseState*
\begin{proof}
    Let $\tilde{p}$ be the pushforward of $p$ to the dual hypergroup $\hat{A}/K$. Since $p(\chi)=\frac{\tilde{p}(D)}{|D|}$ for all $\chi\in D$ for any $K$-orbit $D\subset\hat{A}$, we have
    \begin{align}
        \alpha(a)=\sum_{\chi\in\hat{A}}p(\chi)\chi(a)=\sum_j\frac{\tilde{p}(D_j)}{|D_j|}\sum_{\chi\in D_j}\chi(a).
    \end{align}
    Since the Gram matrix $M(x,y)=\bra{\psi_0}U_{y-x}\ket{\psi_0}=\alpha(y-x)$ lives in the Hecke algebra, it can be written as a linear combination of primitive idempotents. Writing $a=y-x$ and using \cref{eq: phase state primitive idempotent} we have
    \begin{align}
        \alpha(a)=M(x,y)=\sum_j\theta_j E_j(x,y)=\sum_j\frac{\theta_j}{|A|}\sum_{\chi\in D_j}\chi(a).
    \end{align}
    Comparing the two equations above, one can see that
    \begin{align}
        \tilde{p}(D_j)=\frac{|D_j|\theta_j}{|A|}=\frac{m_j\theta_j}{n},
    \end{align}
    therefore $\tilde{p}$ is exactly the hypergroup distribution induced by the Gram matrix under the map \eqref{eq: the map p}. Denote $\pi=\Unif(\hat{A})$, by \cref{eq: trace distance to total variation},
    \begin{align}
        t_{\min}(\epsilon)=\min\{t:\|\tilde{p}^{*t}-\tilde{\pi}\|_1\leq\epsilon\}=\min\{t:\|p^{*t}-\pi\|_1\leq\epsilon\}.
    \end{align}
    Finally, by induction
    \begin{align}
        P(Z_t=\chi)=\sum_{\eta\in\hat{A}}P(Z_{t-1}=\eta)P(X_t=\chi-\eta)=\sum_{\eta\in\hat{A}}p^{*(t-1)}(\eta)p(\chi-\eta)=p^{*t}(\chi).
    \end{align}
    Therefore,
    \begin{align}
        t_{\min}(\epsilon)=\min\{t:\|p^{*t}-\pi\|_1\leq\epsilon\}=\min\{t:\|\law(Z_t)-\pi\|_1\leq\epsilon\}.
    \end{align}
\end{proof}

Now we specialize to a family of phase states where the random walk $Z_t$ is more explicit: Let $\mc X$ be a finite set with probability measure $\nu$, and $\Phi\colon \mc X\to\hat A$ be a feature map. Consider the GU ensemble generated by
    \begin{align} \label{eq: GU phase state with feature map}
        \ket{\psi_0}=\sum_{x\in\mc X}\sqrt{\nu(x)}\ket{x},\qquad U_a\ket{x}=\Phi(x)(a)\ket{x} \quad \text{for $a\in A,x\in\mc X$}.
    \end{align}
    Then
    \begin{align}
        \alpha(a)=\langle\psi_0|U_a|\psi_0\rangle=\sum_{x\in\mc X}\nu(x)\Phi(x)(a),
    \end{align}
    so $p=\hat{\alpha}=\Phi_\#\nu$ is the pushforward measure of $\nu$ under $\Phi$, and hence the lifted dual walk is 
    \begin{align} \label{eq:feature-lifted-walk-expanded}
    Z_t=\Phi(X_1)+\cdots+\Phi(X_t),\qquad X_1,X_2,\ldots\stackrel{\rm i.i.d.}{\sim}\nu.
    \end{align}
    The generalized Dobrushin coefficient of this walk can be explicitly written in terms of $A,\nu,\Phi$:
    \begin{align}
        \kgen(A,\nu,\Phi)=1-\sup_{s\in\mathbb N}\frac{1}{s}\min_{a\in\hat A}\sum_{z\in\hat A}\min\{w_s(z),w_s(z-a)\}
    \end{align}
    where
    \begin{align}
        w_s(z)\coloneqq\sum_{\substack{x_1,\ldots,x_s\in\mc X\\\sum_i\Phi(x_i)=z}}\prod_{j=1}^{s}\nu(x_j).
    \end{align}
    In particular, if $\nu$ is uniform, then
    \begin{align} \label{eq: phase state Dobrushin}
        \kgen(A,\Phi)=1-\sup_{s\in\mathbb N}\frac{1}{s|\mc X|^s}\min_{a\in\hat A}\sum_{z\in\hat A}\min\{R_s(z),R_s(z-a)\}
    \end{align}
    where
    \begin{align}
        R_s(z)\coloneqq\# \left\{(x_1,\ldots,x_s)\in\mc X^s:\sum\nolimits_{i=1}^{s}\Phi(x_i)=z\right\}.
    \end{align}
    Note that this is the generalized Dobrushin coefficient of the lifted walk, which in general differs from that in \cref{eq: Gelfand Dobrushin formula}. However, for the purpose of estimating the sample complexity, both yield the same order up to $\epsilon$.

\subsection{Triangular feature map}
A powerful framework to exploit the structure of the feature map comes from Knothe–Rosenblatt triangular transport \cite{ramgraber2025friendly}. Such maps are standard in probability theory because they respect the natural filtration generated by the coordinates. In particular, our application of the framework converts a global high-dimensional mixing problem into a sequence of low-dimensional conditional Fourier estimates.

Suppose the dual group $\hat{A}$ admits a decomposition $\hat{A}=B_1\times\ldots\times B_n$ and the feature map takes the triangular form
\begin{align} \label{eq: triangular feature map}
    \Phi(x)=(\Phi_1(x^1),\Phi_2(x^1,x^2),\ldots,\Phi_n(x^1,\ldots,x^n))     
\end{align}
where $\Phi_i(x^{\leq i})\in B_i$. Then $Z_t=(Z_{t,1},\ldots,Z_{t,n})$ where
\begin{align}
    Z_{t,i}=\sum_{j=1}^{t}\Phi_i(X_j^{\leq i}).
\end{align}
The following theorem provides an upper bound for the total variation distance and thus $t_{\min}$, which will be used later in our discussion of the degree-$d$ phase states.

\Triangular*
\begin{proof}
    Let $\cF_i=\sigma(X_j^1,\ldots,X_j^i:1\leq j\leq t)$ denote the sigma-algebra generated by $X_j^i$, then $Z_{t,1},\ldots,Z_{t,i}$ are $\cF_i$-measurable. Defining the probability measure $\mu_i(z)\coloneqq P(Z_{t,i}=z|\cF_{i-1})$, by \Cref{lemma: total variation chain inequality},
    \begin{align}
        d_{\TV}(\law(Z_t),\Unif(\hat{A}))\leq\sum_{i=1}^{n}\bE d_{\TV}(\mu_i,\Unif(B_i)).
    \end{align}
    Applying Fourier transformation we get
    \begin{align}
        \hat\mu_i(\chi)=\sum_{z\in B_i}\mu_i(z)\chi(z)=\bE\left[\sum_{z\in B_i}\chi(z)1_{Z_{t,i}=z}|\cF_{i-1}\right]=\bE[\chi(Z_{t,i})|\cF_{i-1}]=\prod_{j=1}^{t}\bE\left[\chi(\Phi_i(X_j^{\leq i}))|\cF_{i-1} \right],
    \end{align}
    so
    \begin{align}
        \bE[|\hat{\mu}_i(\chi)|^2]=\prod_{j=1}^{t}\bE\left[|\bE[\chi(\Phi_i(X_j^{\leq i})|\cF_{i-1})]|^2\right]\leq\lambda_i^t.
    \end{align}
    By the Cauchy-Schwarz inequality and Parseval's identity, 
    \begin{align}
        \bE d_{\TV}(\mu_i,\Unif(B_i))\leq\frac{1}{2}\bE\left(\sum\nolimits_{\chi\neq 1}|\hat\mu_i(\chi)|^2\right)^\frac{1}{2}\leq\frac{1}{2}\left(\sum\nolimits_{\chi\neq 1}\bE|\hat\mu_i(\chi)|^2\right)^\frac{1}{2}\leq\frac{1}{2}\sqrt{(|B_i|-1)\lambda_i^t}.
    \end{align}
    Therefore,
    \begin{align}
        d_{\TV}(\law(Z_t),\Unif(\hat{A}))\leq\frac{1}{2}\sum_{i=1}^{n}\sqrt{(|B_i|-1)\lambda_i^t}.
    \end{align}
\end{proof}

\subsection{Degree-$d$ phase states over $\bF_q^n$} \label{subsection: degree-d phase states}
We follow the definition in \cite{alrabiah2026nolowdegree}. Let $q$ be a prime, $\omega_q=e^{\frac{2\pi i}{q}}$, and $n,d\geq 1$ be integers. For any polynomial $f\colon \bF_q^n\to\bF_q$ of degree at most $d$, the degree-$d$ phase state $|\psi_f\rangle$ is defined as
\begin{align}
    \ket{\psi_f}=q^{-\frac{n}{2}}\sum_{x\in\bF_q^n}\omega_q^{f(x)}\ket{x}.
\end{align}
We study the sample complexity of discriminating the uniform state ensemble consisting of $\ket{\psi_f}$ in this section. In particular, define
\begin{align}
    I_{n,d,q}\coloneqq\{\alpha\in\{0,\ldots,q-1\}^n:1\leq|\alpha|\leq d\},\qquad m\coloneqq|I_{n,d,q}|,
\end{align}
and
\begin{align}
    P_{n,d,q}= \left\{f_a(x)=\sum\nolimits_{\alpha\in I_{n,d,q}}a_\alpha x^\alpha:a\in F_q^m \right\}.
\end{align}
Since $\ket{\psi_{f+c}}=\omega_q^c\ket{\psi_f}$, it suffices to assume $f\in P_{n,d,q}$. The work of  \textcite{arunachalam_et_al:LIPIcs.TQC.2023.3} showed that $\Theta(n^{d-1})$ copies are necessary and sufficient for learning a degree-$d$ phase state over $\bF_2^n$, and \textcite{alrabiah2026nolowdegree} conjectured the same order of upper bound for the sample complexity over $\bF_q^n$. We will prove this conjecture in this subsection by applying \Cref{thm: total variation upper bound for triangular feature map} to upper-bound $t_{\min}(\epsilon)$.

The degree-$d$ phase states form a GU ensemble in the form of \cref{eq: GU phase state with feature map} with feature map given by $\Phi(x)=(x^\alpha)_{\alpha\in I_{n,d,q}}$ and $\nu=\Unif(\bF_q^n)$. We can write the feature map in the triangular form of \cref{eq: triangular feature map}: Set 
\begin{align}
    I_i\coloneqq\{\alpha\in I_{n,d,q}:\alpha_i\geq1,\alpha_j=0\:\forall\:j>i\},
\end{align}
then $I_{n,d,q}=I_1\sqcup\ldots\sqcup I_n$ and $\hat{A}=B_1\times\ldots\times B_n$ where $B_i\cong\bF_q^{|I_i|}$. Every monomial $x^\alpha$ for $\alpha\in I_i$ can be written as $x^\alpha=(x^1)^{\beta_1}\ldots(x^{i-1})^{\beta_{i-1}}(x^i)^l$ for some $(\beta_1,\ldots,\beta_{i-1},l)\in J_i$, where 
\begin{align}
    J_i\coloneqq\{(\beta,l):0\leq\beta_j\leq q-1,1\leq l\leq q-1,|\beta|+l\leq d\}.
\end{align}
Hence, $\Phi_i$ in the form of \cref{eq: triangular feature map} can be written as
\begin{align}
    \Phi_i(x)=\big((x^1)^{\beta_1}\ldots(x^{i-1})^{\beta_{i-1}}(x^i)^l\big)_{(\beta,l)\in J_i}.
\end{align}
For any $b=(b_{\beta,l})_{(\beta,l)\in J_i}\in B_i$, the corresponding character is $\chi_b(z)=\exp\left(\log(\omega_q)\sum_{(\beta,l)\in J_i}b_{\beta,l}z_{\beta,l}\right)$, so
\begin{align}
    \lambda_i=\max_{b\in B_i\backslash\{0\}}\bE_{X^{<i}}|\bE_{X^i}\chi_b(\Phi_i(X^{\leq i}))|^2&=\max_{b\in B_i\backslash\{0\}}\bE_{X^{<i}}\bigg|\frac{1}{q}\sum_{c\in\bF_q}\chi_b(\Phi_i(X^{<i},c))\bigg|^2\\
    &=\max_{b\in B_i\backslash\{0\}}\bE_{X^{<i}}\bigg|\frac{1}{q}\sum_{c\in\bF_q}\omega_q^{\sum_{(\beta,l)\in J_i}b_{\beta,l}(X^{<i})^\beta c^l}\bigg|^2.
    \label{eq:lambda-i}
\end{align}
To find a universal upper bound for $\lambda_i$, we prove the following two auxiliary lemmas:

\begin{lem} \label{lemma: polynomial anti-concentration}
    For any nonzero $h\in P_{n,d,q}$, we have 
    \begin{align}
        P_{Y\sim\Unif(\bF_q^n)}(h(Y)\neq 0)\geq q^{-d}. 
    \end{align}
    Moreover, if $R$ is any commutative ring and $h$ is a nonzero multilinear polynomial over $R$ of degree at most $d$, then
    \begin{align}
        P_{Y\sim\Unif(\{0,1\}^n)}[h(Y)\neq0]\geq 2^{-d}.
    \end{align}
\end{lem}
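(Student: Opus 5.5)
Both parts are Schwartz--Zippel/DeMillo--Lipton type statements. I will prove each by induction on the number of variables $n$, peeling off the last variable the same way the triangular structure is used in \Cref{thm: total variation upper bound for triangular feature map}.

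\emph{First claim.} Here $h\in P_{n,d,q}$ is a nonzero polynomial over $\bF_q$ with individual degrees at most $q-1$ and total degree at most $d$. The constant term vanishes, but that does not matter below. I will prove the stronger statement that every nonzero polynomial $h$ over $\bF_q$ with all individual degrees at most $q-1$ and total degree at most $d$ satisfies $P(h(Y)\neq0)\geq q^{-d}$.

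The base case $n=0$ is a nonzero constant, so the probability is $1$. For the inductive step, write
\begin{align}
h(x)=\sum_{l=0}^{k} g_l(x^1,\ldots,x^{n-1})(x^n)^l,
\end{align}
where $k\le q-1$ is the largest index with $g_k\neq0$, and $\deg g_k\le d-k$.
\begin{itemize}
\item By induction, $P(g_k(Y^{<n})\neq0)\ge q^{-(d-k)}$.
\item Condition on an outcome with $g_k(Y^{<n})\neq0$. Then $c\mapsto h(Y^{<n},c)$ is a nonzero univariate polynomial of degree $k\le q-1$. It has at most $k$ roots in $\bF_q$, so it is nonzero for at least $q-k$ of the $q$ values of $c$.
\item Hence $P(h\neq0)\ge q^{-(d-k)}\cdot\frac{q-k}{q}$.
\end{itemize}
It remains to check $\frac{q-k}{q}\ge q^{-k}$, i.e.\ $(q-k)q^{k-1}\ge1$ for $0\le k\le q-1$. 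For $k=0$ this is $1\ge1$. For $k\geq1$ we have $q-k\geq1$ and $q^{k-1}\geq1$. This gives $P(h(Y)\neq0)\ge q^{-d}$.

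\emph{Second claim.} Take $Y$ uniform on $\{0,1\}^n$ and $h$ multilinear over an arbitrary commutative ring $R$. Write $h=g_1(x^{<n})\,x^n+g_0(x^{<n})$ and split into two cases.
\begin{itemize}
\item If $g_1=0$, then $h=g_0$ is a nonzero polynomial in fewer variables and induction applies directly.
\item If $g_1\neq0$, then $\deg g_1\le d-1$ and induction gives $P(g_1(Y^{<n})\neq0)\ge 2^{-(d-1)}$.
\end{itemize}
In the second case, fix $Y^{<n}$ with $g_1(Y^{<n})\neq0$. The two values of $h$ at $Y^n\in\{0,1\}$ are $g_0$ and $g_0+g_1$. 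Their difference is $g_1\neq0$, so they cannot both be zero, even though $R$ may have zero divisors. Hence at least one of the two choices of $Y^n$ gives $h\neq0$, and $P(h\neq0)\ge 2^{-(d-1)}\cdot\frac12=2^{-d}$.

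The base case is again a nonzero constant, with probability $1$.

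\emph{Main obstacle.} The only delicate point is that we cannot divide over a general ring $R$. The argument avoids division entirely by using the difference of the two evaluations rather than a root-counting bound. In the $\bF_q$ case, the root bound needs a field, which $\bF_q$ is since $q$ is prime. The degree cap $k\leq q-1$ is what makes $h(Y^{<n},\cdot)$ a genuinely nonzero function of $c$.
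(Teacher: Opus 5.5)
Your proof is correct and takes essentially the same route as the paper: for the $\bF_q$ case you induct on the number of variables via the leading coefficient in the last variable, and for the ring case you use that $h(y,0)$ and $h(y,1)$ differ by $g_1(y)\neq 0$. Two differences are minor: your root-counting factor $(q-k)/q$ is slightly sharper than the paper's factor $q^{-1}$ (which the paper uses with $a\geq 1$), and you explicitly strengthen the induction hypothesis to polynomials that may have a constant term, which the paper's induction relies on without saying so.
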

\begin{proof}
    We use induction on the number of variables. Writing $h(y,z)=\sum_{j=0}^{a}h_j(y)z^j$ where $h_a\neq0$ and $a\leq q-1$, it suffices to show the claim for the case $a>0$. Then $h_a$ has degree at most $d-a$, and whenever $h_a(y)\neq0$, the polynomial in $z$ is nonzero for at least one of the $q$ values of $z$, so
    \begin{align}
        P(h(Y,Z)\neq 0)\geq q^{-1}P(h_a(Y)\neq 0)\geq q^{-1-(d-a)}\geq q^{-d},
    \end{align}
    proving the first claim. For the second claim, write $h(y,z)=h_0(y)+zh_1(y)$ with $h_1\neq0$. Whenever $h_1(y)\neq0$, at least one of $h(y,0)$ and $h(y,1)$ is nonzero, therefore
    \begin{align}
        P(h(Y,Z)\neq0)\geq \frac12\Pr[h_1(Y)\neq0],
    \end{align}
    and induction on the degree gives the claim.
\end{proof}

\begin{lem} \label{lemma: upper bounding phase sum}
    If $g\colon \bF_q\to\bF_q$ satisfies $g(0)=0$ and is nonzero, then 
    \begin{align}
        \bigg|\sum_{c\in\bF_q}\omega_q^{g(c)}\bigg|^2\leq q^2-4\sin^2(\frac{\pi}{q}).
    \end{align}
\end{lem}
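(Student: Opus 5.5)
Write $S=\sum_{c\in\bF_q}\omega_q^{g(c)}$ and expand the squared modulus as a double sum over pairs:
\begin{align}
    |S|^2=\sum_{c,c'\in\bF_q}\omega_q^{g(c)-g(c')}=\sum_{c,c'}\cos\!\Big(\tfrac{2\pi}{q}(g(c)-g(c'))\Big).
\end{align}
Here we use that $|S|^2$ is real, so the imaginary parts cancel. Each of the $q^2$ terms is at most $1$. It therefore suffices to find enough pairs $(c,c')$ on which the cosine is bounded away from $1$ by a definite amount.

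The hypotheses supply such pairs. Since $g$ is nonzero and $g(0)=0$, there is some $c_0\neq 0$ with $g(c_0)=r\in\bF_q\setminus\{0\}$. The two ordered pairs $(c_0,0)$ and $(0,c_0)$ then contribute $2\cos(2\pi r/q)$ in place of the value $2$ they would have if they were trivial. For $r\in\{1,\ldots,q-1\}$ we have $\cos(2\pi r/q)\leq\cos(2\pi/q)$, so these two pairs give a deficit of at least
\begin{align}
    2\big(1-\cos(2\pi/q)\big)=4\sin^2(\pi/q).
\end{align}
Bounding every other term by $1$ then yields $|S|^2\leq q^2-4\sin^2(\pi/q)$.

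The only step needing care is the monotonicity claim. The value $\cos(2\pi r/q)$ over nonzero residues $r$ is maximized at $r=\pm1$, which holds because $2\pi r/q\in[2\pi/q,\,2\pi-2\pi/q]$. This reduces the whole argument to the elementary identity $1-\cos 2\theta=2\sin^2\theta$. We do not need $q$ prime here, only that $g$ takes values in $\mathbb Z_q$, so there is no real obstacle. The point is to use the pair $\{0,c_0\}$ rather than attempting a Weil-type bound, which would be unnecessary for fixed $q$.
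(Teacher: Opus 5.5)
Your proposal is correct and is essentially the paper's argument. The paper writes $|S|^2=q^2-\sum_{c<c'}|\omega_q^{g(c)}-\omega_q^{g(c')}|^2$ and keeps only the pair $\{0,c_0\}$, whose term $|1-\omega_q^{r}|^2=2\bigl(1-\cos(2\pi r/q)\bigr)\ge 4\sin^2(\pi/q)$ is exactly the combined cosine deficit you extract from the two ordered pairs $(0,c_0)$ and $(c_0,0)$.
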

\begin{proof}
    Let $c_0$ be such that $g(c_0)\neq0$, then $|\omega_q^{g(0)}-\omega_q^{g(c_0)}|^2\geq 4\sin^2(\frac{\pi}{q})$, therefore 
    \begin{align}
        \bigg|\sum_{c\in\bF_q}\omega_q^{g(c)}\bigg|^2=q^2-\sum_{c<c'}|\omega_q^{g(c)}-\omega_q^{g(c')}|^2\leq q^2-4\sin^2\left(\frac{\pi}{q}\right),
    \end{align}
    and we are done.
\end{proof}

Now for every $b\neq 0$, we can choose $l$ such that $h_l^{(b)}(y)=\sum_{\beta:|\beta|+l\leq d}b_{\beta,l}y^\beta$ is nonzero with degree at most $d-l$.
Then by \Cref{lemma: polynomial anti-concentration}, 
\begin{align}
    P(h_l^{(b)}(Y)\neq0)\geq q^{-(d-l)}.
\end{align}
Let $g_{b,y}(c)=\sum_{l=1}^{\min(q-1,d)}h_l^{(b)}(y)c^l$, then $g_{b,y}(0)=0$ and it is nonzero whenever $h_l^{(b)}(Y)\neq0$.
Thus, \Cref{lemma: upper bounding phase sum} yields
\begin{align}
    \bE_{X^{<i}}\bigg|\frac{1}{q}\sum_{c\in\bF_q}\omega_q^{g_{b,X^{<i}}(c)}\bigg|^2&\leq P\big(h_l^{(b)}(X^{<i})=0\big)+P\big(h_l^{(b)}(X^{<i})\neq0\big)\left(1-4\frac{\sin^2(\frac{\pi}{q})}{q^2}\right)\\
    &\leq 1-4\frac{\sin^2(\frac{\pi}{q})}{q^{d+1}}\eqqcolon\lambda_{q,d}
\end{align}
for any $b\neq 0$, and using \eqref{eq:lambda-i} we get $\lambda_i\leq\lambda_{q,d}$ for all $i$. By \Cref{thm: total variation upper bound for triangular feature map}, setting $r=\max_i|I_i|\log q$,
\begin{align} \label{eq: degree-d phase state estimate}
    t_{\min}(\epsilon)\leq\frac{r+2\log(\frac{n}{\epsilon})}{-\log\lambda_{q,d}}\leq O_{q,d,\epsilon}(n^{d-1})
\end{align}
since $r\leq\frac{dm}{n}\log q$ and $m\leq\binom{n+d}{d}$.

\subsection{Generalized degree-$d$ Boolean phase states over $\mathbb Z_q$} \label{subsection: Boolean phase states}
We follow the definition in the work of \textcite{arunachalam_et_al:LIPIcs.TQC.2023.3}, which left the tight upper bound for the sample complexity of learning generalized phase states with entangled measurements as an open problem. Fix $d\geq2$ and an even integer $q\geq2$, and let $m=\sum_{j=1}^{\min(n,d)}\binom{n}{j}$ and $A=(\mathbb Z_q)^m$. For $a=(a_S)_{S\in\bold S}$ where $\bold S\coloneqq\{S\subset[n]:1\leq|S|\leq d\}$, define the generalized degree-$d$ Boolean phase states over $\mathbb Z_q$ by
\begin{align}
    \ket{\psi_a}\coloneqq2^{-\frac{n}{2}}\sum_{x\in\{0,1\}^n}\omega_q^{f_a(x)}\ket{x},\qquad f_a(x)\coloneqq\sum_{S\in\bold S}a_Sx_S(\text{mod }q).
\end{align}
These states form a GU ensemble in the form of \cref{eq: GU phase state with feature map} with feature map given by $\Phi(x)=(x_S)_{1\leq|S|\leq d}$ and $\nu=\Unif$. We can write the feature map in the triangular form of \cref{eq: triangular feature map}: With
\begin{align}
    \bold S_i &\coloneqq\{S\subset[i-1]:|S|\leq d-1\}\\
    r_i &\coloneqq|\bold S_i|=\sum_{j=0}^{d-1}\binom{i-1}{j},
\end{align}
we have $\bold S=\bigsqcup_i\{S\cup\{i\}:S\in\bold S_i\}$ and $\hat{A}=B_1\times\ldots\times B_n$ where $B_i\cong(\mathbb Z_q)^{|\bold S_i|}$. The feature map can be written by
\begin{align}
    \Phi_i(x)=(x_ix_S)_{S\in\bold S_i}.
\end{align}
For any $b=(b_S)_{S\in\bold S_i}\in B_i$, the corresponding character is $\chi_b(z)=\exp\left(\log(\omega_q)\sum_{S\in\bold S_i}b_Sz_S\right)$, so
\begin{align}
    \lambda_i=\max_{b\in B_i\backslash\{0\}}\bE_{X^{<i}}\bigg|\frac{1+\omega_q^{h_b(X^{<i})}}{2}\bigg|^2=\max_{b\in B_i\backslash\{0\}}\bE_{X^{<i}}\cos^2\left(\frac{\pi h_b(X^{<i})}{q}\right)
\end{align}
where $h_b(y)\coloneqq\sum_{S\in\bold S_i}b_Sy_S(\text{mod }q)$. Since $b_S=\sum_{T\subset S}(-1)^{|S|-|T|}h(1_T)\quad(\text{mod }q)$ and $b\neq0$, we have $h_b\neq0$. By \Cref{lemma: polynomial anti-concentration}, $P(h_b(Y)\neq0)\geq2^{1-d}$, so
\begin{align}
    \lambda_i=\max_{b\in B_i\backslash\{0\}}\bE_{Y}\cos^2\left(\frac{\pi h_b(Y)}{q}\right)\leq1-\min_b P(h_b(Y)\neq0)\sin^2\left(\frac{\pi}{q}\right)\leq1-2^{1-d}\sin^2\left(\frac{\pi}{q}\right)\eqqcolon\lambda_{q,d}.
\end{align}
By \Cref{thm: total variation upper bound for triangular feature map}, setting $r=\max_i\log|B_i|$,
\begin{align} \label{eq: Boolean phase state estimate}
    t_{\min}\leq\frac{r+2\log(\frac{n}{\epsilon})}{-\log\lambda_{q,d}}\leq O_{q,d,\epsilon}(n^{d-1}).
\end{align}

\subsection{Hypergraph state ensembles} \label{subsection: hypergraph states}
Let $(V,E)$ be a hypergraph as defined in \Cref{subsection: hypergraph state intro}, with edges of size at most $d$. Each subhypergraph $a\in\bF_2^E$ corresponds to the state 
\begin{align}
    \ket{\psi_a}=2^{-|V|/2}\sum_{x\in\bF_2^V}(-1)^{\sum_{e\in E}a_e\prod_{v\in e}x_v}\ket{x}.
\end{align}
This is a GU ensemble generated by $G=A\rtimes K$ where $A\cong\bF_2^E$ and $K$ is the automorphism group of the hypergraph, and $\ket{\psi_0}=2^{-|V|/2}\sum_{x\in\bF_2^V} \ket{x}$. It admits a feature map $\Phi\colon \bF_2^V\to\bF_2^E$ given by 
\begin{align}
    \Phi(x)\coloneqq \left(\prod\nolimits_{v\in e}x_v\right)_{e\in E}
\end{align}
and the lifted dual walk can be written as
\begin{align}
    (Z_t)_e=\sum_{j=1}^t\prod_{v\in e}(X_j)_v,\qquad X_j\stackrel{\rm i.i.d.}{\sim}\Unif(\bF_2^V).
\end{align}
Its generalized Dobrushin coefficient can be fully expressed in terms of hypergraph data: 
\begin{align} \label{eq: hypergraph Dobrushin coefficient}
    \kgen=1-\sup_{s\in\mathbb N}\frac{1}{s2^{s|V|}}\min_{a\subset E}\sum_{z\subset E}\min\{R_s(z),R_s(z\triangle a)\}
\end{align}
where
\begin{align}
    R_s(z)\coloneqq\#\{(S_1,\ldots,S_s)\in(2^V)^s:E(S_1)\triangle\ldots\triangle E(S_s)=z\}
\end{align}
with $E(S)\coloneqq\{e\in E:e\subset S\}$ for any $S\subset V$. We also provide the following upper bound in terms of hyperedge data:

\begin{prop} \label{proposition: hypergraph}
    Given any ordering of $V$,
    \begin{align}
        t_{\min}(\epsilon)\leq 2^d\left(\log\left(\sum\nolimits_{v\in V}2^{|E_v|}\right)-\log\epsilon\right).
    \end{align}
    where $E_v$ is the set of hyperedges whose largest vertex in this ordering is $v$.
\end{prop}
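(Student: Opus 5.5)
We apply \Cref{thm: total variation upper bound for triangular feature map} to the hypergraph feature map $\Phi(x)=(\prod_{v\in e}x_v)_{e\in E}$ with $\nu=\Unif(\bF_2^V)$. First we put $\Phi$ in triangular form. Order $V=\{v_1<\dots<v_N\}$ and partition $E=\bigsqcup_i E_{v_i}$, with $E_{v_i}$ the hyperedges whose largest vertex is $v_i$. This gives $\hat A=\bF_2^E=B_1\times\cdots\times B_N$ with $B_i=\bF_2^{E_{v_i}}$, and
\begin{align}
    \Phi_i(x^{\le i})=\Big(x_{v_i}\prod\nolimits_{u\in e\setminus\{v_i\}}x_u\Big)_{e\in E_{v_i}},
\end{align}
which depends only on $x^{\le i}$. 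The coordinates $X^1,\dots,X^N$ are independent uniform bits, so the hypothesis of the theorem holds. Vertices with $E_{v_i}=\emptyset$ contribute trivial blocks and can be dropped, or they contribute $0$ to the sum since $|B_i|-1=0$.

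Next we bound $\lambda_i$. A nontrivial character of $B_i$ is given by a nonzero $b\in\bF_2^{E_{v_i}}$, and $\chi_b(\Phi_i(x))=(-1)^{x_{v_i}h_b(x^{<i})}$, where
\begin{align}
    h_b(y)=\sum_{e\in E_{v_i}}b_e\prod_{u\in e\setminus\{v_i\}}y_u
\end{align}
is a multilinear polynomial over $\bF_2$ of degree at most $d-1$. Distinct $e\in E_{v_i}$ give distinct monomials $e\setminus\{v_i\}$, so $h_b\neq0$ as a polynomial. Averaging over $X^i$ gives $\bE_{X^i}\chi_b=\tfrac12(1+(-1)^{h_b})$, which is $1$ when $h_b=0$ and $0$ when $h_b=1$. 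Therefore $\lambda_i=\max_b P(h_b(Y)=0)$. By the Boolean part of \Cref{lemma: polynomial anti-concentration}, $P(h_b(Y)\ne0)\ge 2^{-(d-1)}$, so $\lambda_i\le 1-2^{1-d}$. This bound is the only nontrivial step, and the concern is that the monomials stay distinct so that $h_b$ is nonzero; the largest-vertex partition guarantees this.

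Finally we combine. \Cref{thm: total variation upper bound for triangular feature map} gives
\begin{align}
    \|\law(Z_t)-\Unif\|_1\le\sum_i\sqrt{(2^{|E_{v_i}|}-1)\lambda^t}\le\lambda^{t/2}\sum_{v}2^{|E_v|},
\end{align}
using $\sqrt{2^{k}-1}\le 2^{k}$. By \Cref{theorem: semidirect product case reduction to d_TV}, $t_{\min}(\epsilon)$ is at most the least $t$ with $\lambda^{t/2}\sum_v2^{|E_v|}\le\epsilon$. Since $-\log\lambda\ge 2^{1-d}$, we have $-\tfrac12\log\lambda\ge 2^{-d}$, so it suffices that
\begin{align}
    t\ge 2^d\Big(\log\sum\nolimits_v 2^{|E_v|}-\log\epsilon\Big).
\end{align}
This is the claimed bound. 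The constant $2^d$ comes out exactly because we used the crude estimate $\sqrt{2^k-1}\le2^k$ instead of the sharper $2^{k/2}$.
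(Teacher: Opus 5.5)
Your proposal is correct and follows the paper's proof essentially step for step. Both use the largest-vertex triangular decomposition, then $\lambda_v=\max_{c\neq0}P(h_{v,c}=0)\le 1-2^{1-d}$ via the Boolean anti-concentration lemma, and then the triangular total-variation theorem combined with $-\log(1-2^{1-d})\ge 2^{1-d}$. Your explicit check that $h_b\neq0$ (the monomials $e\setminus\{v\}$ are distinct) fills in a step the paper leaves implicit. One small correction to your closing remark: the factor $2^d$ comes from $2/(-\log\lambda)\le 2^d$, while the estimate $\sqrt{2^k-1}\le 2^k$ only affects the argument of the logarithm (using $2^{k/2}$ would give a slightly stronger bound).
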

\begin{proof}
    The feature map can be written in the triangular form as
    \begin{align}
        \Phi_v(x)=x_v\left(\prod\nolimits_{u\in e\backslash\{v\}}x_u\right)_{e\in E_v}
    \end{align}
    with $B_v\cong\bF_2^{|E_v|}$. The characters are $\chi_c(z)=(-1)^{\sum_{e\in E_v}c_ez_e}$ for $c\in B_v\backslash\{0\}$, so
    \begin{align}
        \chi_c(\Phi_v(X_{\leq v}))=(-1)^{X_vh_{v,c}(X_{<v})}
    \end{align}
    where
    \begin{align}
        h_{v,c}(x_{<v})=\sum_{e\in E_v}c_e\prod_{u\in e\backslash\{v\}}x_u.
    \end{align}
    By \Cref{lemma: polynomial anti-concentration}, 
    \begin{align}
        \lambda_v=\max_{c\neq0}\bE_{X_{<v}}|\bE_{X_v}\chi_c(\Phi_v(X_{\leq v}))|^2=\max_{c\neq 0}P(h_{v,c}(X_{<v})=0)\leq1-2^{1-d}.
    \end{align}
    By \Cref{thm: total variation upper bound for triangular feature map},
    \begin{align}
        d_{\TV}(\law(Z_t),\Unif)\leq\frac{1}{2}\sum_{v\in V}\sqrt{(2^{|E_v|}-1)(1-2^{1-d})^t},
    \end{align}
    therefore
    \begin{align}
        t_{\min}(\epsilon)\leq\frac{2\log(\sum_{v\in V}2^{|E_v|})-2\log\epsilon}{-\log(1-2^{1-d})}\leq 2^d\left(\log\left(\sum\nolimits_{v\in V}2^{|E_v|}\right)-\log\epsilon\right).
    \end{align}
\end{proof}

\subsubsection{Graph state ensemble}
We now specialize the result from the previous section to ordinary graph states.
Given a graph $(V,E)$, we identify the set of subgraphs with $A\coloneqq \bF_2^{|E|}$, and the graph state corresponding to a subgraph $a\in A$ can be written as
\begin{align}
    \ket{\Gamma_a}
  =\prod_{e\in E}\text{CZ}_e^{a_e}|+\rangle^{\otimes |V|}=2^{-\frac{|V|}{2}}\sum_{x\in\bF_2^V}
  (-1)^{\sum_{e=\{u,v\}\in E}a_ex_ux_v}\ket{x}.
\end{align}

\begin{cor} \label{cor: graph states}
    Fix an ordering of $V$, let $b_v\coloneqq|\{u\in N(v):u<v\}|$, then
    \begin{align}
        t_{\min}(\epsilon)\leq O\big(\log(\sum_{v\in V}2^{b_v})-\log\epsilon\big).
    \end{align}
\end{cor}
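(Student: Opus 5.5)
This corollary is the specialization of \Cref{proposition: hypergraph} to hypergraphs in which every hyperedge has size exactly $2$, so $d=2$. The plan is to identify $E_v$ in this setting, substitute, and absorb the constant $2^d=4$ into the $O(\cdot)$.

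First, fix the given ordering of $V$. For an ordinary edge $e=\{u,v\}$, its largest vertex is $\max(u,v)$. Therefore $E_v$, the set of edges whose largest vertex is $v$, is exactly $\{\{u,v\}: u\in N(v),\, u<v\}$, and $|E_v|=b_v$. Graph states $\ket{\Gamma_a}$ are the hypergraph states $\ket{\psi_a}$ of \Cref{subsection: hypergraph states} with all hyperedges of size $2$. Hence \Cref{proposition: hypergraph} applies with $d=2$ and gives
\begin{align}
    t_{\min}(\epsilon)\leq 4\left(\log\Big(\sum\nolimits_{v\in V}2^{b_v}\Big)-\log\epsilon\right),
\end{align}
which is the claimed $O(\cdot)$ bound.

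As a sanity check, one can also run \Cref{thm: total variation upper bound for triangular feature map} directly. For $c\in\bF_2^{E_v}\setminus\{0\}$, the function $h_{v,c}(x_{<v})=\sum_{u}c_{\{u,v\}}x_u$ is a nonzero linear form. It therefore vanishes with probability exactly $\tfrac12$, so $\lambda_v=\tfrac12$ and
\begin{align}
    t_{\min}(\epsilon)\leq \frac{2\log(\sum_v 2^{b_v})-2\log\epsilon}{\log 2}.
\end{align}
This also has the stated form and gives a better constant.

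There is no real obstacle. The only points to check are that the bookkeeping $|E_v|=b_v$ matches the ordering convention, and that the $d=2$ bound $1-2^{1-d}=\tfrac12$ is used correctly.
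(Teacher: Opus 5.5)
Your proposal is correct and follows the paper's own argument: the paper simply notes that graphs are $2$-uniform hypergraphs and invokes \Cref{proposition: hypergraph} with $d=2$. Your explicit identification $|E_v|=b_v$ and the direct check that $\lambda_v=\tfrac12$ for nonzero linear forms over $\bF_2$ are accurate additions that the paper leaves implicit.
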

\begin{proof}
    Graphs are $2$-uniform hypergraphs, i.e., all edges consist of two vertices, so this is an immediate corollary of \Cref{proposition: hypergraph}.
\end{proof}

The following result for the sample complexity of learning hidden graph states is an immediate corollary of \Cref{cor: graph states} and improves \cite[Theorem 15]{MontanaroShao2022HiddenGraph}. Note that it concerns only sample complexity and does not reproduce their computational runtime guarantees.

\begin{cor}
    Let $G'$ be a graph of bounded degeneracy $D$, and $G$ be a subgraph of $G'$. Given access to copies of $\ket{G}$, there is a quantum algorithm that identifies $G$ using $O_\epsilon(D+\log n)$ copies.
\end{cor}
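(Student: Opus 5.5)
The plan is to apply \Cref{cor: graph states} to the ensemble of all subgraph states of the ambient graph $G'=(V,E)$, with $n=|V|$, choosing the vertex ordering to match the degeneracy. The unknown graph $G$ is a subgraph of $G'$, so it corresponds to some $a\in A=\bF_2^{E}$, and $\ket{G}=\ket{\Gamma_a}$. The task of identifying $G$ is therefore exactly minimum-error discrimination of the uniform GU ensemble $\{\ket{\Gamma_a}\}_{a\in\bF_2^E}$. This is the feature-map phase-state ensemble of \Cref{subsection: hypergraph states} with $d=2$.

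Distinct $a$ give distinct quadratic phase functions $x\mapsto\sum_e a_e x_ux_v$ on $\bF_2^V$. Hence the states are pairwise distinct, the stabilizer is trivial on $A$, and every state is its own label. For GU ensembles the average and worst-case success probabilities coincide. It therefore suffices to bound $k_{\min}(\epsilon)$.

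\textbf{Step 1 (ordering).} Since $G'$ has degeneracy $D$, repeatedly deleting a vertex of degree at most $D$ gives an elimination order. Reversing this order gives an ordering of $V$ in which every vertex has at most $D$ neighbours in $G'$ that precede it. With this ordering, $b_v=|\{u\in N_{G'}(v):u<v\}|\le D$ for every $v$. Consequently
\begin{align}
\sum_{v\in V}2^{b_v}\le n\,2^{D}.
\end{align}

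\textbf{Step 2 (mixing time).} Plugging this into \Cref{cor: graph states}, which is \Cref{proposition: hypergraph} with $d=2$ so that the constant is $2^d=4$, gives
\begin{align}
t_{\min}(\epsilon)\le 4\left(D\log 2+\log n-\log\epsilon\right)=O_\epsilon(D+\log n).
\end{align}

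\textbf{Step 3 (sample complexity).} By \eqref{eq: sandwiched bound for GU sample complexity}, $k_{\min}(\epsilon)\le t_{\min}(2-2\sqrt{1-\epsilon})$. Since $\sqrt{1-\epsilon}\le 1-\epsilon/2$, we have $2-2\sqrt{1-\epsilon}\ge\epsilon$. Because $t_{\min}$ is nonincreasing in its threshold, $t_{\min}(2-2\sqrt{1-\epsilon})\le t_{\min}(\epsilon)$. Therefore $k_{\min}(\epsilon)=O_\epsilon(D+\log n)$.

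The algorithm is the pretty good measurement on $k_{\min}(\epsilon)$ copies, which is optimal for GU ensembles. By GU symmetry its success probability is the same for every $a$, so it identifies $G$ with probability at least $1-\epsilon$ whatever $G$ is.

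The only step needing care is Step 1, namely correctly orienting the degeneracy ordering so that each vertex has few \emph{earlier} neighbours. Everything else is a direct substitution into earlier results.
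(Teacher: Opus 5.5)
Your proposal is correct and follows the paper's proof: take a degeneracy ordering so that every vertex has at most $D$ earlier neighbours in $G'$, then apply \Cref{cor: graph states} to get $t_{\min}(\epsilon)=O_\epsilon(D+\log n)$. The details you add are the steps the paper leaves implicit: reversing the elimination order, the bound $\sum_v 2^{b_v}\le n2^D$, passing from $t_{\min}$ to $k_{\min}$ via \eqref{eq: sandwiched bound for GU sample complexity}, and using the covariance of the PGM to obtain a guarantee for every $G$.
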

\begin{proof}
    Choose a degeneracy ordering so that every vertex has at most $D$ earlier neighbors, by \Cref{cor: graph states} we obtain $t_{\min}\leq O_\epsilon(D+\log n)$.
\end{proof}

\section{Generic non-GU ensembles} \label{section: generic non-GU ensembles}
In this section, we discuss how the sample complexity is related to a mixing time in the non-GU case. We first consider arbitrary uniform mixed state ensembles $\{\rho_i=S_iS_i^\dagger\}_{i=1}^{n}$ on $\bC^d$ and assume $F(\rho_i,\rho_j)<1$ for $i\neq j$. Define
\begin{align}
    X_t\coloneqq \begin{bmatrix}(S_i^\dagger S_j)^{\otimes t}\end{bmatrix}_{i,j=1}^{n}
\end{align}
and
\begin{align}
    t_{\min}(\epsilon)\coloneqq \min \left\{t:\frac{1}{n}\|X_t-\Delta_t(X_t) \|_1\leq\epsilon\right\}
\end{align}
where $\Delta_t(X_t)$ is the block pinching of $X_t$, i.e., 
\begin{align} 
    \Delta_t(X_t)=[\delta_{ij}(S_i^\dagger S_j)^{\otimes t}]_{i,j=1}^{n}.
    \label{eq:block-pinching}
\end{align}
Note that $t_{\min}$ is independent of the choice of $S_i$ and thus well-defined given the ensemble $\{\rho_i\}$.

We now define a quantum channel 
\begin{align}
    \cN_t\colon M_n\otimes M_d^{\otimes(t-1)}&\to M_n\otimes M_d^{\otimes t}\\
    [A_{ij}]&\mapsto[A_{ij}\otimes S_i^\dagger S_j]
\end{align}
with Kraus operators given by $L_{t,a}\coloneqq\sum_i E_{ii}\otimes\one_d^{\otimes(t-1)}\otimes S_i^\dagger\ket{a}$. With $\Phi_t\coloneqq\cN_t\circ\ldots\circ\cN_1,$ we have
\begin{align}
    \frac{1}{n}X_t=\Phi_t(\tau_n),
\end{align}
where $\tau_n\coloneqq \frac{1}{n}\mathbb J$ is the normalization of the all-ones matrix $\mathbb J$. This is a quantum inhomogeneous Markov chain, and therefore 
\begin{align}
    t_{\min}(\epsilon)=\min\{t:\|\Phi_t(\tau_n)-\Phi_t(\omega_n)\|_1\leq\epsilon\},
\end{align}
where $\omega_n$ is the completely mixed state on $\bC^n$.
In the terminology of \cite{George2026QuantumDoeblinCoefficients}, $t_{\min}$ is the \emph{quantum weakly mixing time} with fixed initial state $\tau_n$ and reference state $\omega_n$.

The goal of this section is to establish a connection between the state discrimination sample complexity $k_{\min}$ and the mixing time (with fixed initial state) $t_{\min}$ for arbitrary uniform ensembles. We first prove the following lemma for bounding the success probability of the PGM using Gram matrix information:

\begin{lem} \label{lemma: block Gram matrix lemma}
    Given $\rho_i=S_iS_i^\dagger$, denote the Gram matrix by $G\coloneqq X_1=\begin{bmatrix}S_i^\dagger S_j\end{bmatrix}_{i,j=1}^{n}$, and set $\Sigma\coloneqq \sum_i\rho_i$.
    Let $\Delta\equiv\Delta_1$ be the block pinching map defined in \eqref{eq:block-pinching}. Then the following holds:
    \begin{enumerate}
    \item $1-p_{\mathrm{PGM}}=\frac{1}{n}\|\sqrt{G}-\Delta(\sqrt{G})\|_2^2\leq\frac{1}{n}\|\sqrt{G}-\sqrt{\Delta(G)}\|_2^2$.\vspace{.25em}
    \item $\|G-\Delta(G)\|_1^2\leq 8n^2(1-p_{\mathrm{PGM}})$.
    \end{enumerate}
\end{lem}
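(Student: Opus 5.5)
The plan is to express everything through the block operator $S\coloneqq[S_1\ \cdots\ S_n]$. Then $G=S^\dagger S$, $\Sigma=SS^\dagger$, and with uniform priors the pretty good measurement is $\Pi_i=\Sigma^{-1/2}\rho_i\Sigma^{-1/2}$. The key identity is $S^\dagger\Sigma^{-1/2}S=\sqrt{G}$, which follows from the polar/singular value decomposition of $S$ (with $\Sigma^{-1/2}$ taken on the support). It gives $(\sqrt G)_{ii}=S_i^\dagger\Sigma^{-1/2}S_i$, and hence
\begin{align}
    p_{\mathrm{PGM}}=\frac1n\sum_i\tr\big((S_i^\dagger\Sigma^{-1/2}S_i)^2\big)=\frac1n\|\Delta(\sqrt G)\|_2^2 .
\end{align}

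\textbf{Item 1.} Since $\|\sqrt G\|_2^2=\tr G=\sum_i\tr\rho_i=n$, and $\Delta$ is the Hilbert--Schmidt orthogonal projection onto block-diagonal matrices, Pythagoras gives
\begin{align}
    1-p_{\mathrm{PGM}}=\frac1n\big(\|\sqrt G\|_2^2-\|\Delta(\sqrt G)\|_2^2\big)=\frac1n\|\sqrt G-\Delta(\sqrt G)\|_2^2 .
\end{align}
The inequality then follows because $\Delta(\sqrt G)$ is the closest block-diagonal matrix to $\sqrt G$ in $\|\cdot\|_2$. In particular it is at least as close as $\sqrt{\Delta(G)}$, which is block diagonal.

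\textbf{Item 2.} Write $R=\sqrt G$, $D=\Delta(R)$, $O=R-D$, and $x\coloneqq1-p_{\mathrm{PGM}}$. By Item 1 we have $\|O\|_2^2=nx$ and $\|D\|_2^2=n(1-x)$. Expanding
\begin{align}
    G=R^2=D^2+DO+OD+O^2,
\end{align}
the term $D^2$ is block diagonal. Moreover $DO$ and $OD$ have zero diagonal blocks, since $D$ is block diagonal and $O$ has zero diagonal blocks. Therefore
\begin{align}
    G-\Delta(G)=DO+OD+O^2-\Delta(O^2).
\end{align}
We bound the pieces as follows. Hölder gives $\|DO\|_1,\|OD\|_1\le\|D\|_2\|O\|_2=n\sqrt{x(1-x)}$. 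The pinching $\Delta$ is trace-norm contractive, so $\|O^2-\Delta(O^2)\|_1\le2\|O^2\|_1\le2\|O\|_2^2=2nx$. Together,
\begin{align}
    \|G-\Delta(G)\|_1\le 2n\big(\sqrt{x(1-x)}+x\big)=2n\sqrt x\big(\sqrt{1-x}+\sqrt x\big).
\end{align}
By Cauchy--Schwarz, $\sqrt{1-x}+\sqrt x\le\sqrt2$. Squaring then gives $\|G-\Delta(G)\|_1^2\le 4n^2x\cdot2=8n^2(1-p_{\mathrm{PGM}})$, which is exactly the claimed constant.

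The main point to get right is the polar identity $S^\dagger\Sigma^{-1/2}S=\sqrt G$ when $\Sigma$ is singular (using the pseudo-inverse on $\operatorname{supp}\Sigma$). The constant $8$ relies on two things: keeping the factor $\sqrt{1-x}$ from $\|D\|_2$ rather than bounding $\|D\|_2$ by $\sqrt n$, and bounding $\|D\|_2\|O\|_2$ and $\|O\|_2^2$ jointly via Cauchy--Schwarz.
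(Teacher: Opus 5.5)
Your proof is correct. Item~1 is essentially the paper's argument: the paper writes $1-p_{\mathrm{PGM}}=\frac1n\sum_{i\neq j}\|(\sqrt G)_{ij}\|_2^2$ and then drops the nonnegative diagonal-block terms of $\|\sqrt G-\sqrt{\Delta(G)}\|_2^2$. That is your Pythagoras and best-approximation step written in coordinates. Your SVD justification of $(\sqrt G)_{ii}=S_i^\dagger\Sigma^{-1/2}S_i$ also supplies a step the paper only asserts.

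Item~2 takes a genuinely different route. The paper first proves the reverse comparison $\|\sqrt G-\sqrt{\Delta(G)}\|_2^2\le 2n(1-p_{\mathrm{PGM}})$. It does this using $G_{ii}\ge(\sqrt G)_{ii}^2$, operator monotonicity of the square root (so $\sqrt{G_{ii}}\ge(\sqrt G)_{ii}$), and the trace identity $\tr(a^2-b^2)=\tr[2(a-b)b+(a-b)^2]$. It then writes $G-\Delta(G)=(\sqrt G-\sqrt{\Delta(G)})\sqrt G+\sqrt{\Delta(G)}(\sqrt G-\sqrt{\Delta(G)})$ and applies Hölder with $\|\sqrt G\|_2=\|\sqrt{\Delta(G)}\|_2=\sqrt n$.

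You never use $\sqrt{\Delta(G)}$. You split $\sqrt G=D+O$ and expand the square, which needs only Hölder, the observation that $DO$ and $OD$ vanish on diagonal blocks, and positivity or contractivity of the pinching.

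What each route buys:
\begin{itemize}
\item Yours avoids operator monotonicity and is more elementary. It also gives the sharper intermediate bound $2n(\sqrt{x(1-x)}+x)$, which behaves like $2n\sqrt x$ in the small-error regime relevant to sample complexity. After squaring, that is a constant $4$ rather than $8$.
\item The paper's yields the by-product that $\frac1n\|\sqrt G-\sqrt{\Delta(G)}\|_2^2$ lies between $1-p_{\mathrm{PGM}}$ and $2(1-p_{\mathrm{PGM}})$. So the upper bound in Item~1 is tight up to a factor of two.
\end{itemize}
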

\begin{proof}
    (1) Denote PGM operators by $M_i=\Sigma^{-\frac{1}{2}}\rho_i\Sigma^{-\frac{1}{2}}$, then
    \begin{align}
        \sum_{i,j}\|(\sqrt{G})_{ij}\|_2^2=\|\sqrt{G}\|_2^2=\tr(G)=n;\qquad \tr(M_i\rho_i)=\|S_i^\dagger\Sigma^{-\frac{1}{2}}S_i\|_2^2=\|(\sqrt{G})_{ii}\|_2^2.
    \end{align}
    Therefore
    \begin{align}
        1-p_{\mathrm{PGM}}=1-\frac{1}{n}\sum_i\|(\sqrt{G})_{ii}\|_2^2=\frac{1}{n}\sum_{i\neq j}\|(\sqrt{G})_{ij}\|_2^2=\frac{1}{n}\sum_{i,j}\|Z_{ij}\|_2^2=\frac{1}{n}\|Z\|_2^2
    \end{align}
    where $Z=\sqrt{G}-\Delta(\sqrt{G})$, $Z_{ii}=0$ and $Z_{ij}=(\sqrt{G})_{ij}$ for $i\neq j$. Finally, since $\sqrt{\Delta(G)}$ is block-diagonal,
    \begin{align}
        \|\sqrt{G}-\sqrt{\Delta(G)}\|_2^2 &=\sum_{i\neq j}\|(\sqrt{G})_{ij}\|_2^2+\sum_i\|(\sqrt{G})_{ii}-(\sqrt{\Delta(G)})_{ii}\|_2^2\\
        &\geq\sum_{i\neq j}\|(\sqrt{G})_{ij}\|_2^2\\
        &=\|Z\|_2^2.
    \end{align}
    (2) Since $G_{ii}=(\sqrt{G})_{ii}^2+\sum_{j\neq i}(\sqrt{G})_{ij}(\sqrt{G})_{ij}^\dagger\geq(\sqrt{G})_{ii}^2$,
    \begin{align}
        \tr[G_{ii}-(\sqrt{G})_{ii}^2]=\tr[2(\sqrt{G_{ii}}-(\sqrt{G})_{ii})(\sqrt{G})_{ii}+(\sqrt{G_{ii}}-(\sqrt{G})_{ii})^2]\geq\|\sqrt{G_{ii}}-(\sqrt{G})_{ii}\|_2^2.
    \end{align}
    Moreover, since $(\sqrt{G}-\sqrt{\Delta(G)})_{ij}$ is equal to $(\sqrt{G})_{ij}$ if $ i\neq j$, and equal to $(\sqrt{G})_{ii}-\sqrt{G_{ii}}$ if $i=j$, 
    \begin{align}
        \|\sqrt{G}-\sqrt{\Delta(G)}\|_2^2 &= \sum_{i\neq j}\|(\sqrt{G})_{ij}\|_2^2+\sum_i\|(\sqrt{G})_{ii}-\sqrt{G_{ii}}\|_2^2\\
        &\leq\sum_{i\neq j}\|(\sqrt{G})_{ij}\|_2^2+\sum_i\tr(G_{ii}-(\sqrt{G})_{ii}^2)\\
        &=2\sum_{i\neq j}\|(\sqrt{G})_{ij}\|_2^2\\
        &=2n(1-p_{\mathrm{PGM}}).
    \end{align}
    Finally, we have $G-\Delta(G)=(\sqrt{G}-\sqrt{\Delta(G)})\sqrt{G}+\sqrt{\Delta(G)}(\sqrt{G}-\sqrt{\Delta(G)})$, and hence applying the triangle inequality and Hölder's inequality gives
    \begin{align}
        \|G-\Delta(G)\|_1 &\leq \left\|\sqrt{G}-\sqrt{\Delta(G)} \right\|_2 \left(\|\sqrt{G}\|_2+\|\sqrt{\Delta(G)}\|_2\right)\\
        &=2\sqrt{n}\left\|\sqrt{G}-\sqrt{\Delta(G)}\right\|_2\\
        &\leq2\sqrt{2}n\sqrt{1-p_{\mathrm{PGM}}},
    \end{align}
    concluding the proof.
\end{proof}

Now we establish a sandwiched bound for the sample complexity $k_{\min}$ by the mixing time $t_{\min}$:

\MixedStateSandwich*
\begin{proof}
    For the upper bound, by \Cref{lemma: block Gram matrix lemma}(1) and Powers–Størmer inequality \cite{KittanehKosaki1987InequalitiesSchattenPNormV,zhou2026majorization},
    \begin{align}
        1-p_t^*\leq1-p_{\mathrm{PGM}}\leq\frac{1}{n}\|\sqrt{X_t}-\sqrt{\Delta(X_t)}\|_2^2\leq\frac{1}{n}\|X_t-\Delta(X_t)\|_1,
    \end{align}
    therefore $t_{\min}(\epsilon)\geq k_{\min}(\epsilon)$. For the lower bound, by \Cref{lemma: block Gram matrix lemma}(2) and $p_{\mathrm{PGM}}\geq(p^*)^2$,
    \begin{align}
        2(1-p_t^*)\geq 1-(p_t^*)^2\geq1-p_{\mathrm{PGM}}\geq\frac{1}{8n^2}\|X_t-\Delta(X_t)\|_1^2,
    \end{align}
    and therefore $t_{\min}(\epsilon)\leq k_{\min}(\frac{\epsilon^2}{16})$.
\end{proof}

\begin{rem}
    This sandwiched bound generalizes \cref{eq: sandwiched bound for GU sample complexity} to arbitrary uniform mixed state ensembles. However, without an additional structure (such as GU) there is no guarantee that $k_{\min}=\Theta_\epsilon(t_{\min})$, even in the pure state case.
\end{rem}

\subsection{Tight estimate for worst-case discrimination sample complexity by Dobrushin-type coefficient}

To follow this section, it will be useful to recall the discussion of (single-step and multi-step) quantum Dobrushin coefficients in \Cref{subsubsection: Quantum Dobrushin coefficient}.
Here, we make use of the notion of \emph{generalized Dobrushin ergodicity coefficient}, defined by \cite{mukhamedov2020generalized}
\begin{align} \label{eq: generalized Dobrushin ergodicity coefficient}
    \tilde{\kappa}(T)\coloneqq\sup_{\substack{X=X^\dagger\neq0\\\Delta(X)=0}}\frac{\|T(X)\|_1}{\|X\|_1}.
\end{align}
We define the multi-step coefficient by
\begin{align} \label{eq: stabilized multistep coefficient}
    \tkgen\coloneqq1-\sup_{s\geq1}\frac{1-\eta(\Phi_s)}{s}
\end{align}
where $\eta_s$ is the stabilized Dobrushin ergodicity coefficient defined by
\begin{align}
    \eta(T)\coloneqq\sup_R\tilde{\kappa}(T\otimes\id_R).
\end{align}
This stabilization is necessary in general because we need to characterize the contraction rate of
\begin{align}
    \Phi_{s+t}=(\Phi_s\otimes\id_{\cH^{\otimes t}})\circ\Phi_t.
\end{align}
The sandwiched bound \eqref{eq: sandwiched bound for sample complexity} is not guaranteed to be tight, and the multi-step coefficient does not always give a tight estimate either; see \cref{eq: min-error simple counterexample} for a simple example. What it unconditionally describes turns out to be the worst-state (a.k.a. minimax) discrimination sample complexity \cite{DAriano2005Minimax,Montanaro2019PrettySimpleBounds}:

\WorstCaseDiscrimination*
\begin{proof}
    We claim that
    \begin{align} \label{eq: worst-case sandwiched bound}
        1-\pwc(t)\leq 2\eta(\Phi_t)\leq 4\sqrt{1-\pwc(t)}.
    \end{align}
    Assuming this claim, given any $t$ with $\eta(\Phi_t)<1$, we have $1-\pwc(rt)\leq2\eta_{rt}\leq2\eta(\Phi_t)^r$ for all $r$ by the contraction property. Taking $r=\lceil\frac{\log(\frac{2}{\epsilon})}{1-\eta(\Phi_t)}\rceil$ and an infimum over $t$ shows that
    \begin{align}
        \kwc(\epsilon)\leq\inf_t\{rt\}\leq C_\epsilon\inf_t\frac{t}{1-\eta(\Phi_t)}=O_\epsilon\left(\frac{1}{1-\tkgen}\right).
    \end{align}
    On the other hand, since $\kwc(\epsilon)=\Theta_\epsilon(\kwc(\epsilon^2))$ by \Cref{proposition: worst-case sample complexity equivalence}, we have
    \begin{align}
        1-\tkgen\geq\frac{1-\eta(\Phi_{c_\epsilon\kwc(\epsilon)})}{c_\epsilon\kwc(\epsilon)}\geq\frac{1-2\epsilon}{c_\epsilon\kwc(\epsilon)}.
    \end{align}
    It remains to prove \cref{eq: worst-case sandwiched bound}. 
    Since $L_{\vec a}=\sum_iE_{ii}\otimes S_i^\dagger\ket{a_1}\otimes\ldots\otimes S_i^\dagger\ket{a_t}$ form a set of Kraus operators of $\Phi_t$, 
    the complementary channel of $\Phi_t$ can be written as $\Phi_t^c(Y)=\sum_{\vec{a},\vec{b}}\tr(L_{\vec a}YL_{\vec b}^\dagger)\ket{\vec a}\bra{\vec b}=\sum_i\bra{i}Y\ket{i}(\rho_i^{\otimes t})^T$. Therefore we can apply \cite[eq. (116) and (119)]{beny2011approximate} to $\Phi_t^c$ and obtain
    \begin{align} \label{eq: Beny-Oreshkov}
        \sqrt{\pwc(t)}=\max_{\mc R}F(\mc R\circ\Phi_t^c,\Delta)=\max_{\mc R'}F(\Phi_t,\mc R'\circ\Delta)
    \end{align}
    where $F$ denotes the worst-case entanglement fidelity and the second equality is \cite[Theorem 1]{beny2010general}. By the contraction property of $\eta$,
    \begin{align}
        \|\Phi_t-\Phi_t\circ\Delta\|_\diamond\leq\eta(\Phi_t)\sup_{\|X\|_1\leq1}\|X-(\Delta\otimes\id_R)(X)\|_1\leq 2\eta(\Phi_t),
    \end{align}
    therefore, by the Fuchs-van de Graaf inequality \cite[eq. (58)]{BelavkinDArianoRaginsky2005}, 
    \begin{align}
        1-\pwc(t)\leq2(1-\sqrt{\pwc(t)})\leq2(1-F(\Phi_t,\Phi_t\circ\Delta))\leq\|\Phi_t-\Phi_t\circ\Delta\|_\diamond\leq2\eta(\Phi_t).
    \end{align}
    On the other hand, for every $\mc R'$ maximizing \cref{eq: Beny-Oreshkov}, every reference system $R$ and every admissible $X$ such that $(\Delta\otimes\id_R)(X)=0$, we have $(\Phi_t\otimes\id_R)(X)=((\Phi_t-\mc R'\circ\Delta)\otimes\id_R)(X)$. Taking a supremum and applying the Fuchs–van de Graaf inequality again we obtain
    \begin{align}
        \eta(\Phi_t)\leq\|\Phi_t-\mc R'\circ\Delta\|_\diamond\leq2\sqrt{1-F(\Phi_t,\mc R'\circ\Delta)^2}=2\sqrt{1-\pwc(t)}.
    \end{align}
    The chain of inequalities is simply \cref{eq: sandwiched bound for sample complexity} and 
    \begin{align}
        t_{\min}(\epsilon)\leq k_{\min}(\frac{\epsilon^2}{16})\leq \kwc(\frac{\epsilon^2}{16})=\Theta_\epsilon (\kwc(\epsilon)),
    \end{align}
    finishing the proof.
\end{proof}

Eq. \eqref{eq: sandwiched bound for sample complexity} is tight when $k_{\min}(\epsilon)=\Theta_\epsilon(k_{\min}(\frac{\epsilon^2}{16}))$. We leave the structural classification of this condition to future research.

\subsection{Pure state ensembles}
For pure state ensemble $\lbrace |\psi_i\rangle\langle\psi_i|\rbrace$ we can choose $S_i=\ket{\psi_i}$.
Then $(X_t)_{ij}=\braket{\psi_i|\psi_j}^t$ and
\begin{align} \label{eq: mixing time for pure states}
    t_{\min}(\epsilon)=\min\{t:\|\Phi_X^t(\tau_n)-\omega_n\|_1\leq\epsilon\}
\end{align}
reduces to a quantum homogeneous mixing time of the Hadamard channel $\Phi_X$. An important simplification of the multi-step coefficient \eqref{eq: stabilized multistep coefficient} in the pure state case is that the stabilization is not needed:

\begin{prop}
    For pure states, the stabilized and unstabilized coefficients bound each other as
    \begin{align}
        \tilde{\kappa}(\Phi_X^t)\leq\eta(\Phi_t)\leq2\tilde{\kappa}(\Phi_X^t).
    \end{align}
\end{prop}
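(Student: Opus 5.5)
For pure states, $S_i=\ket{\psi_i}$ and each $S_i^\dagger S_j=\braket{\psi_i|\psi_j}$ is a scalar. So every $\cN_t$ acts as the Hadamard channel $\Phi_X$ on $M_n$, and $\Phi_t=\Phi_X^t$. Write $Y\coloneqq X^{\circ t}$, so that $\Phi_t=\Phi_Y$ is a Schur multiplier. The block pinching $\Delta$ is then the ordinary diagonal pinching on $M_n$.

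The lower bound $\tilde\kappa(\Phi_X^t)\le\eta(\Phi_t)$ follows by taking the trivial reference system $R$ in the definition of $\eta$. All the work is in the upper bound $\sup_R\tilde\kappa(\Phi_Y\otimes\id_R)\le 2\tilde\kappa(\Phi_Y)$.

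\textbf{Reduction to a bimodule map.} Let $\mathcal O_n\subset M_n$ be the subspace of matrices with zero diagonal, and let $Y'$ be $Y$ with its diagonal set to zero. On $\mathcal O_n$ the channel acts as the Schur multiplier $\Psi(Z)=Y'\circ Z$. With the reference system present, the admissible inputs are exactly the elements of $\mathcal O_n\otimes M_R$, and on them $\Phi_Y\otimes\id_R$ acts as $\Psi\otimes\id_R$. Hence
\begin{align}
    \eta(\Phi_t)\le\|\Psi\|_{cb,\,S_1(\mathcal O_n)\to S_1},
\end{align}
where the trace norm is the one inherited from $M_n$ and we no longer require Hermiticity.

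\textbf{Step 1: removing Hermiticity.} We compare the norm of $\Psi$ on all of $\mathcal O_n$ with its norm on Hermitian elements, which is $\tilde\kappa(\Phi_Y)$. Any $Z\in\mathcal O_n$ splits as $Z=H+iK$ with $H=\tfrac12(Z+Z^\dagger)$ and $K=\tfrac1{2i}(Z-Z^\dagger)$. Both $H$ and $K$ are Hermitian, lie in $\mathcal O_n$, and satisfy $\|H\|_1,\|K\|_1\le\|Z\|_1$. Since $Y$ is Hermitian, $\Psi(Z^\dagger)=\Psi(Z)^\dagger$. Therefore $\|\Psi(Z)\|_1\le\|\Psi(H)\|_1+\|\Psi(K)\|_1\le 2\tilde\kappa(\Phi_Y)\|Z\|_1$, i.e. $\|\Psi\|\le 2\tilde\kappa(\Phi_Y)$. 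The same splitting applied to $(\Psi\otimes\id_R)$ on $\mathcal O_n\otimes M_R$ shows that the Hermitian stabilized coefficient is at most the non-Hermitian cb norm. So it suffices to prove $\|\Psi\|_{cb}=\|\Psi\|$.

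\textbf{Step 2: norm equals cb norm.} The map $\Psi$ is a bimodule map over the diagonal algebra $D_n$: $\Psi(AZB)=A\Psi(Z)B$ for diagonal $A,B$, and $\mathcal O_n$ is itself a $D_n$-bimodule. I would adapt Smith's theorem that $D_n$-bimodule maps have cb norm equal to norm, working in the dual picture where $\Psi^*$ is the same Schur multiplier acting on $(\mathcal O_n,\|\cdot\|_\infty)$.

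Given $W\in M_R(\mathcal O_n)$ with $\|W\|\le1$ and unit vectors $\xi,\zeta\in\bC^n\otimes\bC^R$, the standard step writes
\begin{align}
    \xi=\textstyle\sum_k (D_k\otimes\one)\,\xi_k
\end{align}
with diagonal $D_k$. This collapses the pairing $\langle\zeta,(\Psi\otimes\id)(W)\xi\rangle$ to a single-level pairing $\langle\zeta',\Psi(W')\xi'\rangle$ with $W'\in\mathcal O_n$ and $\|W'\|\le1$. The bimodule property is what lets the $D_k$ pass through $\Psi$, and it also keeps the compressed element off-diagonal. If this compression argument proves awkward on the subspace, my fallback is to invoke Haagerup's factorization $Y'_{ij}=\langle a_i,b_j\rangle$ on $i\ne j$ to realize $\Psi$ with explicit cb control. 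I expect this step to be the main obstacle: the zero-diagonal constraint is not a subalgebra condition, so one must check that restricting to $\mathcal O_n$ does not break Smith's compression argument.

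Combining Steps 1 and 2 gives $\eta(\Phi_t)\le\|\Psi\|_{cb}=\|\Psi\|\le 2\tilde\kappa(\Phi_X^t)$, which is the claimed upper bound.
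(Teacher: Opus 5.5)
Your lower bound, the reduction to the zero-diagonal Schur multiplier $\Psi$, and Step 1 are correct. But Step 1 already spends the entire factor $2$, so everything rests on Step 2, the claim $\|\Psi\|_{cb}=\|\Psi\|$ for $\Psi$ acting on $(\mathcal O_n,\|\cdot\|_1)$. That claim is not established, and the dual picture you propose is the wrong one.

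Under the trace pairing, the annihilator of $\mathcal O_n$ is the diagonal algebra $D_n$. Hence $(\mathcal O_n,\|\cdot\|_1)^*\cong M_n/D_n$, with norm $\min_{D\in D_n}\|V+D\|_\infty$. On zero-diagonal $V$ this differs from $\|V\|_\infty$: for $V=\mathbb J-\one$ the two values are $n/2$ and $n-1$. So $\|\Psi\|$ is the norm of the induced multiplier on the quotient $M_n/D_n$, not on the subspace $(\mathcal O_n,\|\cdot\|_\infty)$.

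Smith's theorem concerns bimodule maps from an operator space into $B(\cH)$. There, amplified norms are computed against vectors and the cyclic-vector compression works. The quotient norm has no such description, so the compression you sketch has nothing to act on. What you can actually get is $\|\Psi\|_{cb}\le\min_{D}\|S_{Y'+D}\|$: restrict a full Schur multiplier, whose norm equals its cb norm. By Smith's theorem used correctly, this minimum is the norm of the multiplier on $(\mathcal O_n,\|\cdot\|_\infty)$. Your Haagerup-factorization fallback gives the same kind of completion bound. To finish you would still need $\min_D\|S_{Y'+D}\|\le\|\Psi\|$, i.e.\ that the trace-norm restricted norm equals the best diagonal completion. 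Nothing in the proposal proves this.

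The compression that does work for non-Hermitian $Z\in\mathcal O_n\otimes M_R$ loses a factor. Write $Z=\sum_k s_k\ket{x_k}\bra{y_k}$ with $\ket{x}=\sum_i a_i\ket{i}\ket{u_i}$ and $\ket{y}=\sum_j b_j\ket{j}\ket{v_j}$. The isometries $\ket{i}\mapsto\ket{i}\ket{u_i}$ and $\ket{j}\mapsto\ket{j}\ket{v_j}$ reduce each term to $\Psi(ab^\dagger-\Delta(ab^\dagger))$. However, $\|ab^\dagger-\Delta(ab^\dagger)\|_1$ is only bounded by $2\|a\|\|b\|$, and can approach it. So you get $\|\Psi\|_{cb}\le 2\|\Psi\|$ and, combined with Step 1, only $\eta(\Phi_t)\le 4\tilde\kappa(\Phi_X^t)$.

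To get the constant $2$, drop Step 1 and apply this reduction to Hermitian inputs; this is the paper's proof. Take Hermitian $B$ with $(\Delta\otimes\id_R)(B)=0$ and write $B=\sum_k\lambda_k\ket{\psi_k}\bra{\psi_k}$. Then $(\Phi_Y\otimes\id_R)(B)=\sum_k\lambda_k((\Phi_Y-\Delta)\otimes\id_R)(\ket{\psi_k}\bra{\psi_k})$. For $\ket{\psi}=\sum_i a_i\ket{i}\ket{u_i}$, $\ket{a}=\sum_i a_i\ket{i}$ and $V\ket{i}=\ket{i}\ket{u_i}$, one has $((\Phi_Y-\Delta)\otimes\id_R)(\ket{\psi}\bra{\psi})=V\Phi_Y(\ket{a}\bra{a}-\Delta(\ket{a}\bra{a}))V^\dagger$. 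Here $\ket{a}\bra{a}-\Delta(\ket{a}\bra{a})$ is Hermitian, zero-diagonal, and of trace norm at most $2$. Each piece therefore contributes at most $2\tilde\kappa(\Phi_Y)|\lambda_k|$, giving $\eta(\Phi_t)\le 2\tilde\kappa(\Phi_X^t)$. No cb-norm theory is needed, because Hermiticity is never given up and the factor $2$ is paid only once.
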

\begin{proof}
    For any unit vector $\ket{\psi}=\sum_i a_i\ket{i}\otimes\ket{u_i}$, denote $\ket{a}=\sum_i a_i\ket{i}$ and consider the isometry $V\colon\ket{i}\mapsto\ket{i}\otimes\ket{u_i}$.
    Then,
    \begin{align}
        \|((\Phi_X^t-\Delta)\otimes\id)(\ket{\psi}\bra{\psi})\|_1&=\|\sum_{i\neq j}a_ia_jX_{ij}^tE_{ij}\otimes\ket{u_i}\bra{u_j}\|_1\\
        &=\|V\Phi_X^t(\ket{a}\bra{a}-\Delta(\ket{a}\bra{a}))V^\dagger\|_1\\
        &\leq2\tilde{\kappa}(\Phi_X^t).
    \end{align} 
    For any Hermitian $X=\sum_k\lambda_k\ket{\psi_k}\bra{\psi_k}$ such that $(\Delta\otimes\id)(X)=0$, 
    \begin{align}
        \|(\Phi_X^t\otimes\id)(X)\|_1=\|((\Phi_X^t-\Delta)\otimes\id)(X)\|_1\leq2\tilde{\kappa}(\Phi_X^t)\sum_k|\lambda_k|=2\tilde{\kappa}(\Phi_X^t)\|X\|_1,
    \end{align}
    therefore $\eta(\Phi_t)\leq2\tilde{\kappa}(\Phi_X^t)$. 
    The lower bound is true for all mixed states, in particular for pure states.
\end{proof}

In the GU pure state case, zero-trace is equivalent to zero-diagonal for any matrix in the Hecke algebra because it always has identical diagonal elements.
Therefore, when restricted to the Hecke algebra, the generalized Dobrushin ergodicity coefficient defined in \cref{eq: generalized Dobrushin ergodicity coefficient} coincides with the standard quantum Dobrushin coefficient defined in \cref{eq: defn of quantum Dobrushin coefficient}, which matches the coefficient in \Cref{theorem: GU Dobrushin estimate}.

\subsubsection{A strengthened data processing inequality upper bound}
As a parallel approach, we consider a strengthened data processing inequality (SDPI) for relative entropy with respect to a conditional expectation \cite{gao2022complete,hirche2022contraction}, which gives an immediate upper bound for estimating $t_{\min}$.

\begin{lem} \label{lemma: SDPI}
    Let $\cE$ be a trace preserving conditional expectation onto a von Neumann subalgebra $\cN$. For any quantum channel $\Phi$ such that $\cE\circ\Phi=\Phi\circ\cE=\cE$, let $D_\cE(\rho)\coloneqq D(\rho\|\cE(\rho))$ and define
    \begin{align}
        \eta(\Phi,\cE)\coloneqq\sup_{\rho:D_\cE(\rho)>0}\frac{D_\cE(\Phi(\rho))}{D_\cE(\rho)}.
    \end{align}
    Then
    \begin{align}
        \|\Phi^t(\rho)-\cE(\rho)\|_1\leq2\sqrt{\eta(\Phi,\cE)^tD_\cE(\rho)}.
    \end{align}
\end{lem}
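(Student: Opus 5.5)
The plan is to iterate the SDPI $t$ times and then convert relative entropy to trace distance with Pinsker's inequality. The key fact is that the hypotheses $\cE\circ\Phi=\Phi\circ\cE=\cE$ make $\cE(\rho)$ a fixed reference along the whole orbit. Indeed, for every $s\ge 0$,
\begin{align}
    \cE(\Phi^s(\rho))=\cE(\rho),
\end{align}
which follows by induction from $\cE\circ\Phi=\cE$. Consequently
\begin{align}
    D_\cE(\Phi^s(\rho))=D(\Phi^s(\rho)\|\cE(\rho)).
\end{align}

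The first step is a one-step contraction. Applying the definition of $\eta(\Phi,\cE)$ to the state $\Phi^{s}(\rho)$ gives
\begin{align}
    D_\cE(\Phi^{s+1}(\rho))\le \eta(\Phi,\cE)\,D_\cE(\Phi^s(\rho)).
\end{align}
If $D_\cE(\Phi^s(\rho))=0$, the supremum does not cover this case. It is handled separately: then $\Phi^s(\rho)=\cE(\rho)$, and since $\cE(\rho)\in\cN$ we get $\Phi(\cE(\rho))=\Phi\circ\cE(\rho)=\cE(\rho)$. Hence $D_\cE(\Phi^{s+1}(\rho))=0$ as well, and the inequality holds trivially. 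Iterating then gives
\begin{align}
    D(\Phi^t(\rho)\|\cE(\rho))\le \eta(\Phi,\cE)^t D_\cE(\rho).
\end{align}

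The second step is quantum Pinsker's inequality in the form $\|\sigma-\omega\|_1\le\sqrt{2D(\sigma\|\omega)}$, applied with $\sigma=\Phi^t(\rho)$ and $\omega=\cE(\rho)$. This yields
\begin{align}
    \|\Phi^t(\rho)-\cE(\rho)\|_1\le \sqrt{2\eta^tD_\cE(\rho)}\le 2\sqrt{\eta^tD_\cE(\rho)}.
\end{align}
The stated constant $2$ is therefore weaker than what Pinsker gives, so no sharper estimate is needed.

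I expect the only delicate point to be the bookkeeping in the first step: making sure the reference state $\cE(\Phi^s(\rho))$ really equals $\cE(\rho)$ at every stage, and treating the degenerate case $D_\cE=0$ as above. Pinsker should be invoked with natural logarithms, matching the convention in $D$. The statement also implicitly assumes $\eta(\Phi,\cE)$ is finite; if it is infinite, the bound holds vacuously.
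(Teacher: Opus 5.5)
Your proposal is correct and is essentially the paper's proof. Like the paper, you iterate the one-step contraction defining $\eta(\Phi,\cE)$, use $\cE\circ\Phi=\cE$ to identify $D_\cE(\Phi^t(\rho))$ with $D(\Phi^t(\rho)\|\cE(\rho))$, and finish with quantum Pinsker. Your explicit treatment of the degenerate case $D_\cE(\Phi^s(\rho))=0$ and the sharper constant $\sqrt{2}$ are minor refinements that the paper leaves implicit. The caveat about infinite $\eta$ is moot: $\Phi\circ\cE=\cE$ together with data processing already gives $\eta(\Phi,\cE)\le 1$.
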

\begin{proof}
    By definition of $\eta(\Phi,\cE)$,
    \begin{align}
        D_\cE(\Phi^t(\rho))\leq\eta(\Phi,\cE)D_\cE(\Phi^{t-1}(\rho))\leq\ldots\leq\eta(\Phi,\cE)^tD_\cE(\rho).
    \end{align}
    By quantum Pinsker's inequality \cite{HiaiOhyaTsukada1981} and $\cE\circ\Phi=\cE$,
    \begin{align}
        \|\Phi^t(\rho)-\cE(\rho)\|_1\leq2\sqrt{D_\cE(\Phi^t(\rho))}\leq2\sqrt{\eta(\Phi,\cE)^tD_\cE(\rho)}.
    \end{align}
\end{proof}

In particular,
\begin{align}
    \|\Phi_X^t(\tau_n)-\omega_n\|_1 &\leq2\sqrt{\eta^tD_\Delta(\tau_n)}=2\sqrt{\eta^t\log n}
    \intertext{where}
    \eta &=\eta(\Phi_X,\Delta)=\sup_{\rho:D(\rho\|\Delta(\rho))>0}\frac{D(\Phi_X(\rho)\|\Delta\circ\Phi_X(\rho))}{D(\rho\|\Delta(\rho))}.
\end{align}
Therefore we have proved:
\begin{prop} \label{prop: SDPI}
    When $n>1$ and $\eta\in(0,1)$,
    \begin{align} \label{eq: SDPI}
    t_{\min}(\epsilon)\leq O\left(\frac{\log\log n-2\log\epsilon}{-\log\eta}\right).
\end{align}
\end{prop}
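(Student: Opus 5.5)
The inequality just before the proposition already does most of the work:
\begin{align}
\|\Phi_X^t(\tau_n)-\omega_n\|_1\leq 2\sqrt{\eta^t\log n}.
\end{align}
This display comes from \Cref{lemma: SDPI} with $\Phi=\Phi_X$ and $\cE=\Delta$, the diagonal pinching. So the proof has two parts. First, check that the hypotheses of \Cref{lemma: SDPI} hold and that $D_\Delta(\tau_n)=\log n$. Second, invert the resulting bound to read off an upper bound on $t_{\min}$.

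\textbf{Hypotheses of \Cref{lemma: SDPI}.} The map $\Delta$ is the trace-preserving conditional expectation onto the diagonal subalgebra. Since $X$ has unit diagonal (the states are normalized), $\Phi_X\circ\Delta=\Delta$ and $\Delta\circ\Phi_X=\Delta$.

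\textbf{The initial relative entropy and the target state.} The state $\tau_n=\frac1n\mathbb J$ is pure, so
\begin{align}
D(\tau_n\|\Delta(\tau_n))=-S(\tau_n)-\tr\tau_n\log\omega_n=\log n.
\end{align}
Moreover $\Delta(\tau_n)=\omega_n$. Because $\Delta\circ\Phi_X=\Delta$, we also get $\cE(\Phi_X^t(\tau_n))=\omega_n$. This is exactly the reference state appearing in the definition \eqref{eq: mixing time for pure states} of $t_{\min}$.

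\textbf{Inverting the bound.} It suffices to find $t$ with $2\sqrt{\eta^t\log n}\leq\epsilon$, which is equivalent to
\begin{align}
t\log\eta\leq 2\log\epsilon-\log 4-\log\log n.
\end{align}
Since $\eta\in(0,1)$, we have $-\log\eta>0$, so any $t$ satisfying
\begin{align}
t\geq \frac{\log\log n-2\log\epsilon+\log 4}{-\log\eta}
\end{align}
works. The hypothesis $n>1$ ensures $\log\log n$ is defined.

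\textbf{Absorbing constants.} The additive $\log 4$ and the rounding up to an integer must fit inside the $O(\cdot)$. For fixed $\epsilon<1$, we have $-2\log\epsilon>0$, so the constant $\log 4$ is dominated by $-2\log\epsilon$ up to a factor depending only on $\epsilon$. Alternatively, one can simply state the bound with the explicit additive constant.

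There is no real obstacle here. The only care needed is the bookkeeping that $\cE(\Phi_X^t(\tau_n))$ equals $\omega_n$, so that the SDPI bound controls the same distance as $t_{\min}$.
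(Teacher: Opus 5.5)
Your proposal is correct and follows the paper's argument. You apply \Cref{lemma: SDPI} with $\cE=\Delta$, use $\Delta(\tau_n)=\omega_n$ and $D_\Delta(\tau_n)=\log n$ to get $\|\Phi_X^t(\tau_n)-\omega_n\|_1\leq 2\sqrt{\eta^t\log n}$, and solve for $t$.

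The paper leaves implicit your check that $\Phi_X\circ\Delta=\Delta\circ\Phi_X=\Delta$, which follows from the unit diagonal of $X$. Your caution about constants is also warranted. For $\epsilon\in(0,1)$ the argument gives $t_{\min}(\epsilon)\leq\bigl\lceil\frac{\log\log n-2\log\epsilon+\log 4}{-\log\eta}\bigr\rceil$. The rounding term can be absorbed into the $O(\cdot)$ only when this ratio is bounded away from $0$, which fails for example as $\eta\to0$; this looseness is already present in the paper's statement.
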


In particular, when $\eta$ is bounded away from 1 this gives an $O_\epsilon(\log\log n)$ upper bound, which is a nontrivial improvement of the standard $O_\epsilon(\log n)$ bound in \Cref{theorem: Montanaro upper bound of k_min}. We leave further exploration in the SDPI method to future research.

\section{Conclusion}
In this work we develop a mixing-time method for estimating the sample complexity of quantum state discrimination. For arbitrary mixed-state ensembles with uniform priors, we establish two-sided bounds on the minimum-error discrimination sample complexity in terms of a quantum weakly mixing time. We also obtain a tight characterization of the worst-case discrimination sample complexity in terms of a generalized Dobrushin coefficient.

For geometrically uniform pure-state ensembles, the relevant mixing time is the quantum homogeneous mixing time of the Hadamard channel induced by the Gram matrix, which tightly characterizes the minimum-error discrimination sample complexity. When the generating group and the stabilizer subgroup of the generator state form a Gelfand pair, this quantum mixing problem further reduces to a classical one. In this setting, the generalized Dobrushin coefficient can be expressed entirely in terms of representation-theoretic quantities of the commutative Hecke algebra.

We apply this framework to recover sample-complexity estimates for quantum coupon collector states and to analyze a class of phase-state ensembles for which the mixing problem reduces to that of an abelian random walk. Assuming a triangular structure of the feature map defining the phase states, we establish a general upper bound on the mixing time, with applications to learning degree-$d$ phase states, generalized Boolean phase states, and hypergraph states.

Future directions include characterizing the conditions under which mixing-time bounds tightly estimate the sample complexity for arbitrary ensembles, investigating cutoff phenomena in quantum state discrimination, and developing further bounds based on strengthened data processing inequalities.

\medskip
\paragraph*{Acknowledgments} The authors acknowledge useful discussions with Srinivasan Arunachalam, Hari Krovi, Kristan Temme, Mark Wilde and Pawel Wocjan. 
JZ acknowledges the hospitality of Srinivasan Arunachalam and IBM Research Almaden during an externship in March 2025, which inspired some of the ideas for this work.
This research was supported by a grant through the IBM-Illinois Discovery Accelerator Institute, as well as National Science Foundation Grant No.~2426103.

\paragraph*{Disclosure of AI usage} The main idea and overall structure of this article were established in late
2025 by the authors. The literature search, verification of calculations, and refinement of the writing were assisted by GPT-5 and subsequent models. The authors have reviewed and verified the entire article and assume full
responsibility for its accuracy and correctness.

\printbibliography[heading=bibintoc]
\end{document}